\newcommand{\floor}[1]{\left\lfloor #1 \right\rfloor}
\def\eps{\varepsilon}
\def\N{\bbN}
\def\be{\begin{equation}}
\def\ee{\end{equation}}
\def\ba{\begin{align}}
\def\bm{\begin{multline}}
\def\bfig{\begin{figure}[htb]}
\def\efig{\end{figure}}
\numberwithin{equation}{section}
\newtheorem{theorem}{Theorem}[section]
\newtheorem{proposition}[theorem]{Proposition}
\newtheorem{lemma}[theorem]{Lemma}
\newtheorem{corollary}[theorem]{Corollary}
\newtheorem{definition}{Definition}
\DeclareMathSymbol{\leqslant}{\mathalpha}{AMSa}{"36}
\DeclareMathSymbol{\geqslant}{\mathalpha}{AMSa}{"3E}
\DeclareMathSymbol{\doteqdot}{\mathalpha}{AMSa}{"2B}
\DeclareMathSymbol{\circlearrowright}{\mathalpha}{AMSa}{"08}
\DeclareMathSymbol{\subsetneq}{\mathalpha}{AMSb}{"28}
\DeclareMathSymbol{\supsetneq}{\mathalpha}{AMSb}{"29}
\renewcommand{\leq}{\;\leqslant\;}
\renewcommand{\geq}{\;\geqslant\;}
\newcommand{\dd}{{\rm d}}
\newcommand{\e}[1]{\,{\rm e}^{#1}\,}
\newcommand{\ii}{{\rm i}}
\def\Im{{\operatorname{Im\,}}}
\newcommand{\upchi}{\raise 2pt \hbox{$\chi$}}
\def\writefig#1 #2 #3 {\rlap{\kern #1 truecm \raise #2 truecm
\hbox{#3}}}
\newcommand{\caO}{{\mathcal O}}
\newcommand{\caW}{{\mathcal W}}
\newcommand{\bbC}{{\mathbb C}}
\newcommand{\bbN}{{\mathbb N}}
\newcommand{\bbR}{{\mathbb R}}
\newcommand{\bsI}{{\boldsymbol I}}
\newcommand{\bsphi}{{\boldsymbol\phi}}
\newcommand{\bspsi}{{\boldsymbol\psi}}
\newcommand{\sx}{\boldsymbol\sigma_{\rm x}}
\newcommand{\sy}{\boldsymbol\sigma_{\rm y}}
\newcommand{\sz}{\boldsymbol\sigma_{\rm z}}
\newcommand{\bsone} {\boldsymbol 1}
\newcommand{\moy}{\#}
\newcommand{\commut}[2]{\left[ #1 , #2 \right]}
\newcommand{\moycom}[2]{\commut{#1}{#2}_{\moy}}
\newcommand{\moycomord}[3]{\commut{#1}{#2}_{\moy, #3}}
\newcommand{\antcom}[2]{\left[ #1 , #2 \right]_+}
\newcommand{\R}{\bbR}
\newcommand{\C}{\bbC}
\newcommand{\Or}{\caO}
\newcommand{\epsft}[1]{{\widehat{#1}^\eps}}
\newcommand {\norm} [2] [] {\ensuremath{ \left\Vert  #2  \right\Vert_{#1} } }
\newcommand{\facnorm}[2]{\norm[(#2)]{#1}}
\begin{document}


\title[Born-Oppenheimer transitions]{Superadiabatic transition histories\\\ in quantum molecular dynamics}

\author{Volker Betz, Benjamin D.\ Goddard and Stefan Teufel}
\address{Volker Betz \hfill\newline
\indent Department of Mathematics \hfill\newline
\indent University of Warwick \hfill\newline
\indent Coventry, CV4 7AL, England \hfill\newline
{\small\rm\indent http://www.maths.warwick.ac.uk/$\sim$betz/}
}
\email{v.m.betz@warwick.ac.uk}

\address{Benjamin Goddard \hfill\newline
\indent Department of Mathematics \hfill\newline
\indent University of Warwick \hfill\newline
\indent Coventry, CV4 7AL, England \hfill\newline
{\small\rm\indent http://www.warwick.ac.uk/staff/B.D.Goddard/}
}
\email{b.d.goddard@warwick.ac.uk}

\address{Stefan Teufel \hfill\newline
\indent Universit\"at T\"ubingen \hfill\newline
\indent Auf der Morgenstelle 10 \hfill\newline
\indent 72076 T\"ubingen, Germany \hfill\newline
{\small\rm\indent http://www.maphy.uni-tuebingen.de/members/stte}
}
\email{stefan.teufel@uni-tuebingen.de}

\maketitle

\begin{quote}
{\small
{\bf Abstract.}
We study the dynamics of a molecule's nuclear wave-function near an avoided crossing of two electronic energy levels, for one nuclear degree of freedom. We derive the general form of the Schr\"odinger equation in the $n$-th superadiabatic representation for all $n \in \bbN$,
and give some partial results about the asymptotics for large $n$. Using these results, we obtain closed formulas
for the time development of the component of the wave function in an initially unoccupied energy subspace, when a wave packet crosses the transition region. In the optimal superadiabatic representation, which we define, this component builds up monontonically. Finally, we give an explicit formula for the transition wave function away from the crossing, which is
in excellent agreement with high precision numerical calculations.
}  

\vspace{1mm}
\noindent
{\footnotesize {\it Keywords:} Non-adiabatic transitions, superadiabatic representations, asymptotic analysis, quantum dynamics, avoided crossings}

\vspace{1mm}
\noindent
{\footnotesize {\it 2000 Math.\ Subj.\ Class.:} 81V55, 34E20}
\end{quote}

\section{Overview and main results}

We consider the dynamics of a nuclear wave packet as it travels
through an avoided crossing of electronic energy levels. This problem has a long history in mathematics, physics and
theoretical chemistry, and we feel that the reader should know both that history and some detail about the problem itself
in order to appreciate our contribution. However,
such a presentation would overload the introduction, and we shift it to a separate section.
Here, we concentrate on giving a quick overview of our results, with little or no explanations. We will aim to give all necessary explanations later.

We consider a time-dependent, two level Schr\"odinger equation of the form
\be \label{basic equation}
\ii \eps \partial_t \psi(x,t) = H \psi(x,t), \quad \text{with }\quad H = - \frac{\eps^2}{2} \partial_x^2 \bsI + V(x).
\ee
Above, $\psi \in L^2(\bbR,\bbC^2)$, $\bsI$ is the two-dimensional unit matrix, and the $2 \times 2$-matrix $V(x)$ is the ('diabatic') electronic potential energy. We will usually write $V$ in the form
\be \label{basic V}
V(x) = \rho(x) \left( \begin{array}{ll} \cos \theta(x) & \sin \theta(x) \\ \sin \theta(x) & -\cos \theta(x)\end{array}\right).
\ee
\eqref{basic V} contains the implicit assumption that $V$ is traceless. We assume that the system exhibits an avoided crossing of the adiabatic energy levels. Then a wave packet that is originally entirely in one
adiabtic subspace undergoes non-adiabatic transitions when it travels through the avoided crossing region. In the adiabatic representation, these transitions are of order $\eps$ globally, but they are  exponentially small in $\eps$ in the scattering region. It is known that there exist improved, superadiabatic, representations, such that the transitions are exponentially small for all times; but no viable way of explicitly  determining these was known. Our first main result is Theorem \ref{h_n lowest order}.
There we show how to compute, to leading order in $\eps$, the $n$-th superadiabatic representation of the equation \eqref{basic equation}. The symbols of the off-diagonal coupling elements are obtained via a 
a recursive set of differential equations, given in Proposition \ref{p coefficient recursion}.

The superadiabatic representations form an asymptotic series, i.e.\ they diverge for fixed $\eps$ when $n \to \infty$. Therefore it is natural to look for the optimal superadiabatic representation, where the relevant quantities are minimal as functions of $n$ for given $\eps$. To find it, and prove error bounds, one needs precise control on the large $n$ asymptotics of superadiabatic coupling elements. We do not know how to obtain this control at present. Section \ref{s4} contains some partial results, including the asymptotics of superadiabatic coupling elements for large momenta, cf.\ Theorem \ref{large p dominance}.

In Section \ref{sec trans hist} we apply the previous results in order to compute non-adiabatic transitions. In contrast to the previous sections, the results of this final section are non-rigorous. 
The main result of Section \ref{sec trans hist}, and in some respect of the whole paper, is an explicit formula for  the exponentially small wave packet that makes the non-adiabatic transition in terms of data that are local in space and time. Let $\epsft{\psi_{+,0}}(k)$ be the Fourier transform of the wave packet moving according to the Born-Oppenheimer approximation in the upper electronic surface at the time when its maximum reaches the transition point. If at that time one starts a wave packet with Fourier transform given by 
\be  \label{form1}
\epsft{\psi_{-}}(k) = {\rm sgn}(k) \chi_{\{k^2>4\delta\}}  \sin \left(\frac{\pi \gamma}{2}\right) \e{- \frac{q_{\rm c}}{\eps} | k-v(k)|}   
  \left(1 + \tfrac{k}{v(k)} \right)   \epsft{\psi_{+,0}}(v(k)) 
  \ee
in the lower electronic surface and again evolves it according to the Born-Oppenheimer approximation, then a leading order approximation to the true time-evolution, i.e.\ to the solution of Schr\"odinger's equation \eqref{basic equation}, is achieved shortly after the wave packets leave the transition region. Here 
\[
v(k) := {\rm sgn} (k) \sqrt{k^2-4\delta} 
\]
is the momentum before the transition and $2\delta$ the energy gap. 
Formula  (\ref{form1}) holds for the special case of constant eigenvalues $\pm\delta$, i.e.\ $\rho(x) = \delta$ in (\ref{basic V}).  It provides a practical way to correctly include non-adiabatic transitions at avoided crossings into the Born-Oppenheimer approximation. An analogous formula for Landau-Zener like avoided crossings will be the content of a future publication. 

While the derivation of (\ref{form1}) is not completely rigorous and many approximations enter, we show it to be in excellent agreement with high precision ab initio numerical calculations, for a wide range of parameters including moderately large $\eps$. Not only does it correctly describe the transition probability, but it yields the shape of the wave function after the transition to such a high accuracy that, e.g.\ for the situation plotted in Figure~\ref{fig:non-gauss}, the relative error for $\epsft{\psi_{-}}(k)$ from formula (\ref{form1}) and from a highly accurate numerical solution 
of the Schr\"odinger equation is of the order $10^{-5}$ uniformly in those $k$ on which 
$\epsft{\psi_{-}}(k)$ is essentially supported. Put differently, Figure~\ref{fig:non-gauss} shows both the true solution and the approximation by our formula (\ref{form1}).

The method to derive (\ref{form1}) is to investigate the full time evolution of the transition wave packet in an optimal superadiabatic basis. In the time-adiabatic simplification of the problem, it is known \cite{Be90, BT05-2} that there exists an optimal superadiabatic representation in which transitions are (uniformly in time) exponentially small, and have the universal shape of an error function. For Born-Oppenheimer transitions, it is not immediately clear how to 
identify optimality of superadiabatic representations. In Definition \ref{opt superad} we propose 
a novel, natural criterion for optimality. We show that, if our criterion is met, transition wave packets in the Fourier representation have the universal error-function shape as functions in time near their maximum. We also show that in a simple but important case our criterion can be met. 

The structure of our paper is as follows. In Section \ref{sect history}, we describe the basic ideas of the time-dependent Born-Oppenheimer approximation, and the main developments in the study of avoided crossings. In Section \ref{subs rep operators} we recall the construction of the adiabatic representation, which will help to understand the subsequent construction of superadiabatic representations in the remainder of Section \ref{sect super rep} and Section \ref{s3}.
Section \ref{s4} is devoted to asymptotic results. Section \ref{sec trans hist} contains the applications to nonadiabatic transitions and has been discussed in detail in the previous paragraph.

\section{Introduction and history} \label{sect history}

Time-dependent Born-Oppenheimer theory is the single most important tool for studying the quantum dynamics of molecules, just as the time-independent  Born-Oppenheimer approximation \cite{BO27} is for the theory of molecular bound states. The basic physical idea is that the electrons, being at least 2000 times lighter than the nuclei, move
much more rapidly and thus quickly adjust their position with respect to the nuclei. In particular, if they start in the $n$-th
bound state (for some fixed positions of nuclei), they should remain in the $n$-th bound state even though the nuclei are slowly moving;
of course this will then be the $n$-th bound state with respect to
the updated position of the nuclei. In turn, the nuclear quantum dynamical motion is determined by an effective potential
given by the energy level of the $n$-th bound state of the electrons, as a function of nuclear position. All of this is expected to be true
up to errors of order $\eps$, the ratio of masses between electrons and nuclei.

Although named after Born and Oppenheimer,
the time-dependent version of the theory was first proposed by Fritz London in \cite{Lon28}. The mathematical investigation started with
the work of Hagedorn \cite{Hag80}, which made the ideas of London precise to leading order in $\eps$, under the assumption of smooth inter-particle potentials. Since then there has been considerable progress. In particular, the approximation has been pushed to arbitrary order in $\eps$, and extended to cover the case of an isolated subset of the electronic spectrum as opposed to just one single band \cite{MS02, ST01, Teu03}.

Instead of trying to give a full account of all further contributions, we refer to the corresponding section of the excellent review article \cite{HJ06} by George Hagedorn and Alain Joye.

The subject of this article is the time development of transitions
between molecular energy levels, i.e.\ the dynamics of the part of the wave-function not obeying the Born-Oppenheimer approximation. These deviations from the adiabatic behaviour are  of immense interest in quantum chemistry. In their most extreme form they occur when two energy levels of a molecule cross at a given configuration of nuclei, in which case Born-Oppenheimer theory breaks
down completely and there are transitions of order one. This mechanism is now widely accepted as governing many important chemical reactions \cite{Yar01}, and is an active research topic both in the mathematical community \cite{Ha94, LT05} and in theoretical chemistry \cite{CGB05, BH06}. Here, we will have nothing to say about it and instead concentrate on the related topic of avoided crossings of energy levels. The Born-Oppenheimer approximation then holds to leading order (and, as it will turn out, even beyond all orders in the scattering regime), but the remaining deviations are still of great interest.
In fact, it is an avoided crossing situation that leads to the photo-dissociation of NaI, which is one of the paradigmatic chemical reactions in photochemistry \cite{RRZ89, Zew94}.

Let us consider a diatomic molecule, such as NaI.
After discarding centre of mass movement and rotational degrees of freedom, of the nuclear degrees of freedom only the internuclear distance $x$ remains. For notational simplicity we ignore the spin degrees of freedom, and  so the Hamiltonian is given by
\begin{equation} \label{diatom}
H = -\frac{\hbar^{2}}{2 M} \partial_{x}^{2}  -\frac{\hbar^{2}}{2 m_{\rm e}} \Delta_{y} + V_{\rm n}(x) + V_{\rm{en}}(x,y) + V_{\rm e}(y),
\end{equation}
where $M$ is the reduced mass of the nuclei, $m_{\rm e}$ is that of the electrons, $y = (y_{1}, \ldots y_{n})$ are the positions of the electrons, and $V_{\rm n}$, $V_{\rm{e}}$ and $V_{\rm{en}}$ are effective nuclear repulsion, electronic repulsion and nuclear-electronic attraction, respectively. We simplify (\ref{diatom}) further by using atomic units ($m_{\rm e} = \hbar = 1$), putting $M =\eps^{-2}$, and subsuming everything except the nuclear kinetic energy into the ``electronic hamiltonian'' $H_{\rm e}(x)$, acting as an operator in $L^{2}(\dd y)$ for each $x$. As a result, (\ref{diatom}) reads
\begin{equation} \label{diatom2}
H = -\frac{\eps^{2}}{2} \partial_{x}^{2} + H_{\rm e}(x),
\end{equation}
and on the time scale where a nontrivial nuclear motion can be observed the time-dependent Schr\"odinger equation is given by
\begin{equation} \label{time dep}
\ii \eps \partial_{t} \psi(x,y,t) = H \psi (x,y,t), \quad \psi(x,y,0) = \psi_{0}(x,y).
\end{equation}
Let us now assume that the $n$-th eigenvalue of $H_{\rm e}(x)$, say $E_{n}(x)$, is separated from the rest of the spectrum by a finite gap for all $x$, and non-degenerate. Let $y \mapsto \chi_{n}(x,y)$ be the corresponding eigenvector in $L^{2}(\dd y)$ for each $x$.
Then by Born-Oppenheimer theory we know that if we start the
Schr\"odinger evolution with
$\psi_{0}(x,y) = \phi_{0}(x) \chi_{n}(x,y)$, the solution
of (\ref{time dep}) at time $t$ will be given (up to errors of order $\eps$) by $\psi(x,y,t) = \phi(x,t) \chi_{n}(x,y)$, with $\phi(x,t)$ determined by
\begin{equation} \label{effdyn}
\ii \eps \partial_{t} \phi(x,t)  = -\frac{\eps^{2}}{2} \partial_{x}^{2} \phi(x,t)+ E_{n}(x) \phi(x,t) , \quad \phi(x,0) = \phi_{0}(x).
\end{equation}

In this work we are not interested in the part that stays within the the energy band $E_{n}$, but rather want to study the orthogonal complement (in $L^{2}(\dd y)$) of it, i.e.\ the part that makes the transition. To do so, we introduce a significant simplification of the model: we assume that the $n$-th electronic energy level interacts with one and only one other electronic energy level,
say the $(n-1)$-st. The Hamiltonian for these energy levels alone is then given by
\begin{equation} \label{2level}
H = -\frac{\eps^{2}}{2} \partial_{x}^{2} + V(x) \quad \mbox{with} \quad V(x) = \left( \begin{array}{ll} X(x) & Z(x) \\ Z^{\ast}(x) & - X(x) \end{array}\right),
\end{equation}
where we assume that $X(x)$ and $Z(x)$ are analytic in a strip containing the real axis, and that $\rho(x) = \sqrt{X^{2}(x)+Z^{2}(x)} > c > 0$. The latter corresponds to $E_{n}(x)$ being isolated from $E_{n-1}(x)$ in the full electronic Hamiltonian. $H$ is now an operator in $L^{2}(\dd x ,\bbC^{2})$, and  $\bbC^{2}$ corresponds to the subspace of $L^{2}(\dd y)$ spanned by the eigenvectors $\chi_{n}$ and
$\chi_{n+1}$. The approximation contained in (\ref{2level}) can be justified by arguing that if
the remaining energy levels are separated from $E_{n}$ and $E_{n+1}$
by a gap that is uniformly larger than the minimal distance between $E_{n}$ and $E_{n+1}$, then their interaction with the model system should be negligible. In general, we ought to be very careful with such claims, since the effects we are looking for are exponentially small. While it is conceivable that, by using space-adiabatic perturbation theory beyond all orders, one could rigorously justify that (\ref{diatom}) gives the same transitions as (\ref{2level})  does in the relevant bands with exponential accuracy, there certainly exists no proof. Thus we think of (\ref{2level}) as an uncontrolled approximation and take it as the starting point of our investigations.
Note that even among the two-level systems, (\ref{2level}) is not
the most general, since we forced the trace of $V$ to vanish. While including a nonzero trace may lead to interesting
effects, the case of vanishing trace is technically easier, and we defer the treatment of
the general case to future work.

It is instructive to simplify (\ref{2level}) even more by prescribing a ('classical') path $x(t)$ for the nuclear motion of the two-level system, instead of considering its quantum evolution.
The equation then reduces to
\begin{equation}
\label{time adiab}
\ii \eps \partial_{t} \phi(t)  = \left( \begin{array}{ll} \tilde{X}(t) & \tilde Z(t) \\ {\tilde Z}^{\ast} (t) & -\tilde{X}(t) \end{array}\right) \phi(t) =: V(t) \phi(t),
\end{equation}
which is to be read as an equation for the occupation probabilities
of the two relevant electronic energy bands given a certain motion
of the nuclei. The limit $\eps\to 0$ in equation (\ref{time adiab}) is known as the adiabatic limit. The adiabatic theorem states that any solution $\phi(t)$ starting at time $t_0$ in an eigenstate $\psi(t_0)$ of  the matrix $V(t_0)$ remains an eigenstate of $V(t)$ also at time $t$ up to terms of order $\eps$ as long as $V(t)$ has two distinct eigenvalues. One says that transitions between the adiabatic subspaces are of order $\eps$ in the presence of a spectral gap.
However, in 1932, Zener \cite{Zen32} investigated an explicitly solvable instance of (\ref{time adiab}), namely
$X(t) = t/2$ and $Z(t) = \delta/2$. He observed that by solving (\ref{time adiab}) with an initial condition parallel to the eigenvector corresponding to $+\rho(t)$ at $t=-\infty$, the solution at $t=+\infty$ would have a component of magnitude $\e{-\pi \delta^{2} / (4 \eps)}$ in the eigenspace corresponding to $-\rho(t)$. In other words, in the scattering regime, the transition amplitude is exponentially small, much smaller than the bound obtained in the standard adiabatic theorem. Shortly after, Landau \cite{Lan65} argued that the same exponentially small expression should describe the scattering regime also for general analytic $X(t)$ and $Z(t)$ such that, at the minimum $t_{0}$ of $\rho^{2}(t) = X^{2}(t) + Z^{2}(t)$, $V$ is to first order approximated by the one considered by Zener. This gave rise to the famous Landau-Zener formula, which in itself has attracted much research, including \cite{DOS78} where more general situations leading
to different prefactors where considered in generality for the first time, and \cite{JKP91} where a rigorous proof of the generalized Landau-Zener transitions was given for the first time. We refer to \cite{HJ06-2} for further information on the subject.

Quantitatively the exponentially small scattering amplitude cannot be explained easily by any method involving just the adiabatic subspaces, i.e.\ the instantaneous eigenspaces of $V(t)$. Indeed, when starting the evolution in one of the eigenspaces and monitoring the component
$\phi_{2}(t)$ parallel to the other one, one will typically observe a build-up of $|\phi_{2}(t)|$ up to order $\eps$, and later an eventual decay to the exponentially small final value. See \cite{LiBe91, BT06}
for a numerical illustration of the phenomenon. One way to understand
better what is actually going on is the complex WKB method: one solves  (\ref{time adiab}) not
on the real line, but on a curve in the complex plane that approaches the real line at $\pm \infty$ and passes through the zeroes of the complex continuation of $\rho(t)$. Then the solution is exponentially small all the way, and by this method Joye et.\ al.\ \cite{JKP91} prove the Landau-Zener formula. However, the method does not give any insight on what happens on the real line at finite times. This question was first
investigated by M.\ Berry. In an influential paper \cite{Be90} he expands the solution of (\ref{time adiab}) into a formal power series, and by truncating the resulting asymptotic series after $n$ terms, he obtains a time-dependent basis of $\bbC^{2}$, called $n$-th superadiabatic basis. For all $n$, these bases agree with the adiabatic basis in the scattering regime, i.e.\ 
when the eigenspaces are approximately constant. 
Choosing $n$ so that the remainder term in the asymptotic expansion is minimal, Berry shows that not only are transitions between the corresponding subspaces exponentially small, but they are also universal in the sense that to leading order they are described by an error function for a wide class of matrices $V(t)$. Later, Berry and Lim \cite{BeLi93} refined these results by showing that some non-generic versions of $V(t)$ would lead to different prefactors for the scattering amplitude. Both of these works were non-rigorous, and it was not until a decade later that Hagedorn and Joye \cite{HaJo04} succeeded to prove them in a special, non-generic case. Results in the same special case were obtained independently in \cite{BT05-1}, but with a different method: Instead of expanding the solution of (\ref{time adiab}), the equation itself
is transformed using adiabatic perturbation theory \cite{Teu03}, leading to superadiabatic representations, in analogy to superadiabatic bases. Each superadiabatic representation is given by a
a unitary matrix $U_{n}(t) \in \bbC^{2 \time 2}$ such that
the off-diagonal elements of the matrix valued operator
$U_n(t) (\ii \eps \partial_{t} - V(t)) U_n^{\ast}(t)$ are of order
$\eps^{n}$, but with a prefactor growing like a factorial.
The optimal superadiabatic basis is the one where the off-diagonal elements are minimal, and there they are shown to be
exponentially small Gaussian functions to leading order. Integrating the resulting effective equations using first order perturbation theory leads to Berry's results. It turned out that this method was sufficiently flexible to allow for a generalisation \cite{BT05-2} covering all cases of interest, in particular the generic one and most of those appearing in \cite{BeLi93}. So now (\ref{time adiab}) can be viewed as pretty much well understood.

The situation is quite different for the true Born-Oppenheimer evolution (\ref{2level}). While it is known that it is possible (even for more general models) to construct exponentially accurate solutions \cite{HJ01, MS02, So03}, all of these works merely give upper bounds on the error terms instead of computing them as functions of time to leading order, as was achieved in \cite{Be90} for (\ref{time adiab}). Thus they say nothing about universality of transition histories, quantitative values of Landau-Zener scattering amplitudes, or scattering wave functions. Concerning the last point, there has been recent progress by Hagedorn and Joye \cite{HaJo05}. They assume that the time evolution is started with a coherent semiclassical wavepacket, of sufficiently high momentum, located near 
$x=-\infty$, and contained in one energy band. Under various assumptions, they prove that 
the portion of the wave-function making the transition to the other energy band is exponentially small and approximately Gaussian; they even give explicit formulas for these wave packets in the scattering region, in terms of complex contour integrals.  Of note, these formulas show that the exponential
rate is larger than predicted by the corresponding Landau-Zener formula derived from (\ref{time adiab}), while the momentum of the transmitted wave-packet is larger than predicted by energy conservation. Intuitively this is due to the fact that fast portions of the wave packet are more likely to make the transition than slow ones, and the presence of those (as opposed to a perfectly sharp momentum assumed in the approximation leading to (\ref{time adiab})) increases both the exponential rate and the momentum of the transition wave. The method of their proof is again the complex WKB method. Since the results of 
\cite{HaJo05} are of relevance to our asymptotic results, we will discuss the relation to our findings 
towards the end of Section \ref{constant energy transitions}. 

\section{Superadiabatic representations} \label{sect super rep}

\subsection{Representation as operators} \label{subs rep operators}
Our starting points are \eqref{basic equation} and \eqref{basic V}.
Switching to a superadiabatic representation means that we need to find a pseudo-differential operator
$U_n$ in $L^2(\bbR, \bbC^2)$ such that $U_n H U_n^\ast$ is close to a diagonal (operator-valued) matrix in a suitable sense. As suggested by the notation, $U_n$ will be close to a unitary. The benefit of such a representation is that then, with $\bsphi_n = U_n \bsphi$,  the equation
\be \label{superadiab general}
\ii \eps \partial_t \bsphi_n = U_n H U_n^\ast \bsphi_n
\ee
will decouple, up to error terms that we control, into two scalar equations. Moreover, an explicit control of the off-diagonal coupling terms in $U_n H U_n^\ast$ will yield results on the superadiabatic transition histories simply through first order time-dependent perturbation theory. The most well-known instance of this procedure is the adiabatic transformation obtained by
\be \label{U0}
U_0(x) = \left( \begin{array}{cc} \cos (\theta(x)/2) & \sin (\theta(x)/2) \\ \sin (\theta(x)/2) & -\cos (\theta(x)/2) \end{array}\right).
\ee
Since $U_0$ diagonalizes $V$, we obtain
\be \label{superad operator}
U_0 H U_0^\ast = - \frac{\eps^2}{2} \partial_x^2 +  \left( \begin{array}{cc} \rho(x) + \eps^2 \frac{\theta'(x)^2}{8} & 
- \eps  \frac{  \theta'(x)}{2} \cdot (\eps \partial_x)  - \eps^2 \frac{\theta''(x)}{4} \\[2mm]
 \eps  \frac{  \theta'(x)}{2} \cdot (\eps \partial_x)  + \eps^2 \frac{\theta''(x)}{4}  & 
-\rho(x) + \eps^2 \frac{\theta'(x)^2}{8} \end{array}\right).
\ee
Since $U_0 H U_0^\ast$ acts on semiclassical wave functions that oscillate with frequency $1/\eps$, the operator
  $\eps \partial_x$ is actually of order one, and to leading order we get
\[
H_0 :=  - \frac{\eps^2}{2} \partial_x^2 +  \left( \begin{array}{cc} \rho(x)  &- \eps  \frac{  \theta'(x)}{2} \cdot (\eps \partial_x)  \\[2mm]   \eps  \frac{  \theta'(x)}{2} \cdot (\eps \partial_x) & -\rho(x) \end{array}\right),
\]
the standard expression for the adiabatic dynamics including the well-known derivative coupling. In particular, $H_0$ gives the correct leading order dynamics inside the energy bands. However, if we are interested in exponentially small transitions, the adiabatic representation is not sufficient, and we need to find different representations (i.e.\ unitary transformations of $L^2(\R,\C^2)$) such that
the powers of $\eps$ in the off-diagonal increase.

\subsection{Symbolic representation}
For this, we will be dealing with high order differential operators, and it is convenient to work in the symbolic representation. A full account of symbolic calculus can be found in \cite{DiSj99,Ma02}. Here we only outline the main formulae that we are going to use, and do not touch the topic of symbol classes. Working in the symbolic representation means that we first replace $x$ by $q \in \R$ and $\ii \eps \partial_x$ by an independent variable $p \in \R$ in the definition (\ref{basic equation}) of $H$. The factor $\eps$ takes into account the semiclassical scaling. We then obtain
\be
H(p,q) = \frac{p^2}{2} + V(q),
\ee
where $V$ is as in (\ref{basic V}). The aim is now to find matrices $U_n(\eps,p,q)$ such that
\be \label{H_n}
H_n(\eps,p,q) := U_n(\eps,p,q) \moy H(p,q) \moy U_n(\eps,p,q) = \rm{diag} + \caO(\eps^{n+1}).
\ee
Here $\moy$ denotes the\\[1mm]
\noindent {\bf Moyal Product: } For two symbols $A(p,q)$ and $B(p,q)$, their Moyal product is defined through $A \moy B = \sum_j \eps^j (A \moy B)_j$ with
\begin{equation} \label{genmoyal}
(A(p,q) \moy B(p,q))_j = (2 \ii)^{-j} \sum_{\alpha + \beta = j}
\frac{(-1)^{\alpha}}{\alpha! \beta!} \left( \partial_q^\alpha \partial_p^\beta A \right) \left( \partial_p^\alpha \partial_q^\beta B \right).
\end{equation}
The Moyal product is the natural product for semiclassical symbols and accounts for the non-commutativity of the operators they represent.
It can be extended to symbols depending on $\eps$, like $U_n$, by representing $U_n$ as a formal power series $U_n(\eps,p,q) = \sum_{k} \eps^k U_{n,k}(p,q)$ and collecting powers of $\eps$, but we will only ever need the formula (\ref{genmoyal}) in explicit calculations.

Once $H_n(\eps,p,q)$ is constructed,
it can be recast as a pseudo-differential operator using the\\[1mm]
\noindent {\bf Weyl-Quantisation: } the operator
$\caW_\eps(H_n)$ corresponding to $H_n(\eps,p,q)$ acts on a test function $\phi$ by
\\
\be \label{weyl quant}
(\caW_{\eps}(H_n) \phi) (x) = \frac{1}{2 \pi \eps} \int_{\R^2} \dd \xi \, \dd y \, H_n \left(\eps, {\textstyle \frac{1}{2}}(x+y),\xi\right)
\e{\frac{\ii}{\eps}\xi(x-y)}\phi(y).
\ee

With the connection between operators and symbols in place, we now need to  actually construct the superadiabatic unitaries $U_n$. For this we use the method of\\[1mm]
\noindent {\bf Superadiabatic Projections: } that is, we seek symbols\\
\be \label{pi^n}
\pi^{(n)}(\eps,p,q) = \sum_{j=0}^n \eps^j \pi_j(p,q) \quad \text{(with $\pi_j(p,q) \in \C^{2 \times 2}$),}
\ee
such that
\begin{eqnarray}
\left( \pi^{(n)}(p,q) \right)^{\moy 2} - \pi^{(n)}(p,q) & = & \eps^{n+1} G_{n+1} + \Or(\eps^{n+2}) \label{pi1} \\
\moycom{H(p,q)}{\pi^{(n)}(p,q)} & = & \eps^{n+1} F_{n+1} + \Or(\eps^{n+2}). \label{pi2}
\end{eqnarray}
Above, $\moycom{A}{B} = A \moy B - B \moy A$ denotes the Moyal commutator.

General theory \cite{Teu03} guarantees the existence of $\pi^{(n)}$ for each $n$, and the same general theory states

\begin{lemma} \label{general fact}
There exists a semiclassical symbol $U_n(\eps,p,q)$ such that
\begin{eqnarray}
U_0(q) \moy U_n(\eps,p,q) = U_0(q) U_n(\eps,p,q) + \Or(\eps) & = & \bsone + \Or(\eps), \label{appr0} \\
U_n^{\ast}(\eps,p,q) \moy U_n(\eps,p,q) &=& \bsone + \Or(\eps^{n+1}), \label{appr1}\\
U_n(\eps,p,q) \moy \pi^{(n)}(\eps,p,q) \moy U_n^\ast(\eps,p,q) &=& \pi_{\rm r}, \label{appr2}
\end{eqnarray}
where $\pi_{\rm r}$ is the projection onto the first component of $\bbC^2$.
\end{lemma}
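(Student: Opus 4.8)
Since the statement is the Moyal-calculus incarnation of the standard ``intertwining unitary'' step of adiabatic perturbation theory, the plan is to follow the pattern of \cite{Teu03} while keeping careful track of the normalisation \eqref{appr0} and of the powers of $\eps$. The structural fact that makes everything clean is that $U_0$ is \emph{exactly} Moyal-unitary and involutive: since $U_0=U_0(q)$ does not depend on $p$, every term with $j\geq 1$ in \eqref{genmoyal} vanishes, so that $U_0\moy U_0=U_0^2=\bsone$ and $U_0^\ast=U_0$. In particular $U_0$ diagonalises $V$ as a symbol, and a short computation gives $U_0\pi_0 U_0=\pi_{\rm r}$, where $\pi_0$ is the leading term of $\pi^{(n)}$, namely the spectral projection of $V(q)$ onto the eigenvalue $+\rho(q)$. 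It therefore suffices to construct a semiclassical symbol $W(\eps,p,q)$ with leading term $\bsone$, which is Moyal-unitary up to $\caO(\eps^{n+1})$, and which satisfies $W\moy\hat\pi\moy W^\ast=\pi_{\rm r}$, where $\hat\pi:=U_0\moy\pi^{(n)}\moy U_0$; setting $U_n:=W\moy U_0$ then yields the three claims, since $U_0\moy U_n=U_0\moy W\moy U_0=\bsone+\caO(\eps)$ gives \eqref{appr0}, $U_n^\ast\moy U_n=U_0\moy(W^\ast\moy W)\moy U_0$ gives \eqref{appr1}, and $U_n\moy\pi^{(n)}\moy U_n^\ast=W\moy\hat\pi\moy W^\ast$ gives \eqref{appr2}.

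The decisive gain from this pre-conjugation is that $\hat\pi=\pi_{\rm r}+\caO(\eps)$ as a formal power series, because its leading term is $U_0\pi_0 U_0=\pi_{\rm r}$; moreover $\hat\pi$ inherits from $\pi^{(n)}$ (which, following the standard construction, we take to be symmetric) the properties $\hat\pi^\ast=\hat\pi$ and, via \eqref{pi1}, $\hat\pi\moy\hat\pi=\hat\pi+\caO(\eps^{n+1})$. The plan is then to take for $W$ the Sz.-Nagy--Kato intertwiner written in the Moyal calculus,
\[
W:=\bigl(\pi_{\rm r}\moy\hat\pi+(\bsone-\pi_{\rm r})\moy(\bsone-\hat\pi)\bigr)\moy\bigl(\bsone-(\hat\pi-\pi_{\rm r})^{\moy 2}\bigr)^{\moy(-1/2)},
\]
where the inverse square root denotes the formal power series furnished by the binomial expansion. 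This is legitimate precisely because $(\hat\pi-\pi_{\rm r})^{\moy 2}=\caO(\eps)$, and this is exactly where the pre-conjugation by $U_0$ is genuinely needed: without it one would have to invert $\bsone-(\pi_0-\pi_{\rm r})^{\moy 2}$, whose leading symbol need not be bounded away from zero (for instance when $\theta(q)$ approaches an odd multiple of $\pi$). The verification that $W^\ast\moy W=\bsone+\caO(\eps^{n+1})$ and $W\moy\hat\pi\moy W^\ast=\pi_{\rm r}+\caO(\eps^{n+1})$ is then the same purely algebraic manipulation as in the finite-dimensional Hilbert-space case, with ordinary products replaced by $\moy$ and the idempotency of $\hat\pi$ used only modulo $\caO(\eps^{n+1})$; the core identity is $A^\ast\moy A=\bsone-(\hat\pi-\pi_{\rm r})^{\moy 2}+\caO(\eps^{n+1})$ for $A:=\pi_{\rm r}\moy\hat\pi+(\bsone-\pi_{\rm r})\moy(\bsone-\hat\pi)$. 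If one wants \eqref{appr2} with a genuinely exact equality, one first replaces $\pi^{(n)}$ by the true Moyal projection $\Pi=\pi^{(n)}+\caO(\eps^{n+1})$ it determines, e.g.\ through the Riesz integral $\frac{1}{2\pi\ii}\oint_{|z-1|=1/2}(z\bsone-\pi^{(n)})^{\moy(-1)}\,\dd z$, which is a well-defined formal power series since $z\bsone-\pi_0$ is invertible on that circle; running the construction with $U_0\moy\Pi\moy U_0$ in place of $\hat\pi$ makes $W$ exactly Moyal-unitary and $W\moy(U_0\moy\Pi\moy U_0)\moy W^\ast$ exactly $\pi_{\rm r}$, which is the asserted conclusion with $\pi^{(n)}$ understood as the projection it approximates.

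The principal obstacle is not any single computation but the bookkeeping that threads the three properties together. Unfolding the closed formula order by order, at order $\eps^k$ the condition $W\moy\hat\pi\moy W^\ast=\pi_{\rm r}$ must fix precisely the $\pi_{\rm r}$-off-diagonal block of the $k$-th coefficient of $W$, while the condition $W^\ast\moy W=\bsone$ must fix an anti-self-adjoint part of the block-diagonal piece, and one has to check that the two requirements never clash. This is the point at which the approximate-projection property \eqref{pi1} of $\pi^{(n)}$ is indispensable: for a generic symbol near $\pi_{\rm r}$ the order-$\eps^k$ defect of the conjugation would have a non-removable block-diagonal part, whereas $\hat\pi\moy\hat\pi=\hat\pi+\caO(\eps^{n+1})$ forces this defect to be purely off-diagonal for every $k\leq n$ --- indeed, an operator $R$ satisfying $\pi R+R\pi=R$ for a projection $\pi$ has $\pi R\pi=0$ and $(\bsone-\pi)R(\bsone-\pi)=0$. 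Everything else --- symmetry of $\pi^{(n)}$, the formal convergence of the binomial series in $\eps$, and the exact Moyal-unitarity of $U_0$ --- is routine, and the only place where the naive approach would actually break down, namely the meaning of the inverse square root, is dealt with by the pre-conjugation as above.
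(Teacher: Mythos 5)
Your construction is correct, but be aware that the paper itself does not prove this lemma at all: it is imported wholesale from the general space-adiabatic perturbation theory of \cite{Teu03}, where the unitary is built order by order from the requirements \eqref{appr1}--\eqref{appr2} with $U_0$ as leading term. Your argument is a closed-form repackaging of that same construction rather than a conceptually different one, and it goes through: $U_0$ is indeed exactly Moyal-unitary because $p$-independent symbols Moyal-multiply pointwise, $U_0\pi_0U_0=\pi_{\rm r}$, and since $\pi_{\rm r}$ is a \emph{constant} matrix all Moyal products with it are also pointwise, which is what makes your key identities $A^\ast\moy A=A\moy A^\ast=\bsone-(\hat\pi-\pi_{\rm r})^{\moy 2}+\Or(\eps^{n+1})$ and $A\moy\hat\pi=\pi_{\rm r}\moy A+\Or(\eps^{n+1})$ come out exactly as in the finite-dimensional Sz.-Nagy--Kato computation, using \eqref{pi1} only through $\hat\pi\moy\hat\pi=\hat\pi+\Or(\eps^{n+1})$; the binomial series for the Moyal inverse square root is a legitimate formal power series since $\hat\pi-\pi_{\rm r}=\Or(\eps)$, and truncation at order $n$ yields an honest semiclassical symbol, which is all the paper asks for given its declared disregard of symbol classes. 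What your closed formula buys is transparency: no order-by-order bookkeeping, and your observation that the naive Nagy formula between $\pi^{(n)}$ and $\pi_{\rm r}$ would degenerate where $\theta$ approaches an odd multiple of $\pi$ (so that the pre-conjugation by $U_0$ is genuinely needed) is a point the recursive construction in \cite{Teu03} hides; what the citation buys the paper is the resummation and symbol-class statements that you silently elide by working with formal series. Two minor caveats, neither fatal: you need $\pi^{(n)}$ Hermitian, which does hold for the paper's recursion \eqref{pi_n recursion} (inductively, $F_{n+1}$ is anti-Hermitian and $G_{n+1}$ Hermitian, so each $\pi_j$ is Hermitian), and your construction gives \eqref{appr2} only modulo $\Or(\eps^{n+1})$ unless one passes to the resummed projection $\Pi$ as you indicate; since the only use of \eqref{appr2}, in the proof of Proposition \ref{offdiag Fn}, tolerates exactly such an $\Or(\eps^{n+1})$ discrepancy, this deviation from the literal statement is harmless.
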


While the existence of the $U_n$ is known, they are usually tricky to calculate.
Fortunately, there is no need to do so for getting the leading order of the superadiabatic Hamiltonian.
This is the content of the following result.

\begin{proposition} \label{offdiag Fn}
Define $H_n(\eps,p,q)$ as in (\ref{H_n}), and assume the $U_n$ given in that formula fulfils
(\ref{appr0})--(\ref{appr2}). Let $F_n$ be given through (\ref{pi2}). Denote by
\[
c_{n}^+(p,q) = (U_0(q) F_{n}(p,q) U_0(q))_{1,2}
\]
the upper off-diagonal element of $U_0 F_{n} U_0$, and let $c_n^-(p,q)$ be the lower off-diagonal element. Then
\[
H_n(\eps,p,q) =
\begin{pmatrix} p^2/2 + \rho(q) & \eps^{n+1} c_{n+1}^+(p,q) \\
- \eps^{n+1} c_{n+1}^-(p,q)  & p^2/2 - \rho(q) \end{pmatrix} + \begin{pmatrix} \Or(\eps^2) & \Or(\eps^{n+2}) \\ \Or(\eps^{n+2}) & \Or(\eps^2) \end{pmatrix}.
\]
\end{proposition}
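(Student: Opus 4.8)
The plan is to compute $H_n = U_n \moy H \moy U_n^\ast$ directly, using the defining relations \eqref{appr0}--\eqref{appr2} of Lemma \ref{general fact} to control everything except the off-diagonal entries, and then to extract those off-diagonal entries from the intertwining relation between $H$ and $\pi^{(n)}$. First I would observe that \eqref{appr2} says $U_n$ conjugates $\pi^{(n)}$ to the constant projection $\pi_{\rm r}$; applying the Moyal commutator $\moycom{\cdot}{H_n}$ and using that Moyal conjugation is an algebra homomorphism (up to the orders we track), one gets
\be \label{pp1}
\moycom{H_n}{\pi_{\rm r}} = U_n \moy \moycom{H}{\pi^{(n)}} \moy U_n^\ast + \Or(\eps^{n+2}),
\ee
since the error in $U_n^\ast \moy U_n = \bsone$ from \eqref{appr1} is $\Or(\eps^{n+1})$ and $\moycom{H}{\pi^{(n)}}$ is itself $\Or(\eps^{n+1})$ by \eqref{pi2}, so their product with the error terms is $\Or(\eps^{n+2})$. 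By \eqref{pi2} the right-hand side is $\eps^{n+1} U_n \moy F_{n+1} \moy U_n^\ast + \Or(\eps^{n+2}) = \eps^{n+1} U_0 F_{n+1} U_0 + \Or(\eps^{n+2})$, using \eqref{appr0} (i.e.\ $U_n = U_0^{-1} + \Or(\eps) = U_0 + \Or(\eps)$, since $U_0$ is its own inverse) to replace $U_n$ by $U_0$ at leading order.

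Next I would write $H_n = D_n + R_n$ where $D_n$ is diagonal, and note that the left-hand side of \eqref{pp1} is $\moycom{H_n}{\pi_{\rm r}}$; since $\pi_{\rm r} = \mathrm{diag}(1,0)$ is $p,q$-independent, its Moyal commutator with any symbol $A$ is just the ordinary commutator $[A,\pi_{\rm r}]$, which kills the diagonal part and returns the off-diagonal part with a sign: if $A = \begin{pmatrix} a & b \\ c & d\end{pmatrix}$ then $[A,\pi_{\rm r}] = \begin{pmatrix} 0 & -b \\ c & 0\end{pmatrix}$. Comparing with the right-hand side of \eqref{pp1} gives that the upper off-diagonal entry of $H_n$ is $-\eps^{n+1}(U_0 F_{n+1} U_0)_{1,2} + \Or(\eps^{n+2}) = -\eps^{n+1} c_{n+1}^+ + \Or(\eps^{n+2})$ — wait, sign bookkeeping: the statement has $+\eps^{n+1}c_{n+1}^+$ in the upper right and $-\eps^{n+1}c_{n+1}^-$ in the lower left, so one just tracks the sign in $[A,\pi_{\rm r}]$ carefully and reads off both entries. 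This pins down the off-diagonal block of $H_n$ to the claimed order.

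For the diagonal block, I would argue separately that $H_n = U_n \moy H \moy U_n^\ast$ agrees with $U_0 H U_0^\ast$ to $\Or(\eps^2)$: indeed $U_n = U_0 + \Or(\eps)$ by \eqref{appr0}, and $U_0 H U_0^\ast$ was computed explicitly in \eqref{superad operator}, whose diagonal is $p^2/2 \pm \rho(q) + \eps^2 \theta'^2/8$. The $\Or(\eps)$ corrections to $U_n$ could in principle produce an $\Or(\eps)$ change in the diagonal, but \eqref{pi2} together with the structure of the recursion (the first-order correction $\pi_1$ is purely off-diagonal in the adiabatic frame, as in the standard adiabatic perturbation theory of \cite{Teu03}) forces the $\Or(\eps)$ diagonal correction to vanish, leaving $\Or(\eps^2)$. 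The main obstacle is this last point: making precise that the diagonal of $H_n$ receives no $\Or(\eps)$ correction, i.e.\ controlling the interaction between the $\Or(\eps)$ ambiguity in $U_n$ and the diagonal part of $H$. I expect this to follow from the fact that the off-diagonal-in-$U_0$-frame part of the first superadiabatic correction is exactly what the commutator equation \eqref{pi2} at order $\eps$ produces, and that conjugating $\mathrm{diag}(\rho,-\rho)$ by $\bsone + \eps X$ with $X$ off-diagonal gives only an $\Or(\eps^2)$ diagonal shift; but the cleanest route is probably to invoke the general structure theory of \cite{Teu03} directly rather than recompute. The off-diagonal error $\Or(\eps^{n+2})$ is then just the propagation of the $\Or(\eps^{n+2})$ errors already present in \eqref{appr1}--\eqref{appr2} and \eqref{pi2} through the manipulation \eqref{pp1}.
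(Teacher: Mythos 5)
Your overall strategy---push everything onto the Moyal commutator $\moycom{H}{\pi^{(n)}}$, which is $\eps^{n+1}F_{n+1}+\Or(\eps^{n+2})$ by (\ref{pi2}), conjugate by $U_n$, and read the off-diagonal entries off $\moycom{H_n}{\pi_{\rm r}}$---is in the same spirit as the paper's proof, and your treatment of the diagonal block is essentially the paper's one-liner (first order adiabatic perturbation theory). But your key step, the claimed identity $\moycom{H_n}{\pi_{\rm r}} = U_n\moy\moycom{H}{\pi^{(n)}}\moy U_n^\ast + \Or(\eps^{n+2})$, has a genuine gap. Substituting $\pi_{\rm r}=U_n\moy\pi^{(n)}\moy U_n^\ast$ (exact, by (\ref{appr2})) into $\moycom{H_n}{\pi_{\rm r}}$ and writing $E:=U_n^\ast\moy U_n-\bsone=\Or(\eps^{n+1})$, one finds exactly
\[
\moycom{H_n}{\pi_{\rm r}} = U_n\moy\moycom{H}{\pi^{(n)}}\moy U_n^\ast
+ U_n\moy\bigl(H\moy E\moy\pi^{(n)}-\pi^{(n)}\moy E\moy H\bigr)\moy U_n^\ast .
\]
The unitarity defect $E$ is inserted \emph{between} $H$ and $\pi^{(n)}$, which are both of order one; it does not multiply the small commutator $\moycom{H}{\pi^{(n)}}$, so the second term is a priori only $\Or(\eps^{n+1})$---precisely the order of the quantity you are computing. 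Writing $E=\eps^{n+1}E_{n+1}+\dots$ and using $U_0\pi_0U_0=\pi_{\rm r}$, its lower-left block is $\eps^{n+1}(1-\pi_{\rm r})\,U_0HE_{n+1}U_0\,\pi_{\rm r}+\Or(\eps^{n+2})$, with a similar expression for the upper-right block, and nothing in Lemma \ref{general fact} forces these to vanish. Hence the identification of the off-diagonal entries at order $\eps^{n+1}$ does not follow; this is also why the sign discrepancy you notice cannot simply be ``tracked carefully'' at the end, since the uncontrolled term lives at exactly the order of $c^\pm_{n+1}$.

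The paper's proof is organized precisely to avoid this problem: it never strips off the projections, but works with $(1-\pi_{\rm r})\moy U_n\moy H\moy U_n^\ast\moy\pi_{\rm r}$ throughout, uses (\ref{projcomm1})--(\ref{projcomm2}) to commute $\pi^{(n)}$ past $H$ (producing the $\eps^{n+1}F_{n+1}$ term from (\ref{pi2})), and kills the dangerous remainder structurally via $(1-\pi_{\rm r})\moy\pi_{\rm r}=0$, so that it is $\Or(\eps^{2n+2})$; the residual multiplicative $(1+\Or(\eps^{n+1}))$ factors are then harmless because they multiply a quantity already shown to be $\Or(\eps^{n+1})$. To repair your argument you would need either an exactly unitary $U_n$ (more than Lemma \ref{general fact} provides) or a proof that the off-diagonal blocks of $U_0HE_{n+1}U_0$ vanish---which amounts to redoing the projection bookkeeping the paper carries out.
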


\begin{proof}
The diagonal terms are immediate from first order adiabatic perturbation theory. For the off-diagonal terms, we multiply (\ref{appr2}) with $U_n$ from the right and use (\ref{appr1}) in order to find
\be \label{projcomm1}
\pi_{\rm r} U_n =  U_n \moy \pi^{(n)} \moy U_n^* \moy U_n  = U_n \moy \pi^{(n)} + \eps^{n+1} U_n \moy \pi^{(n)} \moy R
\ee
for some symbol $R$, and similarly for $U_n \pi_{\rm r}$.
Iterating the above reasoning, we find
\be \label{projcomm2}
U_n \moy \pi^{(n)} = \pi_{\rm r} U_n - \eps^{n+1} U_n \moy \pi^{(n)} \moy R = \pi_{\rm r} U_n
\moy (1 - \eps^{n+1} R) + \Or(\eps^{2n+2}).
\ee
We now use (\ref{projcomm1}) in order to get
\begin{eqnarray*}
\lefteqn{(1-\pi_{\rm r}) \moy U_n \moy H \moy U_n^\ast \moy \pi_{\rm r}} \\
&=& (1-\pi_{\rm r}) \moy U_n \moy H \moy \pi^{(n)} \moy U_n^{\ast} \moy (1+\Or(\eps^{n+1})) = \\
&=& (1-\pi_{\rm r}) \moy U_n \moy \left(\pi^{(n)} \moy H + \eps^{n+1} F_{n+1} + \Or(\eps^{n+2})\right) \moy U_n^{\ast} \moy (1+\Or(\eps^{n+1})) = \\
&=& (1-\pi_{\rm r}) \moy ( \pi_{\rm r} \moy U_n + \eps^{n+1} U_n \moy \pi^{(n)} \moy R) \moy H  \moy U_n^{\ast} \moy (1+\Or(\eps^{n+1})) +\\
&& + (1-\pi_{\rm r}) \moy U_n \moy \left(\eps^{n+1} F_{n+1} + \Or(\eps^{n+2})\right) \moy U_n^{\ast} \moy (1+\Or(\eps^{n+1})).
\end{eqnarray*}
The next to last line above is $O(\eps^{2n + 2})$ by (\ref{projcomm2}) and the fact $(1-\pi_{\rm r}) \pi_{\rm r} = 0$,
and multiplying with $\pi_{\rm r}$ from the right we find
\[
(1-\pi_{\rm r}) \moy U_n \moy H \moy U_n^\ast \moy \pi_{\rm r}  =
\eps^{n+1} (1-\pi_{\rm r}) U_0 (F_{n+1} + \Or(\eps)) U_0 \pi_{\rm r}.
\]
This is the result for the upper right off-diagonal element. For the lower left one, we interchange the roles of $\pi_{\rm r}$ and $1 - \pi_{\rm r}$, and obtain an additional minus sign from (\ref{pi2}).
\end{proof}
Proposition \ref{offdiag Fn} means that we can focus our attention entirely on superadiabatic projections,
and there will be no need to calculate superadiabatic unitaries.

\section{Superadiabatic projections} \label{s3}

Here we present the recursive scheme for calculating superadiabatic projections, getting more and more explicit as the
section progresses.

\subsection{Matrix recursion for superadiabatic projections}
It is easy to check that for
\[
\pi_0(p,q) = \pi_0(q) = \frac{1}{2} \big( 1 + V(q)/\rho(q) \big),
\]
we have $\pi_0 V = \rho \pi_0$.  Hence $\pi_0$ is the adiabatic (zeroth superadiabatic) projection, i.e. the projection on the eigenspace of $V$ corresponding to $+ \rho$. Projecting on the upper adiabatic subspace is just a matter of choice here, the same construction works when starting with the projection onto the other subspace.

In order to obtain the higher superadiabatic representations, we note that
\[
F_{n+1} = \moycomord{\frac{p^2}{2} + V(q)}{\pi^{(n)}(p,q)}{n+1} = \moycomord{p^2/2}{\pi_n}{1} +
\sum_{k=1}^n \moycomord{V}{\pi_{n+1-k}}{k}.
\]
Here, $\moycomord{A}{B}{k} = (A \moy B)_k - (B \moy A)_k$ is the coeffcient of $\moycom{A}{B}$ corresponding to $\eps^k$. Using (\ref{genmoyal}), we find
\be \label{F_n+1}
F_{n+1} = \frac{1}{\ii} p \partial_q \pi_n + \sum_{k=1}^n \frac{1}{(2 \ii)^k k!} \left( (-1)^k (\partial_q^k V) \, (\partial_p^{k}\pi_{n+1-k}) - (\partial_p^{k}\pi_{n+1-k}) \, (\partial_q^k V) \right).
\ee
Similarly, (\ref{appr2}) gives
\be \label{G_n+1}
G_{n+1} = \sum_{k=1}^{n} \pi_k \pi_{n+1-k} + \sum_{k=0}^n \left( \pi_k \moy \pi_{n-k} \right)_1 + \sum_{k=0}^{n-1} \left( \pi_k \moy \pi_{n-k-1} \right)_2 + \ldots.
\ee
Finally, we can calculate $\pi_{n+1}$ through
\be \label{pi_n recursion}
\pi_{n+1} = G_{n+1} - \pi_0 G_n - G_n \pi_0 + \frac{1}{2 \rho} \commut{F_{n+1}}{\pi_0}.
\ee
The proof of \eqref{pi_n recursion} is given in \cite{BT05-1}, Proposition 1, for the special case $\rho = 1/2$ and
$F_{n+1} = - \ii \pi_n'(q)$. Since that proof applies word by word to the general case,
we do not repeat it here. Similarly, the important relations
\be \label{F_n is offdiag}
\pi_0 F_{n} \pi_0 = (1-\pi_0) F_n (1-\pi_0) = 0
\ee
and
\be \label{G_n is diag}
(1-\pi_0) G_n \pi_0 = \pi_0 G_n (1-\pi_0) = 0
\ee
follow as in \cite{BT05-1}.

\subsection{Transformed Pauli matrices}
We need a more explicit recursive scheme in order to feasibly calculate the quantities $F_n$, $G_n$ and $\pi_n$ given in (\ref{pi^n}) -- (\ref{appr2}). We now introduce such a scheme,  following \cite{BT05-1}, but will use a different notation than \cite{BT05-1} for reasons to be discussed below.
Let
\be \label{raw pauli}
\sigma_{\rm x} = \left( \begin{array}{ll} 0 & 1 \\ 1 & 0 \end{array} \right), \quad \sigma_{\rm y} = \left( \begin{array}{ll} 0 & -\ii \\ \ii & 0 \end{array} \right), \quad \sigma_{\rm z} = \left( \begin{array}{ll} 1 & 0 \\ 0 & -1 \end{array} \right)
\ee
be the Pauli matrices. We will need their representations in the adiabatic basis, given by
\be \label{pauli}
\sx(q) = U_0(q) \sigma_{\rm x} U_0(q), \quad \sy(q) = U_0(q) \sigma_{\rm y} U_0(q), \quad \sz(q) = U_0(q) \sigma_{\rm z} U_0(q).
\ee
Above, note that $U^\ast_0 = U_0$. The usual algebraic relations of the un-transformed Pauli matrices give
\be \label{pauli algebraic}
\begin{split}
\sx \sy = - \sy \sx = \ii \sz, \qquad & \sx \sz = - \sz \sx = - \ii \sy, \\
\sy \sz = - \sz \sy = \ii \sx, \qquad & \sx^2 = \sy^2 = \sz^2 = \bsone,
\end{split}
\ee
where $\bsone$ is the unit matrix.
Moreover, the special relations
\be \label{pauli special}
\begin{array}{lll}
\sx' = \theta' \sz, \qquad & \sy' = 0, \qquad & \sz' = - \theta' \sx,\\[2mm]
\commut{\sx}{\pi_0} = -\ii \sy, \qquad & \commut{\sy}{\pi_0} = \ii \sx, \qquad & \commut{\sz}{\pi_0} = 0.
\end{array}
\ee
can be easily checked. Here and henceforth, primes denote derivatives with respect to $q$.\\[2mm]
\noindent {\bf Remark:} We will use the Pauli matrices as a basis to represent the $\pi_n$ in.
The basic idea is the same as in \cite{BT05-1}, but there the basis matrices $X,Y,Z$ were chosen in an ad-hoc manner.
It happens that the matrices $X,Y,Z$ from \cite{BT05-1} are linked to $\sx, \sy, \sz$ here,
but unfortunately not in the most convenient way: We have $X = \ii \sy$, $Y = - \sz$ and $Z = - \sx$.
This will lead to a serious clash of notation between the present paper and \cite{BT05-1}, but we feel that the formulation of the problem in the widely used Pauli matrices justifies this.

\subsection{Superadiabatic projections through Pauli matrices}
Let us first note that $\sz(q) = V(q) / \rho(q)$. This implies that the adiabatic projection is given by
\be \label{pi0}
\pi_0(p,q) = \pi_0(q) = \frac{1}{2} \left( 1 + \sz(q) \right).
\ee
Indeed, $\pi_0^{\moy 2} = \pi_0^2 = \pi_0$ due to $\sz^2 = U_0^2 = \bsone$, and
\be \label{adiabatic offdiagonal}
\moycom{H(p,q)}{\pi_0(q)} = \moycom{{\textstyle \frac{p^2}{2}} \, \bsone}{\pi_0(q)} = \eps \frac{p}{\ii} \partial_q \pi_0(q) =
\frac{\eps \ii p}{2} \theta'(q) \sx(q).
\ee
The final equality above follows from the fact $\sz'(q) = - \theta'(q) \sx(q)$.
In particular, we conclude that
\[
F_1(p,q) = \frac{\ii}{2} p \theta'(q) \sx(q),
\]
and Proposition \ref{offdiag Fn} then gives
\[
H_0(\eps,p,q) = \left( \begin{array}{ll} p^2/2 + \rho(q) & \eps \ii p \theta'(q)/2 \\
- \eps \ii p \theta'(q)/2 & p^2/2 - \rho(q) \end{array} \right) + \Or(\eps^2),
\]
which is just the adiabatic representation \eqref{superad operator} in symbolic language. \\

We now define the coefficients $x_n$ through $w_n$ by
\be \label{pi_n decomposition}
\pi_n(p,q) = x_n(p,q) \sx(q) + \ii y_n(p,q) \sy(q) + z_n(p,q) \sz(q) + w_n(p,q) \bsone,
\ee
and emphasize that these coefficients have swapped names with those given in
\cite{BT05-1}. The prefactor $\ii$ in front of $y_n$ will make for some more elegant formulas later, and slightly reduce the clash of notation with \cite{BT05-1}. For our derivation of the recursions for the coefficients $x_n$ to $w_n$, we first need to treat the
derivatives of $V$ appearing in (\ref{F_n+1}):

\begin{lemma} \label{V derivatives}
We have
\[
\partial_q^n V(q) = a_n(q) \sz(q) + b_n(q) \sx(q),
\]
where $a_n(q)$ and $b_n(q)$ are given by the recursions
\be \label{anbn}
\begin{array}{rclrcl}
a_0(q) &=& \rho(q), \quad & b_0(q) &=& 0 \\[2mm]
a_{n+1}(q) &=& a_n'(q) + \theta'(q) b_n(q), \qquad & b_{n+1}(q) &=& b_n'(q) - \theta'(q) a_n(q).
\end{array}
\ee
\end{lemma}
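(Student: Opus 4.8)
The plan is to prove both identities simultaneously by induction on $n$. For the base case $n=0$, we use the fact established just above in the text that $\sz(q) = V(q)/\rho(q)$, hence $V(q) = \rho(q) \sz(q)$, which gives $a_0 = \rho$ and $b_0 = 0$ as claimed. (Here I would double-check the sign conventions implicit in \eqref{basic V} and \eqref{pauli}; the relation $\sz = V/\rho$ is asserted as \eqref{pi0}'s justification, so this is safe to assume.)

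For the inductive step, suppose $\partial_q^n V = a_n \sz + b_n \sx$. Differentiating with respect to $q$ and using the product rule gives
\[
\partial_q^{n+1} V = a_n' \sz + a_n \sz' + b_n' \sx + b_n \sx'.
\]
Now I substitute the special relations from \eqref{pauli special}, namely $\sz' = -\theta' \sx$ and $\sx' = \theta' \sz$. This yields
\[
\partial_q^{n+1} V = a_n' \sz - a_n \theta' \sx + b_n' \sx + b_n \theta' \sz = (a_n' + \theta' b_n)\sz + (b_n' - \theta' a_n)\sx,
\]
which is exactly the asserted form with $a_{n+1} = a_n' + \theta' b_n$ and $b_{n+1} = b_n' - \theta' a_n$, completing the induction.

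The proof is essentially a one-line computation once the ingredients are in place, so there is no real obstacle here; the only thing to be careful about is that $\sx$ and $\sz$ are $q$-dependent matrices (unlike the raw Pauli matrices), so that differentiation genuinely picks up the $\sx'$ and $\sz'$ terms — it is precisely the relations \eqref{pauli special} that make the recursion close within the span of $\{\sx, \sz\}$ and not leak into $\sy$ or $\bsone$. It is worth remarking that $\sy$ does not appear because $\sy' = 0$ and $\sy$ is absent from $V$ to begin with (since $V$ is real symmetric), and $\bsone$ does not appear because $V$ is traceless; this is consistent with the structural assumptions in \eqref{basic V}.
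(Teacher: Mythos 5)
Your proof is correct and is essentially the paper's own argument: the paper's proof reads ``Use the fact $V = \rho \sz$ together with (\ref{pauli special})'', which is exactly the induction you spell out. Nothing to add.
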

\begin{proof}
Use the fact $V = \rho \sz$ together with (\ref{pauli special}).
\end{proof}

The first step towards the coefficient recursion is

\begin{proposition} \label{f_n expression}
Define $x_n,y_n,z_n$ and $w_n$ through (\ref{pi_n decomposition}). Then
\[
\begin{split}
& F_{n+1} = \left( \frac{p}{\ii} (x_n' - \theta' z_n) - 2 \left( \frac{\partial_p}{(2\ii) 1!} b_1 w_n - \frac{\partial^2_p}{(2\ii)^2 2!} a_2 y_{n-1} + \frac{\partial^3_p}{(2\ii)^3 3!} b_3 w_{n-2} - \ldots \right) \right) \sx 
\\
 & \, + \left( \frac{p}{\ii} y_n' - 2 \left( \frac{\partial^2_p}{(2\ii)^2 2!} (b_2 z_{n-1} - a_2 x_{n-1}) +
\frac{\partial^4_p}{(2\ii)^4 4!} (b_4 z_{n-3} - a_4 x_{n-3}) + \ldots \right) \right) \ii \sy  \\
 & \, + \left( \frac{p}{\ii}(z_n' + \theta' x_n) + 2 \left( - \frac{\partial_p}{(2\ii) 1!} a_1 w_{n} -
\frac{\partial^2_p}{(2\ii)^2 2!} b_2 y_{n-1} - \frac{\partial^3_p}{(2\ii)^3 3!} a_3 w_{n-2} - \ldots
\right) \right) \sz \\
 & \, + \left( \frac{p}{i} w_n' - 2 \left( \frac{\partial_p}{(2 \ii) 1!} (a_1 z_n + b_1 x_n) +
\frac{\partial_p^3}{(2 \ii)^3 3!}(a_3 z_{n-2} + b_3 x_{n-2}) + \ldots \right) \right) \bsone.
\end{split}
\]
\end{proposition}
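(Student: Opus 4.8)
The plan is to substitute the two decompositions already in hand --- Lemma \ref{V derivatives} for the derivatives $\partial_q^k V = a_k\sz + b_k\sx$, and (\ref{pi_n decomposition}) for the $\pi_{n+1-k}$ --- directly into the formula (\ref{F_n+1}) for $F_{n+1}$, and then to read off the coefficients of $\sx$, $\ii\sy$, $\sz$ and $\bsone$ one line at a time; no analysis enters, everything reduces to the Pauli algebra. The leading term $\frac{1}{\ii}p\,\partial_q\pi_n$ is immediate: differentiating (\ref{pi_n decomposition}) and using $\sx' = \theta'\sz$, $\sy' = 0$, $\sz' = -\theta'\sx$ from (\ref{pauli special}) gives
\[
\partial_q\pi_n = (x_n' - \theta' z_n)\,\sx + \ii y_n'\,\sy + (z_n' + \theta' x_n)\,\sz + w_n'\,\bsone ,
\]
so $\frac{1}{\ii}p\,\partial_q\pi_n$ contributes precisely the ``$\frac{p}{\ii}(\cdot)$'' summand in each of the four lines of the assertion (with $\frac{p}{\ii}y_n'$ as the coefficient of $\ii\sy$).

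Next I would handle the finite sum. For fixed $k$ set $A := \partial_q^k V = a_k\sz + b_k\sx$ and $C := \partial_p^k\pi_{n+1-k}$; since $\sx,\sy,\sz$ do not depend on $p$, the matrix $C$ is just (\ref{pi_n decomposition}) with every scalar coefficient replaced by its $k$-th $p$-derivative. The $k$-th term of (\ref{F_n+1}) is $\frac{1}{(2\ii)^k k!}\bigl((-1)^k AC - CA\bigr)$, and the key elementary observation is that $(-1)^k AC - CA$ equals the commutator $AC - CA$ when $k$ is even and minus the anti-commutator $-(AC+CA)$ when $k$ is odd. Using only (\ref{pauli algebraic}) --- in particular $\sx\sz = -\sz\sx$, $\sx^2 = \sz^2 = \bsone$, $\sx\sy = \ii\sz$ and $\sz\sy = -\ii\sx$ --- one finds that $AC - CA$ carries no $\bsone$-component and $AC + CA$ no $\ii\sy$-component, namely
\[
AC - CA = 2 a_k (\partial_p^k y_{n+1-k})\,\sx + \bigl(2 a_k \partial_p^k x_{n+1-k} - 2 b_k \partial_p^k z_{n+1-k}\bigr)\ii\sy - 2 b_k (\partial_p^k y_{n+1-k})\,\sz
\]
and
\[
AC + CA = 2 b_k (\partial_p^k w_{n+1-k})\,\sx + 2 a_k (\partial_p^k w_{n+1-k})\,\sz + 2\bigl(a_k \partial_p^k z_{n+1-k} + b_k \partial_p^k x_{n+1-k}\bigr)\bsone .
\]
Multiplying by $\frac{1}{(2\ii)^k k!}$, attaching the extra minus sign for odd $k$, and summing over $k$ then distributes the even-$k$ pieces over the $\sx$-, $\ii\sy$- and $\sz$-lines and the odd-$k$ pieces over the $\sx$-, $\sz$- and $\bsone$-lines; pulling a common factor $-2$ out of each line turns these into exactly the alternating series displayed in the statement.

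I do not expect a genuine obstacle here: the proposition is a bookkeeping expansion. The one delicate point is keeping the signs straight --- the interplay of the factor $(-1)^k$ in (\ref{F_n+1}), the powers $(2\ii)^{-k}$, and the sign flip that converts a commutator into an anti-commutator as $k$ changes parity --- together with the explicit $\ii$ that (\ref{pi_n decomposition}) carries in front of $\sy$, which has to be carried correctly through each product. One should also note that in $\sum_{k=1}^{n}$ the index $n+1-k$ runs over $1,\dots,n$ only, so no $\pi_0$ boundary term intervenes and every coefficient appearing in the formula is meaningful.
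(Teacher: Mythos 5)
Your proposal is correct and follows essentially the same route as the paper's own proof: expand $\frac{p}{\ii}\partial_q\pi_n$ using (\ref{pauli special}), recognise the $k$-th term of (\ref{F_n+1}) as a commutator for even $k$ and minus an anti-commutator for odd $k$, compute both via (\ref{pauli algebraic}) (your two displayed identities match the paper's), and collect coefficients of $\sx$, $\ii\sy$, $\sz$, $\bsone$. The sign bookkeeping and the observation that $\pi_0$ contributes no boundary term are handled correctly, so nothing is missing.
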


\begin{proof}
This is just a calculation. The former parts of each bracket stem from the first term of
(\ref{F_n+1}), which computes to
\[
\partial_q \pi_n = (x_n' - \theta' z_n) \sx + y_n' \sy + (z_n' + \theta' x_n) \sz + w_n' \bsone.
\]
The latter terms of each bracket come from the various terms of second part of (\ref{F_n+1});
for even $k$ those are given by
\[
\begin{split}
& \frac{\partial_p^k}{(2\ii)^k k!} \commut{\partial_q^k V}{\pi_{n+1-k}} = \\
& = \frac{\partial_p^k}{(2\ii)^k k!} \commut{a_k \sz + b_k \sx}{x_{n+1-k} \sx + \ii y_{n+1-k} \sy + z_{n+1-k} \sz + w_{n+1-k} \bsone} = \\
& = -\frac{\partial_p^k}{(2\ii)^k k!} 2 \ii \left( \ii a_k y_{n+1-k} \sx + (b_k z_{n+1-k} - a_k x_{n+1-k}) \sy - \ii b_k y_{n+1-k} \sz \right),
\end{split}
\]
while for odd $k$ they are
\[
\begin{split}
& - \frac{\partial^k_p}{(2\ii)^k k!} \antcom{\partial^k_q V}{\pi_{n+1-k}} =\\
& = -\frac{\partial_p^k}{(2\ii)^k k!} \antcom{a_k \sz + b_k \sx}{x_{n+1-k} \sx + \ii y_{n+1-k} \sy + z_{n+1-k} \sz + w_{n+1-k} \bsone} = \\
& = - \frac{\partial_p^k}{(2 \ii)^k k!} 2 \left( (a_k z_{n+1-k} + b_k x_{n+1-k}) \bsone +
a_k w_{n+1-k} \sz + b_k w_{n+1-k} \sx \right).
\end{split}
\]
Collecting coefficients gives the result.
\end{proof}

It is remarkable that from this knowledge of $F_{n+1}$ alone, we can derive a set of recursive differential equations that, together with zero boundary conditions at infinity, determine the coefficients $x_n$ to $w_n$:

\begin{proposition} \label{coefficient recursion}
The coefficients $x_n$ to $w_n$ defined in (\ref{pi_n decomposition}) are determined by the following recursive algebraic-differential equations: we have
\be \label{rec start}
x_1 = z_1 = w_1 = 0, \qquad y_1 = -\ii p \frac{\theta'(q)}{4 \rho(q)}.
\ee
Moreover,
\be \label{rec zeros}
y_n = 0 \quad \text{when $n$ is even,} \qquad x_n = z_n = w_n = 0 \quad \text{when $n$ is odd.}
\ee
For $n$ odd we have
\be \label{x_n eq1}
x_{n+1} = -\frac{1}{2 \rho} \left( \frac{p}{\ii} y_{n}' - 2 \sum_{j=1}^n \frac{\partial_p^j}{(2 \ii)^j j!}
(b_j z_{n+1-j} - a_j x_{n+1-j}) \right),
\ee
while for $n$ even we have
\begin{eqnarray}
y_{n+1} & = & -\frac{1}{2 \rho} \left( \frac{p}{\ii} (x_n' - \theta' z_n) - 2 \sum_{j=1}^n
\frac{\partial_p^j}{(2 \ii)^j j!} (- a_j y_{n+1-j} + b_j w_{n+1-j} ) \right), \label{y_n eq1}\\
0 & = & \frac{p}{\ii} (z_n' + \theta' x_n) - 2 \sum_{j=1}^n \frac{\partial_p^j}{(2 \ii)^j j!}
(b_j y_{n+1-j} + a_j w_{n+1-j}) \label{z_n eq1} \\
0 & = & \frac{p}{\ii} w_n' - 2 \sum_{j=1}^n \frac{\partial_p^j}{(2\ii)^j j!} (a_j z_{n+1-j} + b_j x_{n+1-j})
\label{w_n eq1}
\end{eqnarray}
\end{proposition}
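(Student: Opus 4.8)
The plan is to derive every clause of Proposition~\ref{coefficient recursion} by writing all relevant symbols in the basis $\{\sx,\ii\sy,\sz,\bsone\}$ and comparing coefficients, using only the recursion~\eqref{pi_n recursion}, the off-diagonality~\eqref{F_n is offdiag}, the diagonality~\eqref{G_n is diag}, the Pauli relations~\eqref{pauli special}, and the closed form of $F_{n+1}$ furnished by Proposition~\ref{f_n expression}. I would first extract the off-diagonal coefficients $x_{n+1},y_{n+1}$. In the adiabatic basis $\sz$ and $\bsone$ are diagonal while $\sx,\sy$ are off-diagonal, since $\commut{\sz}{\pi_0}=0$, $\commut{\sx}{\pi_0}=-\ii\sy$ and $\commut{\sy}{\pi_0}=\ii\sx$. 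By~\eqref{G_n is diag} each $G_k$ is diagonal, hence so is $G_{n+1}-\pi_0 G_n-G_n\pi_0$, and therefore in~\eqref{pi_n recursion} only $\frac1{2\rho}\commut{F_{n+1}}{\pi_0}$ contributes to the $\sx$- and $\ii\sy$-parts of $\pi_{n+1}$. Writing $F_{n+1}=A\,\sx+B\,\ii\sy+C\,\sz+D\,\bsone$ with $A,B,C,D$ the four brackets of Proposition~\ref{f_n expression}, the commutator relations give $\frac1{2\rho}\commut{F_{n+1}}{\pi_0}=-\tfrac{B}{2\rho}\sx-\tfrac{A}{2\rho}\ii\sy$, whence $x_{n+1}=-B/(2\rho)$ and $y_{n+1}=-A/(2\rho)$; substituting the explicit $A$ and $B$ and relabelling the index is precisely~\eqref{x_n eq1} and~\eqref{y_n eq1}.

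For the diagonal coefficients $z_{n+1},w_{n+1}$ I would compute no $G$ at all --- this is the point worth isolating. The off-diagonality~\eqref{F_n is offdiag} says exactly that the brackets $C$ and $D$ of Proposition~\ref{f_n expression} vanish, and these two identities are~\eqref{z_n eq1} and~\eqref{w_n eq1}; applied at the next order, i.e.\ to $F_{n+2}$, they form a coupled pair of first-order linear differential equations in $q$ for $(z_{n+1},w_{n+1})$ whose forcing is assembled from already-determined lower-order data. Since the $\partial_p$'s only lower the degrees of the polynomial-in-$p$ coefficients, matching powers of $p$ reduces this pair to a finite, solvable system of scalar ODEs; together with the vanishing of $z_{n+1},w_{n+1}$ as $q\to\pm\infty$, where the superadiabatic and adiabatic bases agree, it determines $z_{n+1}$ and $w_{n+1}$. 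Solvability of this two-point problem need not be verified by hand: existence of a symbol $\pi^{(n)}$, hence of coefficients with the stated decay, is guaranteed by the general theory behind Lemma~\ref{general fact}, so that uniqueness given the equations and the boundary data is all that remains, and it follows by the usual induction.

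The parities~\eqref{rec zeros} and the initial data~\eqref{rec start} I would settle by an induction running alongside the two steps above. For $n=0$ one reads $z_0=w_0=\tfrac12$, $x_0=y_0=0$ off $\pi_0=\tfrac12(1+\sz)$; for $n=1$, equation~\eqref{pi_n recursion} reduces to $\pi_1=\frac1{2\rho}\commut{F_1}{\pi_0}$ (the $G$-contributions vanishing), and with $F_1=\tfrac{\ii}{2}p\,\theta'\sx$ this equals $\tfrac{p\theta'}{4\rho}\sy$, so $y_1=-\ii p\theta'/(4\rho)$ and $x_1=z_1=w_1=0$. For the inductive step, if~\eqref{rec zeros} holds through order $n$, then inspection of the four brackets of Proposition~\ref{f_n expression} shows that the brackets attached to the coefficients that should vanish do vanish identically (e.g.\ $A,C,D\equiv 0$ when $n$ is odd, $B\equiv 0$ when $n$ is even), while the remaining sums reduce to the even- or odd-$j$ subsums displayed in~\eqref{x_n eq1}--\eqref{w_n eq1}; this reproduces the claimed parity at order $n+1$.

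The main obstacle is bookkeeping rather than conceptual depth: matching the brackets of Proposition~\ref{f_n expression} against~\eqref{x_n eq1}--\eqref{w_n eq1} demands careful tracking of the $(2\ii)^{-j}$ weights, of the factor $\ii$ placed in front of $y_n$ in~\eqref{pi_n decomposition}, and of the parity-driven cancellation of half the terms in each sum. The one genuinely non-obvious ingredient --- the reason the statement is called remarkable --- is the device used in the second step: off-diagonality of $F_{n+1}$ alone, with no input from $G$, is what pins down the diagonal coefficients $z_{n+1}$ and $w_{n+1}$.
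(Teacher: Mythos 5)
Your proposal is correct and follows essentially the same route as the paper: the $G$-terms in (\ref{pi_n recursion}) are diagonal by (\ref{G_n is diag}), so $x_{n+1},y_{n+1}$ come solely from $\frac{1}{2\rho}\commut{F_{n+1}}{\pi_0}$ via (\ref{pauli special}) and Proposition \ref{f_n expression}, while (\ref{F_n is offdiag}) forces the $\sz$- and $\bsone$-brackets of $F_{n+1}$ to vanish, giving (\ref{z_n eq1})--(\ref{w_n eq1}); the initial data and the parity (\ref{rec zeros}) are then settled by the same direct computation of $\pi_1$ and the same induction the paper invokes. Your additional remarks on how the decay at infinity pins down $z_{n+1},w_{n+1}$ without computing $G$ are consistent with (indeed slightly more explicit than) the paper's treatment.
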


\begin{proof}
For (\ref{rec start}), note that
\[
\pi_1 = \frac{1}{2 \rho} \commut{F_1}{\pi_0} = \frac{\ii p \theta'}{4 \rho} \commut{\sx}{\pi_0} =
\frac{p \theta'}{4 \rho} \sy = \left( -\ii p \frac{\theta'}{4 \rho} \right) \ii \sy
\]
We now use (\ref{pi_n recursion}) and Proposition \ref{f_n expression}
in order to prove the recursive formulae: from (\ref{G_n is diag}), we can deduce that
$ \commut{G_{n+1}}{\pi_n} = 0$, and thus (\ref{pauli special}) implies that $G_{n+1}$ is proportional
to $\sz$ and $\bsone$. Consequently, the parts of $\pi_{n+1}$ proportional to $\sx$ and $\sy$ arise from the last term of
(\ref{pi_n recursion}) alone, and comparison with (\ref{pauli special}) and Proposition \ref{f_n expression}
shows
\be \label{raw x_n}
x_{n+1} = -\frac{1}{2 \rho} \left( \frac{p}{\ii} y_n' - 2 \left( \frac{\partial_p^2}{(2 \ii)^2 2!}(b_2 z_{n-1} - a_2 x_{n-1})
+ \frac{\partial^4_p}{(2 \ii)^4 4!} (b_4 z_{n-3} - a_4 x_{n-3}) + \ldots \right) \right),
\ee
and
\be \label{raw y_n}
y_{n+1} = -\frac{1}{2 \rho} \left( \frac{p}{\ii} (x_n' - \theta' z_n) - 2 \left( \frac{\partial_p}{(2 \ii) 1!} b_1 w_n
- \frac{\partial^2_p}{(2 \ii)^2 2!} a_2 y_{n-1} + \frac{\partial^3_p}{(2 \ii)^3 3!} b_3 w_{n-2} - \ldots
\right) \right).
\ee
Now, from (\ref{F_n is offdiag}) we can deduce that $F_n$ is proportional to $\sx$ and $\sy$ only, and so
Proposition \ref{f_n expression} immediately gives
\begin{eqnarray}
0 & = &  \frac{p}{\ii}(z_n' + \theta' x_n) - 2 \left( \frac{\partial_p}{(2\ii) 1!} a_1 w_{n} +
\frac{\partial^2_p}{(2\ii)^2 2!} b_2 y_{n-1} + \frac{\partial^3_p}{(2\ii)^3 3!} a_3 w_{n-2} + \ldots
\right), \label{raw z_n},\\
0 & = & \frac{p}{i} w_n' - 2 \left( \frac{\partial_p}{(2 \ii) 1!} (a_1 z_n + b_1 x_n) +
\frac{\partial_p^3}{(2 \ii)^3 3!}(a_3 z_{n-2} + b_3 x_{n-2}) + \ldots \right).
\label{raw w_n}
\end{eqnarray}
Now (\ref{rec zeros}) follows inductively, and after that (\ref{x_n eq1}) -- (\ref{w_n eq1}) are immediate from (\ref{raw x_n}) -- (\ref{raw w_n}), noting that all we did is to add some terms that are zero, in order to get a more closed expression.
\end{proof}

We are now in the position to cast Proposition \ref{offdiag Fn} into a more specific form, yielding our first main result.

\begin{theorem} \label{h_n lowest order}
Define $H_n(\eps,p,q)$ as in Proposition \ref{offdiag Fn}, $x_n(p,q)$ and $y_n(p,q)$ as in
Proposition \ref{coefficient recursion}, and put
\be \label{kappa formula}
\kappa_{n+1}^{\pm} = -2 \rho (y_{n+1} \pm x_{n+1}).
\ee
Then
\[
H_n(\eps,p,q) = \frac{p^2}{2} \bsone + \begin{pmatrix}
\rho & \eps^{n+1} \kappa_{n+1}^{+}  \\ \eps^{n+1} \kappa_{n+1}^{-} & -\rho
\end{pmatrix} + \begin{pmatrix} \Or(\eps^2) & \Or(\eps^{n+2}) \\ \Or(\eps^{n+2}) & \Or(\eps^2) \end{pmatrix}.
\]
\end{theorem}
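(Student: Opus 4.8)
\emph{Proof strategy.} The statement is really just Proposition~\ref{offdiag Fn} made explicit: its diagonal part is already the one asserted here (the ``first order adiabatic perturbation theory'' entries), so the only thing to do is to identify the off-diagonal elements $c^\pm_{n+1}$ appearing there with $\kappa^\pm_{n+1}$. Accordingly, the plan is to compute $U_0 F_{n+1} U_0$ in closed form in terms of $x_{n+1}$ and $y_{n+1}$ and then quote Proposition~\ref{offdiag Fn}.

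The first step is to pin down the structure of $F_{n+1}$ in the Pauli basis. By \eqref{F_n is offdiag} the matrix $F_{n+1}$ is off-diagonal with respect to $\pi_0$, hence by \eqref{pauli special} a linear combination of $\sx$ and $\sy$ only; write $F_{n+1} = f_x\,\sx + f_y\,\ii\sy$. On the other hand, \eqref{G_n is diag} forces $G_{n+1}$ and $G_n$ to commute with $\pi_0$, so they—and therefore also $\pi_0 G_n$ and $G_n\pi_0$—are combinations of $\bsone$ and $\sz$ alone. Consequently, in the recursion \eqref{pi_n recursion} the $\sx$- and $\ii\sy$-components of $\pi_{n+1}$ come entirely from the term $\tfrac1{2\rho}\commut{F_{n+1}}{\pi_0}$. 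Evaluating this commutator with $\commut{\sx}{\pi_0}=-\ii\sy$ and $\commut{\sy}{\pi_0}=\ii\sx$ and comparing with the decomposition \eqref{pi_n decomposition} yields $f_x = -2\rho\,y_{n+1}$ and $f_y = -2\rho\,x_{n+1}$, i.e.
\[
F_{n+1} = -2\rho\,\bigl(y_{n+1}\,\sx + x_{n+1}\,\ii\sy\bigr).
\]
(The same identity can be read off directly by matching the $\sx$- and $\ii\sy$-coefficients in Proposition~\ref{f_n expression} against \eqref{raw x_n} and \eqref{raw y_n}, which is the route I would actually take since it re-proves nothing.)

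It then remains to pass to the diabatic frame and read off matrix entries. Since $U_0^\ast = U_0$, $U_0^2 = \bsone$, and $\sx = U_0\sigma_{\rm x}U_0$, $\sy = U_0\sigma_{\rm y}U_0$ by \eqref{pauli}, we have $U_0\sx U_0 = \sigma_{\rm x}$ and $U_0\sy U_0 = \sigma_{\rm y}$, so
\[
U_0 F_{n+1} U_0 = -2\rho\,\bigl(y_{n+1}\,\sigma_{\rm x} + x_{n+1}\,\ii\sigma_{\rm y}\bigr) = \begin{pmatrix} 0 & -2\rho(y_{n+1}+x_{n+1}) \\ -2\rho(y_{n+1}-x_{n+1}) & 0 \end{pmatrix}.
\]
Hence the off-diagonal entries of $U_0 F_{n+1}U_0$ are exactly $c^\pm_{n+1} = -2\rho(y_{n+1}\pm x_{n+1}) = \kappa^\pm_{n+1}$, and substituting into the formula of Proposition~\ref{offdiag Fn} gives the asserted form of $H_n$.

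There is no genuine obstacle: the existence and defining properties of the $U_n$, the recursion for the $\pi_n$, and the reduction of $H_n$ to $U_0 F_{n+1} U_0$ have all been done in the preceding results, and what is left is essentially a two-line computation. The only place where care is needed is bookkeeping—keeping the factor $\ii$ attached to $\sy$ consistent through the commutator identities \eqref{pauli special}, and matching the sign placement of the lower-left off-diagonal entry with the convention used in Proposition~\ref{offdiag Fn}.
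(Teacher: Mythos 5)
Your proposal is correct and follows essentially the same route as the paper: the paper's proof likewise identifies the $\sx$- and $\ii\sy$-coefficients of $F_{n+1}$ with $-2\rho\,y_{n+1}$ and $-2\rho\,x_{n+1}$ by comparing Proposition \ref{f_n expression} with the recursion of Proposition \ref{coefficient recursion}, conjugates with $U_0$ to pass to $\sigma_{\rm x},\sigma_{\rm y}$, and then invokes Proposition \ref{offdiag Fn}, so your commutator argument is merely a re-derivation of that same comparison. The residual sign question you flag for the lower-left entry is an inconsistency internal to the paper's conventions (the minus sign in Proposition \ref{offdiag Fn} versus the theorem's display, compare also the explicit $H_0$ after \eqref{adiabatic offdiagonal}), not a gap in your argument.
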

\begin{proof}
From Proposition \ref{offdiag Fn}, we see that we need to calculate $U_0 F_{n+1} U_0$. By definition, $U_0 \sx U_0 = \sigma_x$, and similarly for $\sy$. Since $F_{n+1}$ is proportional to $\sx$ and $\sy$, by comparing Propositions \ref{f_n expression} and \ref{coefficient recursion} we obtain
\[
U_0(q) F_{n+1}(p,q) U_0(q) = 2 \rho(q) y_{n+1}(p,q) \sigma_x + 2 \ii \rho(q) x_{n+1}(p,q) \sigma_y.
\]
Comparison with (\ref{raw pauli}) and Proposition \ref{offdiag Fn} gives the result.
\end{proof}

In order to make use of Theorem \ref{h_n lowest order}, we need to control the coefficients $x_n$ and $z_n$.
Since they are poynomials (of order $n$) in $p$, it makes sense to consider coefficients. We put
\be \label{xn order decomposition}
x_n(p,q) = \sum_{m=0}^{n} p^{n-m} x_n^m(q),
\ee
with similar expressions for the other coefficients. Here, the index $m$ in $x_n^m$ is an upper index rather than a power, and the choice of using $x_n^0$ for the highest power will turn out to be the most convenient one  below. Differentiating now gives
\[
\partial_p^j x_{n+1-j} = \sum_{m=0}^{n+1-2j} \frac{(n+1-j-m)!}{(n+1-2j-m)!} p^{n+1-2j-m} x_{n+1-j}^m,
\]
and thus the latter terms of expressions like (\ref{x_n eq1}) are of the form
\begin{align}
\sum_{j=1}^n \frac{\partial^j_p}{(2\ii)^j j!} a_j x_{n+1-j} & = \sum_{j=1}^{\floor{(n+1)/2}} \, \,
\sum_{k=0}^{n+1-2j} \frac{(n+1-j-k)! a_j}{(2\ii)^j j! (n+1-2j-k)!} p^{n+1-2j-k} x_{n+1-j}^k = \notag \\
& = \sum_{m=0}^{n+1} p^{n+1-m} \sum_{j=1}^{\floor{m/2}} \frac{a_j}{(2\ii)^j} \frac{(n+1-m+j)!}{j! (n+1-m)!}
x_{n+1-j}^{m-2j}. \label{xn derivatives}
\end{align}

\begin{proposition} \label{p coefficient recursion}
The coefficients $x_n^m$ to $w_n^m$ defined in (\ref{xn order decomposition}) are determined by the following recursive algebraic-differential equations: we have
\be \label{rec start 2}
x_1^m = z_1^m = w_1^m = 0, \; m=0,1, \qquad y_1^0 = -\ii \frac{\theta'(q)}{4 \rho(q)}, \; y_1^1=0.
\ee
Moreover,
\be \label{x_n eq2}
x_{n+1}^m = -\frac{1}{2 \rho} \left( \frac{1}{\ii} (y_n^m)' - 2 \sum_{j=1}^{\floor{m/2}} \frac{1}{(2\ii)^j}
\tbinom{n+1-m+j}{j} (b_j z_{n+1-j}^{m-2j} - a_j x_{n+1-j}^{m-2j}) \right)
\ee
for $n$ odd, while for $n$ even we have
\begin{eqnarray}
y_{n+1}^m &=& -\frac{1}{2 \rho} \left( \frac{1}{\ii} \big( (x_n^m)' - \theta' z_n^m \big) -
2 \sum_{j=1}^{\floor{m/2}} \frac{1}{(2\ii)^j}
\tbinom{n+1-m+j}{j} ( -a_j y_{n+1-j}^{m-2j} + b_j w_{n+1-j}^{m-2j}) \right) \nonumber ,\\
&&  \label{y_n eq2}\\
0 & = & \frac{1}{\ii} \big( (z_n^m)' + \theta' x_n^m \big) - 2 \sum_{j=1}^{\floor{m/2}} \frac{1}{(2\ii)^j}
\tbinom{n+1-m+j}{j} (b_j y_{n+1-j}^{m-2j} + a_j w_{n+1-j}^{m-2j}), \label{z_n eq2}\\
0 & = & \frac{1}{\ii} (w_n^m)' - 2 \sum_{j=1}^{\floor{m/2}} \frac{1}{(2\ii)^j}
\tbinom{n+1-m+j}{j} (a_j z_{n+1-j}^{m-2j} + b_j x_{n+1-j}^{m-2j}). \label{w_n eq2}
\end{eqnarray}
\end{proposition}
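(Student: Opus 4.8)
The plan is to derive Proposition~\ref{p coefficient recursion} directly from Proposition~\ref{coefficient recursion} by extracting, for each fixed $m$, the coefficient of $p^{n+1-m}$ in the identities \eqref{rec start}, \eqref{rec zeros} and \eqref{x_n eq1}--\eqref{w_n eq1}. All the analytic content is already in place: the polynomial decomposition \eqref{xn order decomposition} together with the differentiation identity \eqref{xn derivatives}, which rewrites each sum of the form $\sum_{j=1}^n \frac{\partial_p^j}{(2\ii)^j j!}\, a_j\, x_{n+1-j}$ (and its analogues with $b_j$, and with $y$, $z$, $w$ in place of $x$) as an explicit polynomial in $p$ with known coefficients. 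So the argument is a bookkeeping exercise.

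First I would dispose of the initial data and the parity statement. From \eqref{rec start}, $x_1 = z_1 = w_1 = 0$ as functions of $p$, so each of their $p$-coefficients vanishes, which is the first half of \eqref{rec start 2}; and $y_1(p,q) = -\ii p\,\theta'(q)/(4\rho(q))$ has $p$-degree one, so under the convention $y_1 = p\,y_1^0 + y_1^1$ one reads off $y_1^0 = -\ii\theta'/(4\rho)$ and $y_1^1 = 0$. The vanishing statement \eqref{rec zeros} passes to the $p$-components trivially: if $x_n$, $y_n$, $z_n$ or $w_n$ is identically zero in $(p,q)$, then so are all the corresponding $x_n^m$, $y_n^m$, $z_n^m$, $w_n^m$.

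For the recursions themselves I would substitute \eqref{xn order decomposition} into \eqref{x_n eq1}--\eqref{w_n eq1} and compare coefficients of $p^{n+1-m}$. The transport terms are immediate, since $\frac{p}{\ii}\,\partial_q x_n = \frac{1}{\ii}\sum_m p^{n+1-m}(x_n^m)'$; thus, for instance, $\frac{1}{\ii}\big((x_n^m)' - \theta' z_n^m\big)$ in \eqref{y_n eq2} is exactly the coefficient of $p^{n+1-m}$ in $\frac{p}{\ii}(x_n' - \theta' z_n)$, with the convention that any $p$-component whose index falls outside $\{0,\dots,n\}$ is read as zero. For the commutator and anticommutator sums, \eqref{xn derivatives} gives that the coefficient of $p^{n+1-m}$ in $\sum_{j=1}^n \frac{\partial_p^j}{(2\ii)^j j!}\,a_j\,x_{n+1-j}$ equals $\sum_{j=1}^{\floor{m/2}} \frac{a_j}{(2\ii)^j}\,\tbinom{n+1-m+j}{j}\,x_{n+1-j}^{m-2j}$, and likewise for the $b_j$-sums and for $y$, $z$, $w$. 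Inserting these and dividing by $2\rho$ where it occurs turns \eqref{x_n eq1}--\eqref{w_n eq1} into \eqref{x_n eq2}--\eqref{w_n eq2} verbatim.

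The only step that deserves attention — and the one where the clean form of the proposition emerges — is the index bookkeeping behind \eqref{xn derivatives}: the upper limit of the inner sum drops from $j \le \floor{(n+1)/2}$ to $j \le \floor{m/2}$ because $\partial_p^j x_{n+1-j}$ has $p$-degree $n+1-2j$ and hence contributes to $p^{n+1-m}$ only when $n+1-2j \ge n+1-m$; and the factorial prefactor $\frac{(n+1-j-k)!}{j!\,(n+1-2j-k)!}$ with $k = m-2j$ collapses to $\tbinom{n+1-m+j}{j}$. Both are elementary. With them in hand the proposition follows, and I would write the proof concisely, carrying out one representative coefficient in full and noting that the others are handled identically.
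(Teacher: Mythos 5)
Your proof is correct and follows essentially the same route as the paper, whose own proof is just the remark that one inserts (\ref{xn derivatives}) into (\ref{x_n eq1})--(\ref{w_n eq1}); your coefficient-of-$p^{n+1-m}$ bookkeeping, including the collapse of the inner sum to $j\le\floor{m/2}$ and of the factorial prefactor to $\tbinom{n+1-m+j}{j}$, is exactly the content of that remark, spelled out.
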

\begin{proof}
This simply uses (\ref{xn derivatives}) in (\ref{x_n eq1})--(\ref{w_n eq1}).
\end{proof}

\noindent {\bf Remark:} From the above equations, it is obvious that $x_n^m = y_n^m = z_n^m = w_n^m = 0$ for odd $m$. But more is true. By induction and using (\ref{rec zeros}), we find that
\be \label{m zeros}
\begin{split}
x_n^m = y_n^m = z_n^m = 0 & \quad \text{if} \quad m \neq 4 k \\\
w_n^m = 0 & \quad \text{if} \quad m \neq 4 k + 2,
\end{split}
\ee
for  $k \in \N_0$.

\section{Asymptotics} \label{s4}

It appears to be a very hard problem to determine the asymptotic behaviour of the quantities $x_n^m$ etc. as
$n \to \infty$, let alone prove it. So our results on this subject are rather incomplete, except in the case $m=0$,
i.e.\ for the term of the highest order (in $p$); in that case
the asymptotics of $x_n^0(q)$, $y_n^0(q)$ and $z_n^0(q)$ is known rigorously.

\subsection{Highest order in $p$}

For $m=0$, the sums on the right hand side of (\ref{x_n eq2})--(\ref{w_n eq2}) are empty, and we retain
\[
x_{n+1}^0 = \frac{\ii}{2\rho} (y_n^0)', \quad y_{n+1}^0 = \frac{\ii}{2\rho}( (x_n^0)' - \theta' z_n^0),
\quad 0 = (z_n^0)' + \theta' y_n^0.
\]
After changing to the natural scale
\be \label{nat}
\tau(q) = 2 \int_0^q \rho(r) \, \dd r,
\ee
these are (apart from a change of notation)
just the recursions appearing in the time-adiabatic case, which have been solved in \cite{BT05-1, BT05-2}.

The relevant results in that case read as follows: we introduce the natural scale (\ref{nat})
and define $\tilde f(\tau(q)) = f(q)$ for a given function $f$. Furthermore, we assume that
\be \label{nat scale}
\frac{\dd}{\dd \tau} \tilde \theta(\tau) = \frac{\ii \gamma }{\tau - \ii \tau_{\rm c}} - \frac{\ii \gamma }{\tau + \ii \tau_{\rm c}} + \tilde \theta_{\rm r}'(\tau),
\ee
for some $\gamma \in \R$ and $\tau_{\rm c} > 0$,
where $\tilde \theta_{\rm r}(\tau)$ has no singularities in $\{ z \in \bbC: |\Im(z)| \leq \tau_{\rm c} \}$, and only singularities of order smaller than one at $\pm \ii \tau_{\rm c}$.
As has first been observed in \cite{BeLi93}, \eqref{nat scale} is the form of $\tilde\theta$ 
for a large class of models, including the generic ones. For further discussion we refer to  \cite{BeLi93, BT05-2}.

From (\ref{rec start 2}) we now conclude
\[
y_1(q) = - \ii \frac{\theta'(q)}{4 \rho(q)} = - \ii \frac{\tilde \theta'(\tau(q))}{2} = - \frac{\ii}{2} \left(
\frac{\ii \gamma }{\tau(q) - \ii \tau_{\rm c}} - \frac{\ii \gamma }{\tau(q) + \ii \tau_{\rm c}} + \tilde \theta_{\rm r}'(\tau(q)) \right).
\]
Let us write $\kappa_n^{0,\pm}$ for the coefficient of $\kappa_n^{\pm}$ belonging to $p^n$. Then by \eqref{kappa formula},
we have
\[
\kappa_n^{0, \pm}(q) = \left\{ \begin{array}{ll} \mp 2 \rho(q)  	x_n^0(q) & \text{if $n$ is even,}\\
										- 2 \rho(q) 	y_n^0(q) & \text{if  $n$ is odd.}
											\end{array}
											\right.
\]
Then by the results from \cite{BT05-1, BT05-2}, there exists $\beta>0$ such that
\begin{eqnarray}
\kappa_n^{0,\pm}(q) &=& - \alpha(n) \rho(q) \ii^n (\pm 1)^{n+1}
\partial_{\tau}^n \tilde \theta(\tau) + (n-1)! \Or(n^{-\beta}))  \label{h_n^0}
\\
&=& \alpha(n) \rho(q) \ii^n (\pm \ii)^{n} (n-1)! \left( \frac{\ii \gamma }{(\tau(q) - \ii \tau_{\rm c})^n} - \frac{\ii \gamma }{(\tau(q) + \ii \tau_{\rm c})^n} \right) + (n-1)! \Or(n^{-\beta}) \nonumber\\
y_n^0(q) & = & (n-1)! \Or(n^{-\beta}). \nonumber
\end{eqnarray}
Above $\alpha(n) = \frac{\sin(\pi\gamma/2)}{\pi\gamma/2} (1 + \Or(1/n))$ is the universal prefactor for
the time-adiabatic transitions.

\subsection{Estimates for lower orders in $p$}

Unfortunately, the exact knowledge of the above asymptotics does not help us directly.
The reason is that the terms $p^n x_n^0(q)$ and $p^n y_n^0(q)$ appear not to constitute the leading order
contribution to $x_n(p,q)$ and $y_n(p,q)$ as $n \to \infty$. We expect this to be true in general, but it
can be verified in the Landau-Zener case
\cite{BG08}, where we have $\sup_q |x_n^m(q)| \sim c^m \Gamma(n+m/4) / \Gamma(n)$ for finite $m$ and large $n$. We
do not know the behaviour of $x_n^m$ when $m$ is of the order of $n$; however, since numerical calculations clearly
show that the latter case is where $x_n^m$ is maximal for fixed $n$, this is the regime that needs to be understood in order
to do exponential asymptotics.

In this work, all we can do is to prove a rough a-priori bound on the coefficients. While it is presumably not sharp, we will
see that it identifies regimes of momenta for the incoming wave function such that the highest order in $p$
determines the behaviour of the transition. We recall the norms introduced in \cite{BT05-1}, which we will use here.
For $\tau_{\rm c} > 0$ and $I \subset \R$, we define
\begin{equation} \label{norm}
\facnorm{f}{I,\alpha, \tau_{\rm c}} := \sup_{t \in I} \sup_{k \geq 0} \left| \partial^k f(t)
\right| \frac{\tau_{\rm c}^{\alpha + k}}{\Gamma(\alpha + k)} \leq \infty
\end{equation}
for a function $f \in C^\infty$ on the real line. We also define
\[
F_{\alpha,\tau_{\rm c}}(I) = \left \{ f \in C^{\infty}(I): \facnorm{f}{I,\alpha,\tau_{\rm c}} < \infty \right\}.
\]
When $\tau_{\rm c}$ and $I$ are fixed, we will simply write  $\facnorm{\cdot}{\alpha}$ and $F_\alpha$.
In \cite{BT05-1} we prove
\begin{gather}
\sup_{q \in I} \left| \partial^k f(q) \right| \leq \frac{\Gamma(\alpha+k)}{\tau_{\rm c}^{\alpha + k}}
\facnorm{f}{I,\alpha,\tau_{\rm c}} \quad \forall k\geq 0, \label{normprop 1} \\
\facnorm{f'}{I,\alpha+1,\tau_{\rm c}} \leq \facnorm{f}{I,\alpha,\tau_{\rm c}}, \label{normprop 2} \\
\facnorm{\int_s^t f(r) \, \dd r}{I,\alpha-1,\tau_{\rm c}} \leq \max \left\{ \frac{(\alpha-1) |t-s|}{\tau_{\rm c}}, 1 \right\} \facnorm{f}{\alpha}, \label{normprop 3}\\
\facnorm{fg}{I,\alpha+\beta,\tau_{\rm c}} \leq B(\alpha,\beta) \facnorm{f}{I,\alpha,\tau_{\rm c}}\facnorm{g}{I,\beta,\tau_{\rm c}},
\label{normprop 4}
\end{gather}
where $B(\alpha,\beta) = \Gamma(\alpha)\Gamma(\beta)/\Gamma(\alpha+\beta)$ is the Beta function.

For the following a-priori estimate on the coefficients, we only treat the special case of constant
eigenvalues, and without further loss of generality take $\rho = 1/2$. This is the case for which we will also derive the
explicit transitions and transition histories later; the important Landau-Zener case will be treated elsewhere \cite{BG08}.
By (\ref{nat scale}), we have $\theta' \in F_{(1)}$.

\begin{proposition} \label{higher m estimate}
Assume $\rho = 1/2$, and $\theta' \in F_{1}(I)$.
\begin{itemize}
\item[a)] $x_n^m \in F_{n}(I)$ for any interval $I$, and the same holds for
the other coefficients.
\item[b)] Assume further that $|I| \leq \min \{ \tau_{\rm c} / \facnorm{\theta'}{1},  \tau_{\rm c} / \facnorm{\theta'}{1}^2 \}$. Then for every $\alpha > 1/2$ there exists $C_\alpha > 0$ such that for all $m,n$ we have
\[
\facnorm{x_n^m}{I,n,\tau_{\rm c}}, \facnorm{y_n^m}{I,n,\tau_{\rm c}} \leq C_\alpha \frac{\Gamma(n+\alpha m)}{\Gamma(n)},
\]
and
\[
\facnorm{z_n^m}{I,n,\tau_{\rm c}}, \facnorm{w_n^m}{I,n,\tau_{\rm c}} \leq \frac{2 C_\alpha}{\facnorm{\theta'}{1}} \frac{\Gamma(n+\alpha m)}{\Gamma(n)}
\]
\end{itemize}
\end{proposition}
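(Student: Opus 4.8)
The plan is to prove both parts by induction on the index $n$, treating all coefficients of a given $n$ simultaneously, with the recursion of Proposition \ref{p coefficient recursion} at $\rho = \tfrac12$ as the engine and the calculus \eqref{normprop 1}--\eqref{normprop 4} for the bookkeeping. Two features of the constant-eigenvalue case will be used throughout. First, from \eqref{anbn} one gets $a_0 = \tfrac12$, $b_0 = 0$, $a_1 = 0$, $b_1 = -\theta'/2$, and inductively $a_j, b_j \in F_j(I)$ with $\facnorm{a_j}{j}, \facnorm{b_j}{j} \leq s_j$, where $s_{j+1} \leq s_j(1 + \facnorm{\theta'}{1}/j)$ by \eqref{normprop 2}, \eqref{normprop 4}; hence $s_j$ grows at most polynomially in $j$. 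Second, because $a_1 = 0$, the would-be same-index terms $a_1 w_n$ and $a_1 z_n$ in \eqref{z_n eq2}, \eqref{w_n eq2} drop out, so the scheme is genuinely triangular: at index $n+1$ one computes all $x_{n+1}^m$ (or $y_{n+1}^m$ when $n$ is even) from data of index $\leq n$, then all $z_{n+1}^m$, then all $w_{n+1}^m$, each from $x_{n+1}^{\bullet}$ and data of index $\leq n$.

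For part a), the base case $n=1$ is \eqref{rec start 2}. For the step, the recursion exhibits $x_{n+1}^m$ (or $y_{n+1}^m$) as a derivative of an index-$n$ coefficient, which lies in $F_{n+1}$ by \eqref{normprop 2}, plus finitely many terms $a_j\cdot(\text{index }n{+}1{-}j)$ or $b_j\cdot(\text{index }n{+}1{-}j)$, each in $F_j \cdot F_{n+1-j} \subseteq F_{n+1}$ by \eqref{normprop 4}. Applying the identical reasoning to \eqref{z_n eq2}, \eqref{w_n eq2} gives $(z_{n+1}^m)', (w_{n+1}^m)' \in F_{n+2}$, and integration via \eqref{normprop 3} then puts $z_{n+1}^m, w_{n+1}^m \in F_{n+1}$.

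Part b) is again induction on $n$, uniform in $m$. For the base, when $m=0$ the recursion reduces, after the change to the natural scale \eqref{nat}, to the time-adiabatic one solved in \cite{BT05-1, BT05-2}; in particular $\sup_n \facnorm{x_n^0}{n}$, $\sup_n \facnorm{y_n^0}{n}$, $\sup_n \facnorm{z_n^0}{n}$ are finite, so we may assume the asserted bounds for $m=0$ and for all indices $\leq n_0$ (some $n_0$ to be fixed) by choosing $C_\alpha$ large, using part a) for finiteness. Write $b(n,m) = \Gamma(n+\alpha m)/\Gamma(n)$. The step rests on three elementary observations about $\Gamma$-quotients: differentiation via \eqref{normprop 2} costs only the factor $b(n,m)/b(n+1,m) = n/(n+\alpha m) \leq 1$; a product $\theta'\cdot z$ (or $b_1 \cdot z$) produces through \eqref{normprop 4} the factor $B(1,n)\facnorm{\theta'}{1} = \facnorm{\theta'}{1}/n$, which cancels the $\facnorm{\theta'}{1}^{-1}$ in the $z,w$-bounds; and for $j\geq 1$, since $j \leq \floor{m/2}$ and $m \leq n+1$ force $n+1-j \geq (n+1)/2$, combining $B(j,n+1-j)$ with $\binom{n+1-m+j}{j} \leq \binom{n+1}{j}$ and the inequality $\Gamma(x+L) \geq x^L \Gamma(x)$ (valid for $x\geq 1$) bounds the $j$-th summand of \eqref{x_n eq2} by $c_j\,(c(\alpha)/(n+1)^{2\alpha})^j\, C_\alpha\, b(n+1,m)$, with $c_j$ (containing $s_j$) of polynomial growth; since $2\alpha > 1$ the sum over $j\geq 1$ is then $\Or((n+1)^{-2\alpha})\,C_\alpha\,b(n+1,m)$. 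Feeding these into \eqref{x_n eq2}, \eqref{y_n eq2} gives, for the only relevant values $m\geq 4$ (recall \eqref{m zeros}), $\facnorm{x_{n+1}^m}{n+1}, \facnorm{y_{n+1}^m}{n+1} \leq \bigl(\tfrac{n+2}{n+\alpha m} + o(1)\bigr) C_\alpha b(n+1,m)$, and $\alpha m \geq 4\alpha > 2$ makes the prefactor $\leq 1$ once $n \geq n_0$. For $z_{n+1}^m, w_{n+1}^m$ one bounds the derivative instead: its dominant term $\theta' x_{n+1}^m$ contributes $\facnorm{\theta'}{1}/(n+1)\cdot C_\alpha b(n+1,m)$ up to lower order, whereupon \eqref{normprop 3} produces the factor $\max\{(n+1)|I|/\tau_{\rm c}, 1\}$, which the hypothesis $|I| \leq \tau_{\rm c}/\facnorm{\theta'}{1}^2$ converts into $(2/\facnorm{\theta'}{1})\,C_\alpha b(n+1,m)$ for $n\geq n_0$ (the remaining case $m=2$, for $w$, being analogous and feeding on the $m=0$ data). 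Fixing $n_0$ large enough that all lower-order terms stay below the available slack, and then $C_\alpha$ large enough to absorb the finitely many indices $\leq n_0$, closes the induction.

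The step I expect to fight hardest with is the third observation above: obtaining a bound on the $j$-th summand that is at once summable in $j$ and uniform in $m$ over $1 \leq j \leq \floor{m/2}$, $m \leq n+1$. The saving grace is exactly that $j \leq m/2$ keeps $n+1-j$ above $(n+1)/2$, so all relevant $\Gamma$-arguments remain of order $n$ and $\Gamma(x+L)\geq x^L\Gamma(x)$ may be applied with $x \sim n$; and the exponent $j(1+2\alpha)$ of the power gained in this way strictly beats the $j$-th power produced by the binomial coefficient precisely because $\alpha > 1/2$, which is why that is the hypothesis on $\alpha$. A secondary nuisance is keeping the two conditions on $|I|$ and the two constants ($C_\alpha$ versus $2C_\alpha/\facnorm{\theta'}{1}$) coordinated, so that every factor of $\theta'$ (or $b_1$) is either neutralised by a Beta-function gain of $1/n$ or converted by the $|I|$-bound, and powers of $\facnorm{\theta'}{1}$ never pile up.
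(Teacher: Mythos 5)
Your proposal is correct and follows essentially the same route as the paper's proof: induction through the recursion of Proposition \ref{p coefficient recursion} with the norm calculus \eqref{normprop 1}--\eqref{normprop 4}, the $m=0$ base case imported from \cite{BT05-1}, the $j\geq 1$ sums absorbed because the Beta/Gamma gain of order $n^{-(1+2\alpha)j}$ beats the binomial's $n^{j}$ precisely when $\alpha>1/2$, and the $z,w$ bounds obtained by integrating via \eqref{normprop 3} and the hypotheses on $|I|$. Your explicit appeal to \eqref{m zeros} (so that $\alpha m\geq 4\alpha>2$ in the $x,y$ step) and to $a_1=0$ for the triangular ordering are minor sharpenings of points the paper leaves implicit, and the only blemish --- $\Gamma(x+L)\geq x^{L}\Gamma(x)$ is not literally valid for non-integer $L$ near $x=1$ --- is harmless here, since with $x$ of order $n$ the correction factor $\e{L/x}$ is bounded by a constant to the power $j$ and is absorbed into your $c(\alpha)^{j}$.
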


\begin{proof}
We will proceed inductively and use Proposition \ref{p coefficient recursion}. To this end, let us first note that by
Lemma \ref{V derivatives} and (\ref{normprop 2}), (\ref{normprop 4}), an easy induction shows that $a_n$ and $b_n$ are in
$F_{n}$, with $\facnorm{a_n}{n}, \facnorm{b_n}{n} \leq c_0 \ln (n+1)$ for some $c_0 > 0$. Thus, by Proposition
\ref{p coefficient recursion} and (\ref{normprop 2}), (\ref{normprop 4}), we have $x_n \in F_{n}$, and the same for all
other coefficients. It remains to give bounds on the actual size of the norms. For $m=0$, the bounds claimed in b)
were given in
\cite{BT05-1}, cf.\ Theorem 2 there. Now let us assume that the claim b) holds up to $m-1$ and up to $n$. Let us write
\[
S_n^m := \sum_{j=1}^{\floor{m/2}} \frac{1}{(2\ii)^j} \binom{n+1-m+j}{j} (b_j y_{n+1-j}^{m-2j} - a_j w_{n+1-j}^{m-2j}).
\]
Then
\[
\begin{split}
& \facnorm{S_n^m}{n+1} \leq  \sum_{j=1}^{\floor{m/2}} \frac{1}{2^j} \frac{\Gamma(n+2-m+j)}
{\Gamma(n+2-m) \Gamma(j+1) } \frac{\Gamma(n+1-j) \Gamma(j)}{\Gamma(n+1)}
\times \\
& \times (\facnorm{b_j}{j} \facnorm{y_{n+1-j}^{m-2j}}{n+1-j} +
\facnorm{a_j}{j} \facnorm{w_{n+1-j}^{m-2j}}{n+1-j}) \leq C_\alpha c_0 \left(1 + \frac{2}{\facnorm{\theta'}{1}}\right)
\times \\
& \times
\sum_{j=1}^{\floor{m/2}} \frac{1}{2^j} \frac{\ln j}{j} \frac{\Gamma(n+2-m+j) \Gamma(n+1-j) \Gamma(n+1+\alpha m-(1+2 \alpha)j)}
{\Gamma(n+2-m) \Gamma(n+1) \Gamma(n+1-j)}.
\end{split}
\]
Clearly, the fraction of Gamma functions with $j=1$ is the largest of all, and thus the factor $1/2^j$ allows to estimate the sum through twice its first term, giving
\[
\facnorm{S_n^m}{n+1} \leq C_\alpha \tilde C \frac{(n+2-m) \Gamma(n+\alpha m - 2 \alpha)}{\Gamma(n+1)},
\]
where $\tilde C$ does not depend on $n$ or $m$.
The same holds for all the other sums appearing in Proposition \ref{p coefficient recursion}. Using the recurrence relation there, we have
\[
\begin{split}
\facnorm{y_{n+1}^m}{n+1} & \leq \facnorm{x_n^m}{n} + \frac{\facnorm{\theta'}{1}}{n} \facnorm{z_n^m}{n} +
C_\alpha \tilde C \frac{(n+2-m) \Gamma(n+\alpha m - 2 \alpha)}{\Gamma(n+1)} \leq \\
& \leq C_\alpha \left( \frac{\Gamma(n+\alpha m)}{\Gamma(n)} + 2 \frac{\Gamma(n + \alpha m)}{\Gamma(n+1)} + \tilde C \frac{(n+2-m) \Gamma(n+\alpha m - 2 \alpha)}{\Gamma(n+1)} \right) = \\
& = C_\alpha \frac{\Gamma(n+1+\alpha m)}{\Gamma(n+1)} \left( \frac{n+2}{n+1+\alpha m} + \tilde C  \frac{(n+2-m) \Gamma(n+\alpha m - 2 \alpha)}{\Gamma(n+1 + \alpha m)}\right).
\end{split}
\]
The last term in the bracket above is $\Or((n+ \alpha m)^{- 2 \alpha})$, and as $\alpha > 1/2$, it vanishes faster than $1/n$.
Thus the bracket becomes smaller than one for large enough $n$. By choosing $C_\alpha$ so large that the induction hypothesis
holds up to this $n$, we have shown the induction step for $y_n^m$. The argument for $x_n^m$ is similar and simpler. As for $z_n^m$, we have
\[
\facnorm{{z_n^m}'}{n+1} \leq \frac{1}{n} \facnorm{\theta'}{1} \facnorm{x_n^m}{n} + C_\alpha \tilde C \frac{(n+2-m) \Gamma(n+\alpha m - 2 \alpha)}{\Gamma(n+1)},
\]
and using (\ref{normprop 3}), we see that
\[
\begin{split}
\facnorm{z_n^m}{n} & \leq C_\alpha \frac{|I|}{q_c} \left( \facnorm{\theta'}{1} \facnorm{x_n^m}{n} +  \tilde C \frac{(n+2-m) \Gamma(n+\alpha m - 2 \alpha)}{\Gamma(n)} \right)\\
& \leq C_\alpha \frac{\Gamma(n + \alpha m)}{\Gamma(n)} \left(1 +  \tilde C \frac{(n+2-m) \Gamma(n+\alpha m - 2 \alpha)}{\Gamma(n+ \alpha m)} \right).
\end{split}
\]
The last bracket will  be bounded by 2 for large enough $n$, and so the same reasoning as above shows the induction step
for $z_n^m$. The proof for $w_n^m$ is similar and simpler.
\end{proof}

By piecing together intervals as given in Proposition \ref{higher m estimate} b), and using (\ref{normprop 1}), we obtain
\begin{corollary} \label{large m growth}
For any compact interval $I$ and any $\alpha > 1/2$, there exists a constant $C$ such that
\[
\sup_{q \in I} |x_n^m(q)| \leq C \frac{\Gamma(n + \alpha m)}{\tau_{\rm c}^n},
\]
and the same for $z_n^m$.
\end{corollary}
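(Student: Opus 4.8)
The plan is to obtain the corollary from Proposition \ref{higher m estimate} b) by a finite covering argument; the only point is that that proposition controls the factorial-type norm only on short intervals, whereas $I$ is an arbitrary compact interval.

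First I would fix $\alpha > 1/2$ and $I$, and partition $I$ into $N$ closed subintervals $I_1,\dots,I_N$, each of length at most $\min\{\tau_{\rm c}/\facnorm{\theta'}{1},\,\tau_{\rm c}/\facnorm{\theta'}{1}^2\}$, so that $N$ depends only on $|I|$ and the fixed data $\tau_{\rm c},\facnorm{\theta'}{1}$. Since $\facnorm{\theta'}{I_k,1,\tau_{\rm c}}\leq\facnorm{\theta'}{I,1,\tau_{\rm c}}$ (the norm being a supremum over the interval), each $I_k$ still satisfies the hypothesis of Proposition \ref{higher m estimate} b), which therefore supplies a constant $C_\alpha^{(k)}$ with $\facnorm{x_n^m}{I_k,n,\tau_{\rm c}}\leq C_\alpha^{(k)}\,\Gamma(n+\alpha m)/\Gamma(n)$ for all $m,n$, and analogously for $z_n^m$ (up to the harmless extra factor $2/\facnorm{\theta'}{1}$). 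Next I would feed these into (\ref{normprop 1}) with $k=0$ and norm index $n$, which gives, for each $k$,
\[
\sup_{q\in I_k}|x_n^m(q)| \;\leq\; \frac{\Gamma(n)}{\tau_{\rm c}^{\,n}}\,\facnorm{x_n^m}{I_k,n,\tau_{\rm c}} \;\leq\; C_\alpha^{(k)}\,\frac{\Gamma(n+\alpha m)}{\tau_{\rm c}^{\,n}},
\]
and similarly for $z_n^m$. Taking the maximum over $k=1,\dots,N$ and setting $C:=\max_k C_\alpha^{(k)}$ (respectively $C:=(2/\facnorm{\theta'}{1})\max_k C_\alpha^{(k)}$ for $z_n^m$) then yields $\sup_{q\in I}|x_n^m(q)|\leq C\,\Gamma(n+\alpha m)/\tau_{\rm c}^{\,n}$ for all $m,n$, which is the assertion.

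Since everything rests on Proposition \ref{higher m estimate}, I do not expect any genuine obstacle here; the only thing to monitor is that both the number $N$ of subintervals and the final constant $C$ be independent of $n$ and $m$. This is automatic: the length threshold depends only on $\tau_{\rm c}$ and $\facnorm{\theta'}{1}$, the bound from Proposition \ref{higher m estimate} b) is uniform in $m,n$ by hypothesis, and the finitely many pieces are combined with a maximum rather than a sum, so no growth of the constant is incurred.
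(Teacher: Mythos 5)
Your argument is correct and is exactly the paper's intended proof: the paper simply says ``by piecing together intervals as given in Proposition \ref{higher m estimate} b), and using (\ref{normprop 1})'', which is precisely your finite partition of $I$ into subintervals below the length threshold, application of the uniform-in-$(n,m)$ bound on each piece, conversion via (\ref{normprop 1}) with $k=0$, and a maximum over the finitely many constants. Nothing further is needed.
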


\subsection{Coupling function for high momenta}
The final result of this section is about the asymptotic shape of the coupling functions $\kappa^{0,\pm}_n$ given in
Theorem \ref{h_n lowest order}, for high momenta. From (\ref{xn order decomposition}) it is clear that choosing $p$ large enough will suffice to counter the growth of the coefficients as given in Corollary \ref{large m growth}. To obtain precise
statements, let us first note that by choosing
\be \label{optimal n}
n = \frac{\tau_{\rm c}}{\eps p},
\ee
we obtain
\be \label{opt trunc}
\eps^n p^n x_n^0(q) = 2 \ii \sqrt{2 \eps}{\pi \tau_{\rm c}} \sin(\pi\gamma/2) \e{-\frac{\tau_{\rm c}}{\eps p}}
\e{- \frac{q^2}{2 \eps p \tau_{\rm c}}} \e{\frac{\ii}{\eps p} q} (1 + \Or((\eps p)^{1/2 - \delta})).
\ee
This follows directly from the results in \cite{BT05-1}. Recall also that under the assumptions $\theta' \in F_{1}$ and
$\rho = 1/2$, we have $p^n x_n^0 \sim p^n \frac{(n-1)!}{\tau_{\rm c}^n}$ at its maximum. We will now specify the regime of $p$ for which
this is also the leading order behaviour.

\begin{proposition} \label{large p dominance}
Assume $\rho = 1/2$ and $\theta \in F_{1}$, and $I \subset \R$ compact.
Assume further that $p = \eps^{- \beta}$, with $1/3 < \beta < 1$. Then there exists $\delta > 0$ such that
for all $n \leq \frac{\tau_{\rm c}}{p \eps}$, we have
\[
\eps^n \sum_{m=1}^n p^{n-m} x_n^m(q) = \eps^n p^n \frac{(n-1)!}{\tau_{\rm c}^n} \Or( \eps^\delta).
\]
\end{proposition}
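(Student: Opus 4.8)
The plan is to bound the sum over $m\geq1$ term by term with Corollary~\ref{large m growth}, and compare it against the leading ($m=0$) contribution, whose size $\eps^n p^n (n-1)!/\tau_{\rm c}^n$ is already known from \cite{BT05-1}. The point is that the a~priori bound only permits $x_n^m$ to exceed $x_n^0$ by a factor $\Gamma(n+\alpha m)/\Gamma(n)$, for any $\alpha>1/2$, whereas in the sum this coefficient is weighted by $p^{-m}=\eps^{\beta m}$; the hypothesis $\beta>1/3$ is exactly what is needed for the weight to beat the growth.

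More concretely, I would first fix one $\alpha>1/2$, to be chosen below, and use Corollary~\ref{large m growth} to get a constant $C=C(I,\alpha)$ with $\sup_{q\in I}|x_n^m(q)|\leq C\,\Gamma(n+\alpha m)/\tau_{\rm c}^n$ for all $m,n$. Pulling the $m=0$ term out in front, the triangle inequality yields
\[
\sup_{q\in I}\Big|\eps^n\sum_{m=1}^n p^{\,n-m}x_n^m(q)\Big|\;\leq\;C\,\eps^n p^n\frac{(n-1)!}{\tau_{\rm c}^n}\sum_{m=1}^n p^{-m}\,\frac{\Gamma(n+\alpha m)}{\Gamma(n)}.
\]
Next I would bound the Gamma ratio by log-convexity of $\Gamma$: for $n\geq1$ and $\alpha m\geq0$ one has $\Gamma(n+\alpha m)/\Gamma(n)\leq(n+\alpha m)^{\alpha m}$. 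Then, using the standing assumption $n\leq\tau_{\rm c}/(p\eps)$ and $p=\eps^{-\beta}$, we have $n+\alpha m\leq(1+\alpha)n\leq(1+\alpha)\tau_{\rm c}\,\eps^{-(1-\beta)}$, and hence
\[
p^{-m}\,\frac{\Gamma(n+\alpha m)}{\Gamma(n)}\;\leq\;\big((1+\alpha)\tau_{\rm c}\big)^{\alpha m}\,\eps^{\beta m}\,\eps^{-(1-\beta)\alpha m}\;=\;K^m\,\eps^{(\beta(1+\alpha)-\alpha)m},
\]
with $K:=((1+\alpha)\tau_{\rm c})^\alpha$ independent of $m$, $n$ and $\eps$. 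Since $\alpha/(1+\alpha)\searrow1/3$ as $\alpha\searrow1/2$ and $\beta>1/3$, I would choose $\alpha>1/2$ close enough to $1/2$ that $\delta:=\beta(1+\alpha)-\alpha>0$; the constant $C$ is then fixed. For $\eps$ so small that $K\eps^\delta\leq1/2$, the remaining sum is dominated by the geometric series $\sum_{m\geq1}(K\eps^\delta)^m\leq2K\eps^\delta=\Or(\eps^\delta)$, which is uniform in $n\leq\tau_{\rm c}/(p\eps)$ and in $q\in I$. This is exactly the asserted estimate. The analogous statement for $y_n^m$ (which is what enters $\kappa_{n+1}^\pm$ for odd $n$) follows in the same way, using the bound on $y_n^m$ from Proposition~\ref{higher m estimate}(b) and (\ref{normprop 1}).

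The calculation itself is short, so the real obstacle is the accounting of which power of $\eps$ survives, and that is precisely where the restriction $\beta>1/3$ comes from: it is the price of feeding the a~priori bound $\Gamma(n+\alpha m)/\Gamma(n)$, valid only for $\alpha>1/2$, into the weighted sum. If the sharper growth rate $\Gamma(n+m/4)/\Gamma(n)$ expected in the remark before the proposition were available, the same scheme would instead give the improved threshold $\beta>1/5$; but with Corollary~\ref{large m growth} as the only rigorous input, $1/3$ is the best one obtains. A minor point still to check is the endpoint $m=n$, where $p^{\,n-m}=1$: it is covered by the same bound, the corresponding summand being $(K\eps^\delta)^n$, i.e.\ part of the same geometric tail.
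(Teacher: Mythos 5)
Your proof is correct and follows essentially the same route as the paper's: both feed the a~priori bound of Corollary~\ref{large m growth} (with $\alpha>1/2$ chosen so that $\alpha/(1+\alpha)<\beta$, equivalently $\alpha<\beta/(1-\beta)$) into the weighted sum and let $p^{-m}=\eps^{\beta m}$ beat the Gamma-ratio growth. Your version is in fact slightly cleaner in the bookkeeping, using the log-convexity bound $\Gamma(n+\alpha m)/\Gamma(n)\leq(n+\alpha m)^{\alpha m}$ and a geometric series uniformly in $n\leq\tau_{\rm c}/(p\eps)$, which makes the exponent $\delta=\beta(1+\alpha)-\alpha$ explicit where the paper only argues negativity of the Stirling exponent after reducing to the maximal $n$.
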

\begin{proof}
By Corollary \ref{large m growth} and Stirlings formula, we have
\[
\left| \frac{\tau_{\rm c}^n}{p^n (n-1)!} \sum_{m=1}^n p^{n-m} x_n^m(q)  \right| \leq
\sum_{m=1}^n \frac{\Gamma(n+ \alpha m)}{\Gamma(n) p^m} \leq c \sum_{m=1}^n \frac{(n+ \alpha m)^{n+\alpha m}
\e{- \alpha m}}{n^n p^m}
\]
for any $\alpha > 1/2$. Clearly, this is largest for the maximal value $n = \frac{\tau_{\rm c}}{p \eps}$,
so it suffices to treat this case.
Inserting $\eps = p^{-1/\beta}$ into $n = \frac{\tau_{\rm c}}{p \eps}$ gives
\[
p = \tau_{\rm c}^{-\frac{\beta}{1 - \beta}} n^{\frac{\beta}{1 - \beta}}.
\]
Thus
\[
\begin{split}
\frac{(n+ \alpha m)^{n+\alpha m} }{n^n p^m} & \leq \frac{(n+ \alpha m)^{n+\alpha m} }{n^{n + \frac{\beta}{1 - \beta} m}}
\tau_{\rm c}^{\frac{\beta}{1 - \beta} m}\\
&= \exp \left( (n + \alpha m) \ln (n + \alpha m) - (n+ \frac{\beta}{1 - \beta}m) \ln (n) +
m \frac{\beta}{1-\beta} \ln \tau_{\rm c} \right).
\end{split}
\]
Now when $\beta > 1/3$, we can pick $\alpha > 1/2$ such that $\beta/(1-\beta) > \alpha$, and the exponent becomes
negative for large enough $n$, and all $m < n$. The factor $\e{- \alpha m}$ in the sum above then guarantees
summability up to $m=n$ without losing more than a constant, and the proof is finished.
\end{proof}

Together with (\ref{opt trunc}), the previous result immediately gives
\begin{corollary}
Define $\kappa^{0,\pm}_n$ as in Theorem \ref{h_n lowest order}.
We make the same assumptions as in Proposition \ref{large p dominance}, and put $n = \frac{\tau_{\rm c}}{p \eps}$. Then
\be \label{asympt coupling}
\kappa^{0,\pm}_n(p,q) =  \mp 2 \ii \sqrt{2 \eps}{\pi \tau_{\rm c}} \sin(\pi\gamma/2) \e{-\frac{\tau_{\rm c}}{\eps p}}
\e{- \frac{q^2}{2 \eps p \tau_{\rm c}}} \e{\frac{\ii}{\eps p} q} (1 + \Or(\eps^{\delta})),
\ee
for some $\delta > 0$.
\end{corollary}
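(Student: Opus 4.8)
The plan is to read off (\ref{asympt coupling}) by combining the exact leading-order identity (\ref{opt trunc}) with the dominance estimate of Proposition \ref{large p dominance}; this is essentially bookkeeping of error exponents, so I keep it brief. First I would unwind the definitions. By Theorem \ref{h_n lowest order} the off-diagonal coupling of $H_n$ is $\eps^{n}\kappa^{\pm}_{n}(p,q)$ with $\kappa^{\pm}_{n}=-2\rho\,(y_{n}\pm x_{n})$, and by (\ref{rec zeros}), for even $n$ one has $y_n\equiv 0$, so $\kappa^{\pm}_n=\mp 2\rho\,x_n$; the odd couplings are of lower order by (\ref{h_n^0}), so I would fix $n$ to be the even integer nearest the optimal truncation $\tau_{\rm c}/(\eps p)$ of (\ref{optimal n}). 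Expanding $x_n$ in powers of $p$ via (\ref{xn order decomposition}) then gives
\[
\eps^{n}\kappa^{\pm}_{n}(p,q)\;=\;\mp 2\rho(q)\Bigl(\eps^{n}p^{n}x_n^{0}(q)\;+\;\eps^{n}\!\sum_{m=1}^{n}p^{\,n-m}x_n^{m}(q)\Bigr).
\]

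For the first summand I would invoke (\ref{opt trunc}), valid under the present hypotheses $\rho=1/2$, $\theta'\in F_{1}$ and $n=\tau_{\rm c}/(\eps p)$: it identifies $\mp 2\rho(q)\,\eps^{n}p^{n}x_n^{0}(q)$ with exactly the right-hand side of (\ref{asympt coupling}), but carrying the weaker relative error $\Or((\eps p)^{1/2-\delta})$. Since $p=\eps^{-\beta}$ with $\beta<1$, one has $(\eps p)^{1/2-\delta}=\eps^{(1-\beta)(1/2-\delta)}$, a strictly positive power of $\eps$ for $\delta$ small, so this contribution already has the form $1+\Or(\eps^{\delta'})$.

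It remains to absorb the lower-order-in-$p$ tail into the error. By Proposition \ref{large p dominance}, for $1/3<\beta<1$ and $n\leq\tau_{\rm c}/(\eps p)$ this tail equals $\eps^{n}p^{n}\tfrac{(n-1)!}{\tau_{\rm c}^{n}}\,\Or(\eps^{\delta})$. The comparison I would then use is the Stirling identity: with $n=\tau_{\rm c}/(\eps p)$ we have $n\eps p=\tau_{\rm c}$, hence
\[
\eps^{n}p^{n}\tfrac{(n-1)!}{\tau_{\rm c}^{n}}\;=\;\tfrac{\Gamma(n)}{n^{n}}\;=\;\Theta\!\bigl(\sqrt{\eps p}\;\e{-\tau_{\rm c}/(\eps p)}\bigr),
\]
which, up to the Gaussian factor $\e{-q^{2}/(2\eps p\tau_{\rm c})}$ and a power $\sqrt{p}=\eps^{-\beta/2}$, is comparable to the leading profile in (\ref{opt trunc}). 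Thus on the effective support of the coupling --- a shrinking neighbourhood of the crossing point of width $\Or(\sqrt{\eps p})$, outside which every term is already super-polynomially small --- the tail is a relative $\Or(\eps^{\delta-\beta/2})$ correction. Combining this with the $\Or(\eps^{\delta'})$ above and relabelling the exponent yields (\ref{asympt coupling}).

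I expect the one genuinely delicate point to be this last relative-error comparison. On one hand, the a-priori bound behind Proposition \ref{large p dominance} (Corollary \ref{large m growth}) is uniform in $q\in I$, whereas the leading term decays like a Gaussian in $q$, so "$1+\Or(\eps^{\delta})$" should be understood on the set where the coupling is not already negligible. On the other hand, one must check that the positive power $\delta$ furnished by the proof of Proposition \ref{large p dominance} actually beats $\eps^{-\beta/2}$; tracking the choice $\beta/(1-\beta)>\alpha>1/2$ there, $\delta$ may be taken close to $(3\beta-1)/2$, and I would verify that this is large enough for the range of $\beta$ (equivalently, of momenta) of interest. Everything else --- the expansion in powers of $p$, the parity reduction to even $n$, and the Stirling estimate --- is routine.
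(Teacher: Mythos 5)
Your route is the paper's route: the corollary is obtained there exactly as you propose, by splitting off the $m=0$ term, identifying it via (\ref{opt trunc}), and absorbing the tail $\eps^n\sum_{m\ge 1}p^{n-m}x_n^m$ by Proposition \ref{large p dominance}; your parity reduction and rounding of $n$ to an integer are harmless, since the optimally truncated quantity is flat in $n$ near $n=\tau_{\rm c}/(\eps p)$.

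The one step that, as written, does not close on the whole range $1/3<\beta<1$ is your claimed relative loss of $\sqrt{p}=\eps^{-\beta/2}$, which you then propose to offset against $\delta\approx(3\beta-1)/2$ (this would require $\beta>1/2$). That loss is spurious. Proposition \ref{large p dominance} normalises the tail by exactly $\eps^n p^n (n-1)!/\tau_{\rm c}^n$, and by (\ref{h_n^0}) — or by the Stirling evaluation you yourself perform, $\Gamma(n)/n^n=\Theta\bigl(\sqrt{\eps p}\,\e{-\tau_{\rm c}/(\eps p)}\bigr)$ — this quantity is, up to an $n$-independent constant of order one and the Gaussian profile in $q$, precisely the modulus of the $m=0$ term $\eps^n p^n x_n^0(q)$ near the crossing point. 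So the tail is a relative $\Or(\eps^{\delta})$ correction directly, with no condition $\delta>\beta/2$; the mismatch you saw comes from comparing against the prefactor of (\ref{opt trunc}) as literally printed (the garbled $\sqrt{2\eps}{\pi\tau_{\rm c}}$, which omits the factor $\sqrt{p}$ that your own Stirling computation produces, a defect inherited by (\ref{asympt coupling})), not from the substance of the estimate. Your other caveat is genuine but is shared with the paper: Corollary \ref{large m growth} is a sup-bound uniform in $q\in I$ while the leading term carries the factor $\e{-q^2/(2\eps p\tau_{\rm c})}$, so the relative error in (\ref{asympt coupling}) can only be asserted where that Gaussian is not already negligible; the paper's ``immediately gives'' passes over exactly this point.
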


This result is not completely satisfactory for several reasons. Firstly, in is only valid in the special case of constant
eigenvalues $\rho$. However, an extension to general forms of the potential energy is merely a matter of, albeit laborious,
routine. For the Landau-Zener model, this is addressed in a work in progress \cite{BG08}. Secondly, our result only holds
for momenta that scale with $\eps$, while ideally we would like to have asymptotics for fixed momentum. To get these appears
to be a formidable problem, as knowledge about the true asymptotic behaviour of the $x_n^m$ for higher $m$ is needed. We
have put many efforts into that problem, with little tangible results. Thirdly, and most importantly from the applied
point of view, it is not the  asymptotic shape of the coupling functions we are interested in, but rather the shape
of the time-dependent transmitted wave function. To obtain these, we will either need to solve the Wigner equation,
or translate back into the language of operators using the Weyl quantisation. In both cases, it is inconvenient that
the expression (\ref{asympt coupling}) contains terms of the form $\e{-q^2/(\eps p)}$, since these are in none of
the usual symbol classes \cite{DiSj99,Ma02}. There is no theory, and worse, no calculus for dealing with such symbols.

The way out of this dilemma is to {\em not} apply optimal truncation until the very end: we do a Weyl qunatisation
for finite $n$, solve the corresponding PDE, and decide in the end how large we want $n$ to be. This is the content of the final section.

\section{Transitions} \label{sec trans hist}

We now discuss transitions in the superadiabatic representations given in Theorem \ref{h_n lowest order}.
Since we do not have asymptotic information about the coefficients $x_n^m(q)$ for large $n$ and $m$, we cannot
treat the most desirable case of asymptotics in $\eps$ for fixed momentum $p$. Therefore, except in Section \ref{general transhist}, we will work in the spirit of the previous subsection and treat only large $p$. The approximation then consists in retaining only the terms $x_n^0$ in the
more explicit formulas we provide. It will turn out that for those values of $\eps$ where numerics
of the exact Schr\"odinger evolution are feasible,
our momentum does not have to be particularly large. 
These are also the cases where the result is not too small to be physically relevant. 
Concretely, we are talking about $\eps$ between $1/10$ and $1/50$, and $p$ between $2$ and $5$. 
We refrain from proving error estimates in this section, as they
would be difficult to obtain and then could only hold for large $p$; rigorous treatment of such a restricted case would add something to our mathematical understanding, but not enough for us to consider it worth the effort. The value to practical applications of the formal calculations below is, on the other hand, potentially large.

\subsection{General transition histories} \label{general transhist}
Our starting point is the Schr\"odinger equation in the $n$-th superadiabatic representation. Assuming that the potential $V(x)$ approaches a constant matrix at spatial infinity quickly enough
guarantees that all of the superadiabatic subspaces approach the adiabatic subspaces for large $x$. More precisely we assume for the moment that the limits $\lim_{q\to\pm\infty}\rho(q)$ exist, that
\[
\lim_{|q|\to\infty} \kappa_{n+1}^\pm(q,p) = 0
\]
and that the limits are approached sufficiently fast.
We now study transitions between the superadiabatic subspaces for solutions that are asymptotically, for $t\to-\infty$, in the upper adiabatic subspace.
More precisely, consider the Schr\"odinger equation
\be \label{plain schroedinger n}
\ii \eps \partial_t \bspsi(x,t) = \caW_{\eps}(H_n(\eps,q,p)) \bspsi(x,t)\,,
\ee
where
  $\caW_\eps (H_n)$ is the Weyl quantization of the symbol 
 \[
 H_n = \left(\begin{array}{cc} 
\frac{p^2}{2}+ \rho + \Or(\eps^2) & 0\\[2mm]
0 &\frac{p^2}{2} - \rho + \Or(\eps^2)
 \end{array}\right) +
  \left(\begin{array}{cc} 
 0 & \eps^{n+1}\kappa_{n+1}^+ \\[2mm]
  \eps^{n+1}\kappa_{n+1}^-   &0
 \end{array}\right)+ \Or(\eps^{n+2})\,.
 \]
  Here we split $H_n(q,p)$, and in the same way also $\caW_\eps (H_n)$, into a diagonal part   of order one and an off diagonal ``coupling'' part   of order $\eps^{n+1}$. 
   Writing
\[
\bspsi(t,x) = \left( \begin{array}{cc} \psi_{+}(t,x) \\ \psi_{-}(t,x) \end{array} \right)\,,
\]
we can now use first order time-dependent perturbation theory in order to determine $\psi_-(t,x)$  up to errors of order $\eps^{n+1}$:
The perturbative solution of (\ref{plain schroedinger n}) with initial datum
$\psi_+(0,x) = \psi_{+,0}(x)$ and boundary condition $\lim_{t\to-\infty} \|\psi_-(t)\| = 0$ is
\[
\psi_+(t,x) =( \e{-\frac{\ii}{\eps} H_+t}\psi_{+,0})(x)  +\Or(\eps)
\]
and 
\be \label{psi_- sol}
\psi_-(x,t) = - \ii \eps^{n} \int_{-\infty}^t \left( \e{-\frac{\ii}{\eps}(t-s) H_-} \big( \caW_\eps \kappa_{n+1}^{-} \big)
\e{-\frac{\ii}{\eps} s H_+} \psi_{+,0} \right) (x) \, \dd s + \Or(\eps^{n+1})\,.
\ee
Here 
\[
\kappa_{n+1}^{-}(p,q) = - 2\rho(q) (- x_{n+1}(p,q) + y_{n+1}(p,q))
\]
is as given in Theorem \ref{h_n lowest order}, and 
\[
\quad H_\pm \psi(x) = \left(-  {\textstyle \frac{\eps^2}{2}}\partial_x^2 \pm \rho(x)\right) \psi(x)\,.
\]



In view of formula (\ref{psi_- sol}), we need the Weyl-quantisation of $\kappa_{n+1}^{-}(p,q)$.
Given that $\kappa_{n+1}^{-}$ is a polynomial in $p$, we will be interested in a general formula for the Weyl quantisation
of a symbol of the form $p^m g(q)$. Moreover, we will later need the Fourier representation of the operator
$\caW_\eps(\kappa_{n+1}^{-})$.\\[1mm]
To this end we define the {\bf scaled Fourier transform}
\begin{equation} \label{eps ft}
\epsft{f}(k) = \frac{1}{\sqrt{\eps}}\, \hat f\left(\frac{k}{\eps}\right) = \frac{1}{\sqrt{2 \pi \eps}} \int \e{-\frac{\ii}{\eps}kx} f(x) \, \dd x.
\end{equation}

\begin{lemma} \label{weyl p^m g}
Let $\kappa(p,q) = p^m g(q)$ be a semiclassical symbol for some $m \in \N$. Then
\be \label{weyl quant p^m g}
\epsft{(\caW_\eps \kappa) \psi}(k) = \frac{1}{\sqrt{2 \pi \eps}} \int_\R \epsft{g}(k-\eta)
\left( \frac{\eta + k}{2} \right)^m \epsft \psi(\eta) \dd \eta.
\ee
\end{lemma}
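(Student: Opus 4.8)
The plan is to compute the Weyl quantisation of the monomial symbol $\kappa(p,q) = p^m g(q)$ directly from the defining integral \eqref{weyl quant}, and then take the scaled Fourier transform \eqref{eps ft}. First I would insert $\kappa$ into \eqref{weyl quant}: since the symbol is evaluated at momentum $\xi$, we get
\[
(\caW_\eps\kappa\,\psi)(x) = \frac{1}{2\pi\eps}\int_{\R^2}\dd\xi\,\dd y\;\xi^m\,g\!\left(\tfrac{x+y}{2}\right)\e{\frac{\ii}{\eps}\xi(x-y)}\psi(y).
\]
Then I would apply $\epsft{\cdot}(k) = \frac{1}{\sqrt{2\pi\eps}}\int\e{-\frac{\ii}{\eps}kx}(\cdot)\,\dd x$, producing a triple integral in $x,y,\xi$. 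The natural change of variables is to pass to sum/difference coordinates, e.g.\ $u = \tfrac{x+y}{2}$ (the argument of $g$) and $r = x-y$, so that the phase factors combine into $\e{\frac{\ii}{\eps}(\xi - k)(x-y)}$ type expressions while the $g$-dependence is isolated in $u$.

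The key step is to recognise that the $r$-integral produces a delta function. After writing $x = u + r/2$, $y = u - r/2$, the combined exponent becomes $\frac{\ii}{\eps}\big(\xi r - k(u+r/2) \big)$ plus the contribution from $\psi$. It is cleanest to also introduce the scaled Fourier transform of $\psi$: write $\psi(y) = \frac{1}{\sqrt{2\pi\eps}}\int\e{\frac{\ii}{\eps}\eta y}\epsft{\psi}(\eta)\,\dd\eta$. Then the $r$-integral is $\int\e{\frac{\ii}{\eps}r(\xi - \frac{k+\eta}{2})}\dd r = 2\pi\eps\,\delta\!\big(\xi - \tfrac{k+\eta}{2}\big)$, which collapses the $\xi$-integral and forces $\xi = \tfrac{k+\eta}{2}$, hence the factor $\big(\tfrac{\eta+k}{2}\big)^m$. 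The remaining $u$-integral is $\int\e{\frac{\ii}{\eps}u(\eta - k)}g(u)\,\dd u$, which is exactly $\sqrt{2\pi\eps}\,\epsft{g}(k-\eta)$. Collecting the surviving powers of $(2\pi\eps)$ and $\sqrt{\eps}$ gives the claimed normalisation $\frac{1}{\sqrt{2\pi\eps}}$ in front of $\int_\R \epsft{g}(k-\eta)\big(\tfrac{\eta+k}{2}\big)^m\epsft{\psi}(\eta)\,\dd\eta$.

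The main obstacle — really the only point requiring care — is bookkeeping of the constants and of the scaled-versus-unscaled Fourier conventions, since \eqref{eps ft} carries an $\eps$-dependent normalisation and \eqref{weyl quant} carries a $\frac{1}{2\pi\eps}$; one must track how the $\delta$-function picks up a factor $2\pi\eps$ and how the $u$-integral over $g$ picks up another $\sqrt{2\pi\eps}$ so that exactly one $\frac{1}{\sqrt{2\pi\eps}}$ survives. A secondary point is justifying the interchange of integrals and the use of the oscillatory-integral identity $\int\e{\frac{\ii}{\eps}r s}\dd r = 2\pi\eps\,\delta(s)$; for $g$ and $\psi$ in a Schwartz-type class (which is the relevant setting here, $\kappa_{n+1}^-$ being polynomial in $p$ with smooth, decaying coefficients in $q$) this is standard and I would simply remark that all integrals converge absolutely after regularisation, or equivalently interpret the identity distributionally. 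No genuinely hard estimate is involved; the statement is essentially a normalisation computation, and I would present it as such.
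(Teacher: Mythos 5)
Your proposal is correct and is essentially the computation the paper itself performs: both insert the scaled Fourier representation of $\psi$ into \eqref{weyl quant}, shift to the variable $\tfrac{x+y}{2}$ so the $g$-integral becomes $\sqrt{2\pi\eps}\,\epsft{g}$, and use the identity \eqref{epsft delta} to collapse the $\xi$-integral onto $\xi=\tfrac{k+\eta}{2}$. The only difference is cosmetic ordering — you take symmetric coordinates $u=\tfrac{x+y}{2}$, $r=x-y$ and extract the delta from the $r$-integral first, whereas the paper evaluates the $g$-transform first and gets the delta from the final $x$-integration — and your bookkeeping of the $(2\pi\eps)$-factors is consistent with the stated normalisation.
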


\begin{proof}
We use (\ref{weyl quant}) with
\[
\psi(y) = \frac{1}{\sqrt{2 \pi \eps}} \int \e{\ii \eta y / \eps} \epsft{\psi}(\eta) \, \dd \eta
\]
in order to get
\begin{eqnarray*}
(\caW_\eps \kappa) \psi (x) &=& \frac{1}{(2 \pi \eps)^{3/2}} \int_{\R^3} \dd y \, \dd \xi \, \dd \eta \,
g \left({\textstyle \frac{1}{2}}(x+y) \right) \xi^m \e{\frac{\ii}{\eps} ( \xi(x-y) + \eta y)} \epsft{\psi}(\eta) = \\
&=& \frac{2}{(2 \pi \eps)^{3/2}} \int_{\R^3} \dd y \, \dd\xi \, \dd\eta \, g(y) \xi^m \e{\frac{\ii}{\eps}( \xi x + (2y - x)(\eta-\xi))} \epsft\psi(\eta) = \\
& = & \frac{2}{2 \pi \eps} \int_{\R^2} \dd \xi \, \dd \eta \, \epsft{g} \big( 2(\xi-\eta) \big) \xi^m
\e{\frac{\ii}{\eps} (2 \xi - \eta)x} \epsft \psi(\eta) = \\
& = & \frac{1}{2 \pi \eps} \int_{\R^2} \dd \xi \, \dd \eta \, \epsft{g} \left( \xi - 2 \eta \right) \left( \frac{\xi}{2} \right) ^m
\e{\frac{\ii}{\eps} (\xi - \eta)x} \epsft \psi(\eta).
\end{eqnarray*}
In the second line we changed variables $\tilde y = (x+y)/2$, and in the fourth $\tilde \xi = 2\xi$. We now apply the scaled Fourier transform to both sides of the above equations and use the formula
\be \label{epsft delta}
\frac{1}{\sqrt{2 \pi \eps}} \int \dd x \, \e{\frac{\ii}{\eps}(\xi - \eta - k) x} = \sqrt{2 \pi \eps}
\delta_0(\xi - \eta - k)
\ee
in order to obtain the result.
\end{proof}
\noindent {\bf Remark:} In position space, the  Weyl quantisation of $\kappa(p,q) = p^m g(q)$ is given by
\be \label{pos space weyl}
(\caW_\eps \kappa) \psi (x) = (-\ii \eps)^m \sum_{j=0}^m \binom{m}{j} 2^{-j} (\partial^j g)(x) (\partial^{m-j} \psi)(x),
\ee
which can be seen directly from (\ref{weyl quant}) using integration by parts with respect to $y$,
and (\ref{epsft delta}).\\[2mm]
We will now give the momentum space version of (\ref{psi_- sol}) in a fairly explicit form.
We write $\e{-s\frac{\ii}{\eps} \hat H}$ for the unitary propagator in the Fourier picture.
We will also write $\kappa_{n+1}^{0,-}$ in the form
\[
\kappa_{n+1}^{-}(p,q) = \sum_{m=0}^{\floor{(n+1)/4}} p^{n+1-4m} \kappa_{n+1}^{4m,-}(q),
\]
according to our results from Section \ref{s3}. Combining (\ref{psi_- sol}) and Lemma \ref{weyl p^m g} now immediately shows that 
to leading order in $\eps$, and for any $n \in \N$, the component $\psi_{-,n}$ of the solution to (\ref{2level})
in the $n$-th supeardiabatic representation is given by
\be \label{transitions 1}
\epsft \psi_{-,n}(k,t) = \eps^{n} \frac{- \ii}{\sqrt{2 \pi \eps}} \e{-t \frac{\ii}{\eps} \hat H_-} \int_{-\infty}^t
\Big( \e{s \frac{\ii}{\eps} \hat H_-} J_{n+1} \e{- s \frac{\ii}{\eps} \hat H_+} \epsft{\psi}_{+,0} \Big) (k) \, \dd s
\ee
Here, the operator $J_{n+1}$ is given by $J_{n+1} = \sum_{m=0}^{\floor{(n+1)/4}} J_n^{4 m}$, with
\be \label{transitions 2}
J_{n+1}^{4m} f(k) = \int_\R \dd \eta \, \epsft{\kappa_{n+1}^{4m,-}}(k - \eta)
\left(\frac{\eta + k}{2} \right)^{n+1-4m} f(\eta).
\ee

\subsection{Transitions for constant energy levels} \label{constant energy transitions}
Although general, \eqref{transitions 1} is not  very helpful when trying to calculate actual superadiabatic transition wave functions. To make further progress, we will make two simplifying assumptions. Firstly, we will treat the high momentum regime, as discussed above. Secondly, we will assume constant energy levels. This means that in \eqref{basic V} we will take  
\be \label{const eval}
 \rho(q) = \delta, \quad \text{and} \quad \theta'(q) = \frac{\ii\gamma}{q - \ii q_{\rm c}} - 
\frac{\ii\gamma}{q + \ii q_{\rm c}} + \theta_{\rm r}(q),
\ee
with $\theta_{\rm r}$ having no singular points of order greater or equal to one in the strip 
$\{ z \in \bbC: |\Im(z)| \leq q_{\rm c} \}$. Constant $\rho$ 
has the effect of trivializing at the same time the propagator in Fourier space,
\[
\e{s \hat H_{\pm}} \epsft\psi(k) = \e{- \frac{\ii s}{2 \eps}(k^2 \pm \delta)} \epsft\psi(k),
\]
and the transformation \eqref{nat} to the natural time scale, with $\tau(q) = 2 \delta q$. 
Then, \eqref{nat scale} becomes 
\[
\partial_\tau \tilde \theta(\tau(q)) = \frac{\ii\gamma}{2 \delta q - \ii \tau_{\rm c}} - 
\frac{\ii\gamma}{2 \delta q + \ii \tau_{\rm c}} + \theta_{\rm r}(2 \delta q),
\]
with $\tau_{\rm c} = 2 \delta q_{\rm c}$. 

Now \eqref{kappa formula} together with \eqref{h_n^0} yield
\[
\kappa_n^{0,-} \approx  2 \alpha(n) \delta (-\ii)^n (n-1)! \frac{1}{(2 \delta)^n}
\left( \frac{\ii \gamma}{(q - \ii q_{\rm c})^n} -
\frac{\ii \gamma}{(q + \ii q_{\rm c})^n} \right).
\]
Using the residual theorem, the scaled Fourier transformation in $q$ of $\kappa_n^{0,-}$ is given by
\[
\epsft{\kappa^{0,-}_n} (k) = \ii \delta \alpha(n) \frac{\gamma}{(2 \delta)^n} \sqrt{\frac{2\pi}{\eps}}
\left( \frac{k}{\eps} \right)^{n-1} \e{-\frac{|k|}{\eps} q_{\rm c}}.
\]
By \eqref{transitions 2}, we find
\[
J_{n+1}^0 f(k) = \frac{\ii \gamma}{4} \alpha(n+1) \sqrt{\frac{2 \pi}{\eps}} \eps^{-n} \int
\left(\frac{k^2 - \eta^2}{4 \delta}\right)^n (k+\eta) \e{-\frac{q_{\rm c}}{\eps} |k - \eta|} f(\eta) \, \dd \eta.
\]
We now insert this into \eqref{transitions 1}. For this it is useful to pick the origin of the time axis such that the transition occurs at $t=0$. 
The evolution for $\psi_{+}$ is, to leading order, just the free Born-Oppenheimer evolution in the upper energy band;
thus indeed we {\em do not} need to solve the full, coupled system to obtain the upper wave function at $t=0$ from the one given at
some negative $t$ and vice versa, which would defeat the purpose of what we are doing.
With the above conventions, and our calculations so far, \eqref{transitions 1} reads
\be
\begin{split}
\epsft{\psi_{-,n}}(k,t) = & \frac{\gamma \alpha(n+1)}{4 \eps} \e{- \frac{\ii}{\eps} t (k^2/2 - \delta)} \times \\
& \times \int_{-\infty}^t \dd s \int \dd \eta (\eta + k)
\left(\frac{k^2 - \eta^2}{4 \delta}\right)^n \e{- \frac{q_{\rm c}}{\eps} |k-\eta|}
\e{\frac{\ii s}{2 \eps}(k^2 - \eta^2 - 4 \delta)} \epsft{\psi_{+,0}}(\eta).
\end{split}
\label{psiMinusHat}
\ee
For the moment, we are interested in $\epsft{\psi_{-,n}}(k,t)$ for $t \gg 0$. The integral in (\ref{psiMinusHat}) converges
as $t \to \infty$, and in the limit can be calculated explicitly. To do this, consider the general integral
\[
\begin{split}
  & \int_{\R^2} \dd \eta \,  \dd s \, f(\eta) \e{\frac{\ii s}{2 \eps}(k^2 - \eta^2 - 4 \delta)}
  = \\ 
  & \!\! = \int_\R \dd s \Bigg(
  \underbrace{
  \int_0^\infty \dd \eta f(\eta) \e{\frac{\ii s}{2 \eps}(k^2 - \eta^2 - 4 \delta)}
  }_{=:I_+}
  +  \underbrace{
  \int_{-\infty}^0 \dd \eta f(\eta) \e{\frac{\ii s}{2 \eps}(k^2 - \eta^2 - 4 \delta)}
  }_{ =:I_-}
  \Bigg).
  \end{split}
\]
Setting $\tilde \eta = (k^2-\eta^2-4\delta)/(2\eps)$ gives $\eta = \pm (k^2-4\delta-2\tilde\eta \eps)^{1/2}$ and $\dd \eta = \mp \eps \dd \tilde \eta (k^2-4\delta-2\tilde\eta \eps)^{-1/2}$, and so
\[
  I_+ = \eps \int^{\frac{k^2-4\delta}{2\eps}}_{-\infty} \dd \tilde\eta \frac{ f\big( +\sqrt{k^2-4\delta-2\tilde\eta \eps} \big) }
  {\sqrt{k^2-4\delta-2\tilde\eta \eps}} \e{\ii s\tilde\eta}
  =\eps \int^\infty_{-\infty} \dd \tilde\eta \frac{ f\big(+ \sqrt{k^2-4\delta-2\tilde\eta \eps} \big) }
  {\sqrt{k^2-4\delta-2\tilde\eta \eps}} \e{\ii s\tilde\eta} \chi_J(\tilde\eta),
\]
where $\chi_J$ is the characteristic function on $J=(-\infty, \tfrac{k^2-4\delta}{2\eps}]$.  Similarly,
\[
  I_- =
  \eps \int^\infty_{-\infty} \dd \tilde\eta \frac{ f\big( -\sqrt{k^2-4\delta-2\tilde\eta \eps} \big) }
  {\sqrt{k^2-4\delta-2\tilde\eta \eps}} \e{\ii s\tilde\eta} \chi_J(\tilde\eta),
\]

We now use that $\int_{\R^2} \dd s \dd \tilde\eta g(\tilde\eta) \e{\ii s \tilde\eta} = \sqrt{2\pi} \int_{\R} \dd s \check{g}(s) = 2\pi g(0)$, where $\check{g}$ is the inverse Fourier transform of $g$, to give
\[
  \int_{\R^2} \dd \eta \dd s f(\eta) \e{\frac{\ii s}{2 \eps}(k^2 - \eta^2 - 4 \delta)} =
  \begin{cases}
  2\pi\eps \frac{f(+\sqrt{k^2-4\delta}) + f(-\sqrt{k^2-4\delta})}{\sqrt{k^2-4\delta}} & \text{if } k^2-4\delta>0 \\
  0  & \text{else}.
  \end{cases}
\]

Applying these calculations to  (\ref{psiMinusHat}) reveals that, for large positive $t$, we have
\begin{align}
  \epsft{\psi_{-,n}}(k,t) &= \frac{ \pi \gamma \alpha(n+1)}{2}  \e{- \frac{\ii}{\eps} t (k^2/2 - \delta)} \nonumber \\
  & \quad \times \Big[ \big(1 + \tfrac{k}{ \sqrt{k^2-4\delta} } \big) \e{- \frac{q_{\rm c}}{\eps} |k-\sqrt{k^2-4\delta}|} \epsft{\psi_{+,0}}(\sqrt{k^2-4\delta}) \label{the formula raw} \\
  & \quad \qquad + \big(-1 + \tfrac{k}{ \sqrt{k^2-4\delta} } \big) \e{- \frac{q_{\rm c}}{\eps} |k+\sqrt{k^2-4\delta}|} \epsft{\psi_{+,0}}(-\sqrt{k^2-4\delta}) \Big] \chi_{\{k^2>4\delta\}}. \nonumber
\end{align}
The two terms in the square bracket in (\ref{the formula raw}) are clearly connected to 
positive and negative incoming momenta, respectively. The second line will be negligible if either $k < 0$ or if  $\epsft{\psi_{+,0}}$ is concentrated on the negative half axis, while the third line will be negligible if $k>0$ or if $\epsft{\psi_{+,0}}$ is concentrated on the positive half axis. This shows the intuitively obvious fact that the transmitted wave packet will travel in the same direction as the incoming one. It also shows that we can replace $|k \pm \sqrt{k^2 - 4 \delta}|$ with 
$||k|-\sqrt{k^2 - 4 \delta}|$ without changing the leading order result. 
We further streamline \eqref{the formula raw} by replacing $\alpha(n+1)$ by its asymptotic value $\frac{\sin(\pi \gamma /2)}{\pi \gamma /2}$, and by introducing 
\[
v(k) = v(k,\delta)  ={\rm sgn(k)} \sqrt{k^2 - 4 \delta}.
\]
Finally, we note that asymptotically, all the superadiabatic subspaces agree, so that (\ref{the formula raw}) actually gives the asymptotic adiabatic transition. We thus conclude:\\[2mm]
After the transition, the wave function in the initially unoccupied adiabatic subspace is given by 
\begin{equation} \label{the formula}
\epsft{\psi_{-}}(k,t) =  {\rm sgn(k)} \sin \left(\frac{\pi \gamma}{2}\right)  \e{- \frac{\ii}{\eps} t (k^2/2 - \delta)}  
\e{- \frac{q_{\rm c}}{\eps} | k-v(k)|}  \chi_{\{k^2>4\delta\}}  \left(1 + \tfrac{k}{v(k)} \right)   \epsft{\psi_{+,0}}(v(k))\,.
\end{equation}
A few comments are in order.\\[2mm]
1) The occurrence of the indicator function
$\chi_{\{k^2>4\delta\}}$ can be interpreted in terms of energy conservation: Any part of the wave packet that makes the transition obtains, in addition to its kinetic energy, the potential energy difference between the electronic energy levels. Thus, there cannot be any kinetic energy $k^2/2$ smaller than $2 \delta$. Recall also that these transitions are radiation-less: instead of being radiated away from the molecule in the form of a photon, the energy is transferred into kinetic energy of the nuclei.\\[2mm] 
2) We can read off a momentum shift from (\ref{the formula raw}). We assume that $\epsft{\psi_{+,0}}$ is a semiclassical wave function, and write $|\epsft \psi_{+,0}(k)| = \e{-M(k)/\eps}$. Let us assume for convenience that the absolute minimum $k_\ast$
of $M(k)$ is on the positive real line. We find 
\be \label{exponent}
\ln |\epsft{\psi_{-,n}}(k)| \approx - \frac{1}{\eps} (q_{\rm c}(\sqrt{v^2 + 4 \delta} - v) + M(v)).
\ee
Purely by energy conservation, 
one would expect the transition wave packet to be maximal when $v = k_\ast$. However, since $v \mapsto \sqrt{v^2 + 4 \delta} - v$ is decreasing, the minimum of $\sqrt{v^2 + 4 \delta} - v - M(v)$ is shifted
to the right. One can quantify this shift when $M$ is given explicitly.\\[2mm] 
3) The last point also shows that the term $\chi_{\{k^2>4\delta\}}$ in \eqref{the formula raw} is of little consequence in practice. 
Since $k^2 > 4 \delta$ is equivalent to $v>0$, and since only a small region around its maximum matters
for the semiclassical wave function $\epsft{\psi_{-,n}}$, we can safely leave out the factor
$\chi_{\{k^2>4\delta\}}$ in \eqref{the formula raw} without changing the leading order result.\\[2mm]
4) \eqref{the formula raw} depends on $n$ only through the convergent prefactor $\alpha(n)$. In particular, we do not need to know the value of the optimal $n$ in order to obtain the correct leading order transitions. But in order to justify (\ref{the formula raw}) it is of course important to choose $n$ such that the error terms are smaller than the leading term. Moerover, the $n$-independence of the leading order term is a special feature of the constant eigenvalues, and cannot be expected to persist in more general models.\\[2mm]
5) Let us compute the transition rate from (\ref{the formula}), in the limit of large momentum and small momentum uncertainty. We choose 
\[
\epsft{\psi_{+,0}}(k) = \exp \left( - \frac{C}{2 \eps}(k - p_0)^2 \right).
\]
When $C$ is large, the minimum in \eqref{exponent} is taken very close to $v = p_0$, implying that 
$k_{\ast} \approx \sqrt{p_0^2 + 4 \delta}$. The value of the exponent at the maximum is then given by 
\[
P_{\rm{trans}}(p_0) = 
- \frac{q_{\rm c}}{\eps}(\sqrt{p_0^2 + 4 \delta} - p_0),
\]
which is the transition probability for momentum $p_0$. For large $p_0$, we have 
$\sqrt{p_0^2 + 4 \delta} - p_0 \approx 2 \delta / p_0$, so that the transition probability 
in this regime is given by 
\[
\exp\left(-\frac{2 q_{\rm c} \delta}{p_0 \eps}\right) = \exp \left(- \frac{\tau_{\rm c}}{p_0 \eps} \right).
\]
The latter is precisely the Landau-Zener transition probability for the parameters chosen in (\ref{const eval}), cf.\ e.g.\ 
\cite{BT06}, where one has to replace $\eps$ by $\eps p_0$ throughout.\\[2mm]
6) Another formula for the asymptotic shape of a non-adiabatic scattering wave function was given (and proved) by Hagedorn and Joye in \cite{HaJo05}. We do not give the formula here, as we would have to introduce too much additional notation, and refer instead to Theorem 5.1 of \cite{HaJo05}. But we comment on the differences between their result and ours. The first difference is that while 
(\ref{the formula}) is for constant $\rho$, the work of Hagedorn and Joye covers the Landau-Zener situation.
So, a direct comparison is not possible at present, but a version of \eqref{the formula} for Landau-Zener transitions is work in progress. The advantage of the result by Hagedorn and Joye it is that is rigorous, and that it covers the asymptotic region of arbitrarily small $\eps$. In contrast, by the arguments of 
Section \ref{s4}, \eqref{the formula} is probably not asymptotically exact, although it works excellently 
for all cases that we were able to test on a computer; see below. The great practical advantage of 
(\ref{the formula}) over the results of \cite{HaJo05} is that it contains no complex contour integrals, and that it does not rely on a second order approximation to the incoming wave function. Indeed, we will see below that \eqref{the formula} provides accurate results for incoming wave functions that lie outside of the scope of the theory in \cite{HaJo05}, and where the transmitted wave function is clearly not Gaussian.\\[2mm]
7) A further advantage of \eqref{the formula} is that it only uses local information: the parameters $q_{\rm c}$, $\delta$ and $\gamma$ are determined by the derivatives of the potential at the coupling point, and the incoming wave function is only needed at the crossing time. This immediately 
suggests an algorithm for computing non-adiabatic transitions, even for complicated, non-contant energy levels. As in surface hopping models, one evolves the wave packet on the initial adiabatic surface, 
until one detects a local extremum in the coupling functions $\theta'(q_0(t))$, where $q_0$ is the center of the wave packet at time $t$. One then computes the coupling functions $\kappa_{n+1}^{0,\pm}$ from the 
local shape of the potential, and uses them in (\ref{transitions 2}). The choice of the optimal $n$ is of importance, and we suggest a way to find it in Definition \ref{opt superad} below. In 
(\ref{transitions 1}), one can use the free propagator, since the transition region is small, and the 
energy surfaces are approximately parallel to first order in $q$, as their distance is minimal at a crossing point. Then in a similar way as above, one gets an explicit transition formula. 

\subsection{Numerics}
We now show that (\ref{the formula}) is in excellent agreement with highly precise numerical solutions of the full two-band Schr\"odinger equation. For solving the latter, we use standard methods, including a symmetric Strang splitting.  We denote the projection of the numerical solution onto the lower 
eigenspace by $\phi_-(q,t)$. We compare this with the result given by (\ref{the formula}), which we denote by $\psi_-$ as before. We compare our results in the Fourier representation, calculating 
the inverse $\eps$-Fourier transform of $\phi_-$ with a standard FFT. The final time $t$ is chosen so that $\|\phi_-(q,t)\|_2$ is constant under further time evolution in the exact calculation.

For our potential we choose 
\be \label{numerics pot}
   \theta(x)=\tfrac{c}{\alpha}\arctan\big(\tanh\big(\tfrac{\alpha x}{2}\big)\big), \qquad \rho=\delta=1/2
\ee
in \eqref{basic V}. 
This gives $\theta'(q)=c/2/\cosh(\alpha q)$, with singularities closest to the real line at $\pm \ii q_c= \pm \ii \tfrac{\pi}{2\alpha}$, the residue at which gives $\gamma=-\tfrac{c}{2\alpha}$.
In particular, we take $c=-\pi/3$, $\alpha=\pi/2$, giving $q_c=1$ and $\gamma=1/3$.
The choice of $\theta$ over the case $\theta(q)=\arctan(q/\delta)$ (which would make $\theta_{\rm r} = 0$ in \eqref{nat scale}) is made to increase the rate at which the potential becomes flat. This reduces the necessary computation time for the numerical solution. 
If anything, we would expect the asymptotic results for $\theta(q)=\arctan(q/\delta)$ to be better.

Our first choice for the wavefunction in the intially occupied band is a Gaussian wave packet.
At time $t=0$, it is given by
\be \label{wavefunction 1}
   \epsft{\psi_{+,0}}(p)=(2\pi\eps)^{-1/4}\exp(-\tfrac{(p-p_0)^2}{4 \eps}).
\ee
As the crossing point is at $x=0$ by our choice of potential, $\psi_{+,0}$ is sitting right at the middle of the crossing region.  
Since the eigenvalues are constant, it may be evolved backwards to $t_0<0$ exactly on the upper level:
\be \label{initial cond}
   \epsft{\psi_{+}}(p,t_0)=(2\pi\eps)^{-1/4}\exp(\tfrac{(p-p_0)^2}{4 \eps})
   \exp(-\tfrac{i}{\eps} t_0 (p^2/2+\delta)).
\ee
We use \eqref{initial cond} along with $\epsft{\psi_{-}}(p,t_0)=0$ 
as initial conditions for our numeric solution, and take $t_0$ sufficiently negative in 
order for the wave packet to be well away from the crossing region.

\begin{figure}[h!tbp]
 \begin{center}
    \begin{tabular}{cc}
     \scalebox{0.6}{
       \includegraphics{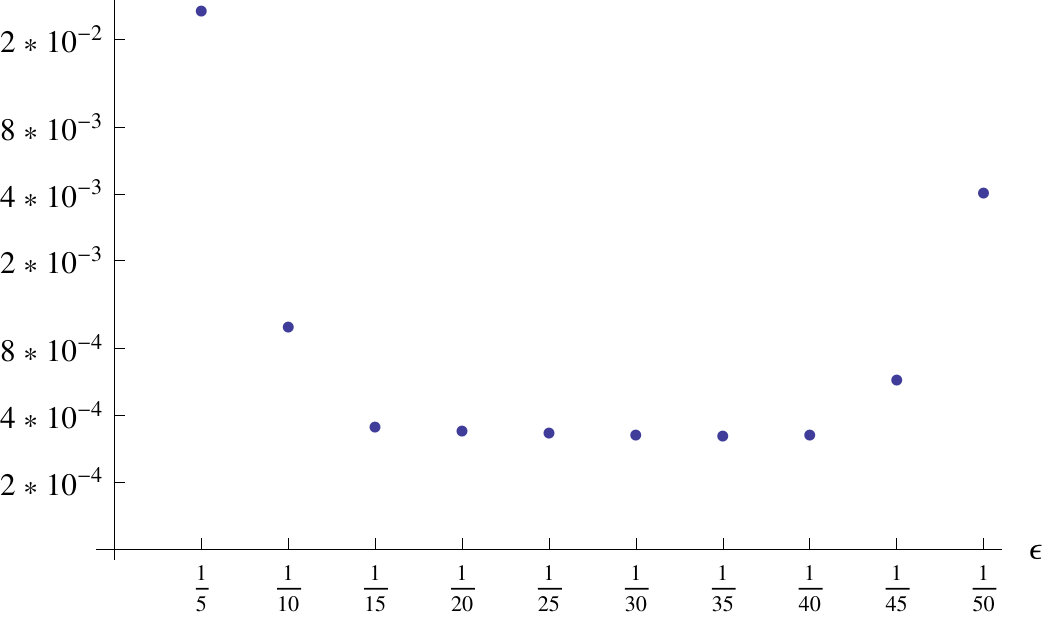}
       }
&
       \scalebox{0.6}{
       \includegraphics{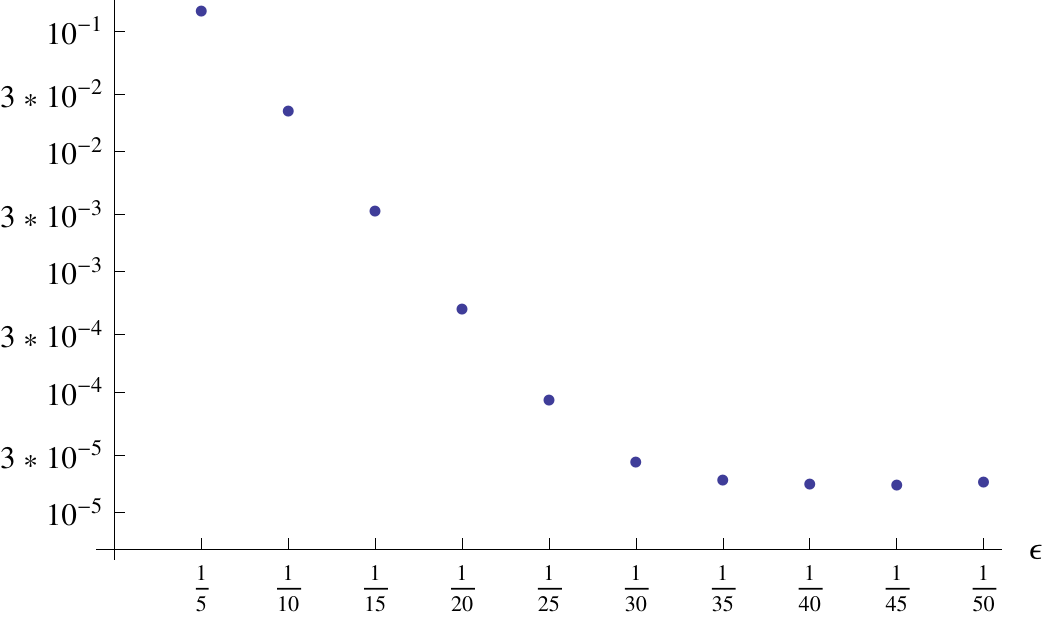}
       }
       \\
       a) $p_0=2$ & b) $p_0=5$
       \end{tabular}
\end{center}
\caption{ Relative errors between numerical results and \eqref{the formula}, for a Gaussian wave packet, with different values of $\eps$ and $p_0$; on a logarithmic scale. }
\label{fig:relativeErrors}
\end{figure}

Our numerical studies show that (\ref{the formula}) is in excellent agreement with the numerical solution for a wide range of $\eps$, ranging from as large as $1/10$ to $1/50$, at which point a further 
reduction in $\eps$ makes the numerically exact calculations very time-consuming. In Figure 
\ref{fig:relativeErrors}, we show the relative error in the $L^2$ norms between the numerical calculation and (\ref{the formula}), i.e.\  $\|\psi_- -\phi_-\|_2/\|\phi_-\|_2$. In each case, the step size in the numerical calculation was reduced until the difference between two subsequent numerical solutions was 
at least one order of magnitude smaller than the error to the solution obtained from 
(\ref{the formula}). 

Figure \ref{fig:relativeErrors} a) also shows that, as predicted in Section \ref{s4}, 
(\ref{the formula}) is not asymptotically correct for fixed $p$: After an initial increase 
in accuracy due to the decrease in $\eps$, the relative error becomes larger again as $\eps$
decreases further. That this does not affect the practical usefulness of (\ref{the formula}) 
becomes clear when we consider orders of magnitude: for $p_0=2$ and $\eps = 1/50$, we have 
$\| \psi_- \|_2 \approx 6 * 10^{-10}$, which is much smaller than is useful in practice, while 
the relative error is still excellent at about $4 * 10^{-3}$, albeit deteriorating. 
On the other hand, for $p_0=2$ and 
$\eps=1/10$, we have $\| \psi_- \|_2 \approx 0.014$, which is certainly of a physically measurable
size, and the relative error is still of the order $10^{-3}$. Finally, for $\eps = 1/5$ and 
$p_0 = 2$, we have $\| \psi_- \|_2 \approx 0.11$, and the relative error is still below $0.03$.
We see that, as is often the case in asymptotic formulae, the actual error is much better than
could be expected from the a priori error estimates: This is particularly obvious in Figure 
\ref{fig:relativeErrors} b), where the relative error initially decreases 
like $\e{-c/\eps}$ before saturating, while theory only predicts a $\sqrt{\eps}$ decrease. 
Orders of magnitude in this case range from $\| \psi_- \|_2 \approx 0.138$ for $\eps = 1/10$, with relative error below $0.025$, to $\| \psi_- \|_2 \approx 6 * 10^{-5}$ for $\eps = 1/50$, with relative error below $2 * 10^{-5}$.

\begin{wrapfigure}{r}{90mm}
 \begin{center}
 \scalebox{0.5}{
       \includegraphics{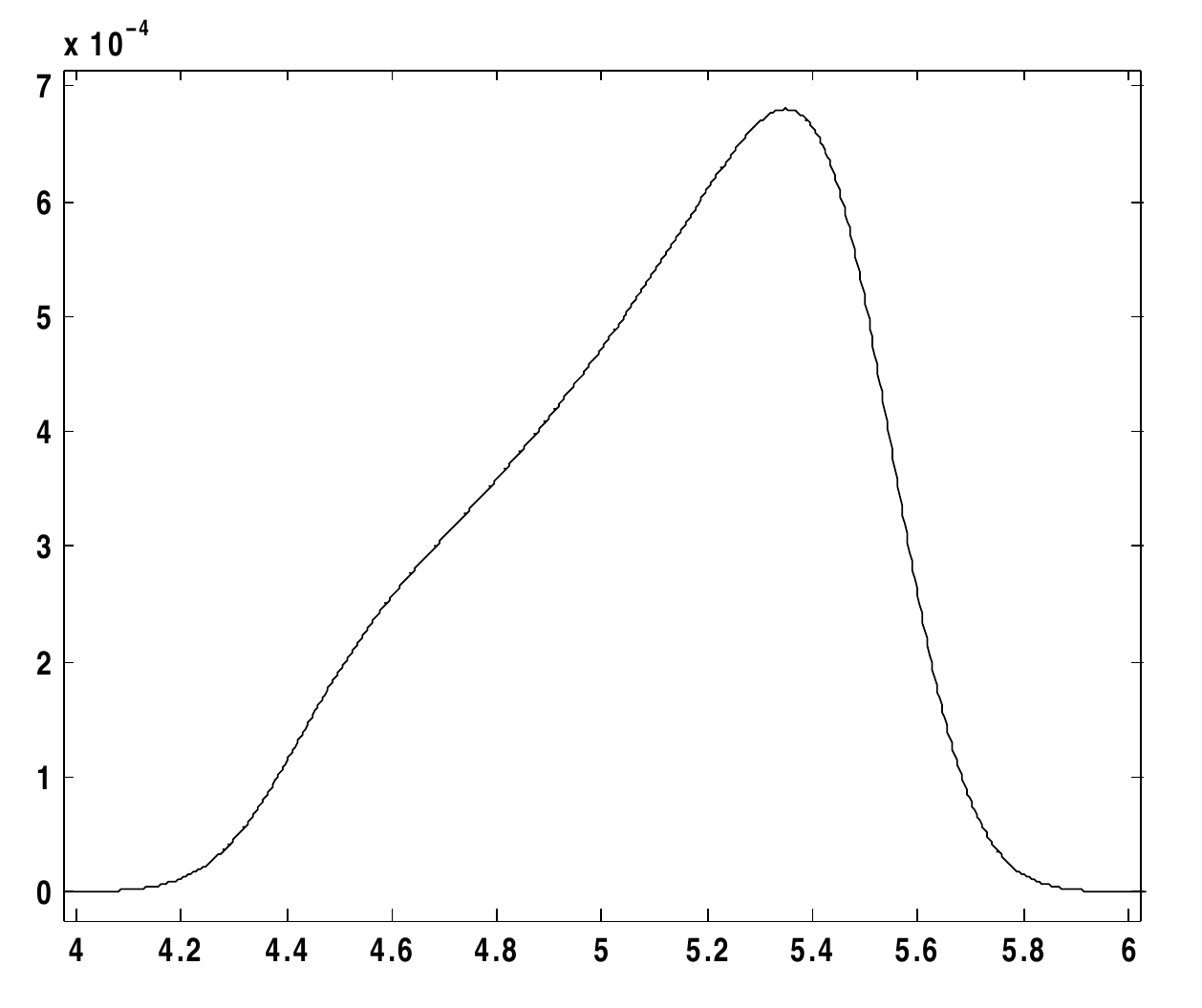}
       }
 \end{center}
\caption{Transmitted wave function for a non-Gaussian incoming wave packet.}
\label{fig:non-gauss}
\end{wrapfigure}
Our second numerical example is an incoming wave packet that is strongly non-Gaussian. We choose 
$
\epsft{\psi_{+}}(p) = \frac{1}{Z} \e{-(p-p_0)^6  / (4\eps)},
$
where $Z$ normalises the $L^2$ norm of $\epsft{\psi_{+}}$. The potential $V$ is precisely as in the 
other example, and we choose $p_0=5$ and $\eps = 1/50$. 
Figure \ref{fig:non-gauss} shows the absolute value of the transmitted wave packet in the scattering region, in Fourier representation. The relative error in the $L^2$-norm in this case is around $5 * 10^{-5}$, similar to the one in the Gaussian case. In particular, the pointwise relative error
is extremely small in the region where $\epsft{\psi_{-}}$ is concentrated; thus, while Figure 
\ref{fig:non-gauss} shows the result of applying \eqref{the formula}, the plot showing the  
numerical calculation would be indistinguishable from the one given. 
Note also that the momentum shift relative to the energy conservation value 
$\sqrt{p_0^2 + \delta^2} \approx 5.02$ is clearly visible. Obviously, the transmitted wave packet is 
strongly non-Gaussian, with a surplus of high momentum components having made the transition.

\subsection{Optimal superadiabatic transition histories}

From the time-adiabatic theory we know that in the optimal superadiabatic representation, 
the transmitted part of the wave function builds up monotonically,
and has the shape of an error function as a function of time. We are now going to show that a similar 
property holds for the Born-Oppenheimer transitions, after suitable modifications. 

Our starting point is Theorem \eqref{transitions 1}, which gives a formula for transitions in the $n$-th superadiabatic representation. Our first task is to find a precise meaning to optimality of a superadiabatic representation.
One natural idea would be to require that in the optimal superadiabatic representation the map
$t \mapsto \int |\psi_-(x,t)|^2 \dd x$ increases monotonously during the transition. But given the
complexity of formulas \eqref{transitions 1} or \eqref{psiMinusHat}, this condition
is rather difficult to check. We will instead choose another condition which, as we will argue, should be equivalent.
The basic idea is that through (\ref{transitions 1}), $\epsft \psi_-(k,t)$ is given by an integral where the
integrand is expected to be both highly oscillatory and sharply peaked in $s$ and $k$.
Thus the main contribution occurs  at points where the integrand has either a stationary phase, or a  maximal absolute value. The locations of both of these depend on $n$, and if we want to have any chance of seeing a 'nice' transition history, we have to chose (if possible) $n$ such that these locations coincide. Thus we define

\begin{definition} \label{opt superad}
Let $u = \e{s \frac{\ii}{\eps} \hat H_-} J_{n+1} \e{- s \frac{\ii}{\eps} \hat H_+} \epsft{\psi}_{+,0}$, as given in
(\ref{transitions 1}). We write $u(s,k,n) = \exp(\frac{1}{\eps}(- X(s,k,n) + \ii Y(s,k,n)))$.
Let $(s_\ast(n), k_\ast(n))$ be the location of the minimum of $X$ on the real line for given $n$.
We say that $n$ is an optimal superadiabatic representation if
$\partial_s Y (s_\ast(n), k_\ast(n)) \approx 0$.
\end{definition}

The $\approx$ sign in the above definition means that in principle, $n$ has to be integer, and thus equality might not
hold anywhere. On the one hand, this problem should be less severe when $\eps$ is extremely small, and on the other hand,
usually $J_{n+1}$ is given by an explicit formula, in which case we can interpolate and take $n$ to be real. In that case,
we will usually be able to fulfil $\partial_s Y (s_\ast(n), k_\ast(n)) = 0$ exactly.

That the above definition indeed gives 'optimal' transition histories can be seen by the following simple calculation.
Let us write
$X_{kk} = \partial_k^2 X(k_\ast,s_\ast)$ etc., and $\tilde k = k-k_\ast$, $\tilde s = s-s_\ast$. 
Expanding the exponent of \eqref{transitions 1} around $k_\ast$ and $s_\ast$ then gives
\[
\begin{split}
\e{\frac{\ii}{\eps} t H_-} \psi_{-,n}^\eps(k,t) \approx &  \frac{-\ii \eps^n}{\sqrt{2 \pi \eps}} \e{\frac{1}{\eps}
(- X(k_\ast,s_\ast) + \ii Y(k_\ast,s_\ast))  } \e{\frac{1}{2 \eps} (-X_{kk} + \ii Y_{kk}) \tilde k^2} \times
\\
& \times \int_0^t \e{\frac{1}{2 \eps} ( (-X_{ss} + \ii Y_{ss}) \tilde s^2 + 2 (-X_{ks} + \ii Y_{ks}) \tilde k \tilde s)} \, \dd s.
\end{split}
\]
Thus for $\tilde k = 0$, i.e.\ at the maximum of the transmitted wave function, we have
\[
\e{\frac{\ii}{\eps} t H_-} \psi_{-,n}^\eps(k_\ast,t) \approx  \frac{-\ii \eps^n \e{\frac{1}{\eps}
(- X(k_\ast,s_\ast) + \ii Y(k_\ast,s_\ast))  }}{2 \sqrt{X_{ss} - \ii Y_{ss}}}  \left(1 + \mathrm{erf}\left( (t - s_\ast) \sqrt{\frac{X_{ss} - \ii Y_{ss}}{2 \eps}} \right) \right).
\]
We see that the transmitted wave function, when adjusted for the propagation in the lower band, has the shape of
an error function at its maximum $k_\ast$. The only unusual feature is that this error function is actually evaluated
along a 'diagonal' in the complex plane, rather than on the real line. Nevertheless, the resulting shape will be close to monotone
unless $Y_{ss}$ is much larger than $X_{ss}$, in which case there are some oscillations around $t=0$. We do not know whether this case is likely to happen in practice. For $\tilde k \neq 0$, the error function behaviour deteriorates, but since
$ \psi_{-,n}^\eps$ is a semiclassical wave function, we are only interested in $\tilde k \sim \sqrt{\eps}$, in which
case the behaviour is still similar to the one at the maximum.

We now want to show that optimal superadiabatic representations exist in particular cases. As an example, we
pick again the situation of constant eigenvalues. We rewrite the integral in \eqref{psiMinusHat} as
\be \label{I}
I = \int_{-\infty}^t \dd s \int \dd \eta \exp \left( \frac{1}{\eps} (-M(k,\eta) + \ii \phi(\eta) + \frac{\ii}{2} s (k^2 - \eta^2 - 4 \delta)) \right).
\ee
Here, $\e{-M/\eps}$ is the combined modulus of the transition kernel originating from $J_{n+1}^0$ and the wave function, and
$\phi$ is the phase of the wave function, hence depending only on $\eta$.
We can treat more general forms of the coupling function $\kappa_n^{-}(p,q)$ than the one leading to \eqref{psiMinusHat},
including the full coupling function with all powers of $p$ included. The only requirement is that, to leading order,
$\kappa$ should be symmetric, so that its Fourier transform is real. This is generically true:  high derivatives of the pair of first order poles in the complex plane determines the shape of 
$\kappa$, and these are either symmetric or antisymmetric, giving either purely real or 
purely imaginary Fourier transforms. 

Given \eqref{I}, we now let $\eta_\ast = \eta_\ast(n)$ and
$k_\ast = k_\ast(n)$ be the place where $M(k,\eta)$ is minimal. We expand $M$ to second order around $(k_\ast,\eta_\ast)$.
In order for the phase to be also stationary, we need $\partial_\eta \phi(\eta_\ast) = s \eta_\ast$, which is the equation
determining the transition time $s_\ast$. We should keep in mind that any further explicit calculations will only make sense
for $s$ close to $s_\ast$. Now, we expand the phase to second order around $\eta_{\ast}$ as well, and compute the resulting
Gaussian integral in $\eta$. The result is
\be \label{time development}
\begin{split}
I \approx & \e{-\tfrac{1}{\eps}(M(\eta_\ast,k_\ast) - \ii \phi(\eta_\ast) + \frac{1}{2} M_{kk} \tilde k^2)} \times \\
& \times \int_{-\infty}^t \sqrt{\frac{2 \pi \eps}{M_{\eta\eta} - \ii (\phi_{\eta\eta} - s)}}
\e{\frac{1}{2\eps}\left( \ii s (\tilde k^2 + 2 k_\ast \tilde k + k_\ast^2 - \eta_\ast^2 - 4 \delta) + \frac{-(\eta_\ast (s-s_\ast) - \ii M_{k \eta} \tilde k)^2}{M_{\eta \eta}
- \ii (\phi_{\eta \eta} - s)}\right)} \, \dd s.
\end{split}
\ee
Above, we use again the notation $\tilde k = k - k_\ast$, $M_{kk} = \partial_k^2 M(k_\ast,\eta_\ast)$ etc. As before,
we concentrate on the case $\tilde k = 0$. Let us assume that we can find $n$ such that
$k_\ast^2 - \eta_\ast^2 - 4 \delta = 0$. It is then easy to check that for the remaining integrand, both real and imaginary
part of the exponent are stationary at $s = s_\ast$, and that $s_\ast$ is a maximum of the real part. Thus for constant eigenvalues,
we can find an optimal superadiabatic representation if we can solve the equations
\be \label{equations}
\partial_k M(k,\eta) = 0, \quad \partial_\eta M(k,\eta) = 0, \quad k^2 - \eta^2 = 4 \delta
\ee
simultaneously, and if the resulting pair $(k,\eta)$ is a minimum of $M$. Note that as the above equations also depend on $n$,
we are solving a system of three equations with three free variables, which means that we can hope for a solution. Note also that the third equation is connected to energy conservation: $\eta$ is the incoming, $k$ the outgoing momentum and $\delta$ is the gap between the energy levels.

We specialise further in order to solve \eqref{equations}. We choose the upper band wave function to be a Gaussian wave packet with momentum $p_0$, so that
\be \label{upper initial sigma}
\epsft{\psi_{+,0}}(\eta) = \frac{1}{(\sigma^2 \pi  \eps)^{1/4}}\exp \left(- \frac{1}{2 \sigma^2 \eps} (\eta - p_0)^2 \right).
\ee
Choosing $\epsft{\psi_{+,0}}(\eta)$ real-valued amounts to choosing the packet to be at $x=0$ at time $t=0$;
putting the avoided crossing at $x=0$ in addition ensures that the transition occurs at time $s_\ast = 0$.
Now, using the integrand in \eqref{psiMinusHat}, we get, to leading order,
\[
M(k,\eta) = - n \eps ( \ln (k^2 - \eta^2) - \ln (4 \delta)) + q_{\rm c} |k - \eta|  + \frac{(\eta - p_0)^2}{2 \sigma^2}.
\]
By the third equation in (\ref{equations}), we know $k>\eta$, which
removes the absolute value above. Taking derivatives,
\[
0 = 2 n \eps \frac{\eta}{k^2 - \eta^2} - q_{\rm c} + \frac{\eta - p_0}{\sigma^2}, \quad  0 = - 2 n \eps \frac{k}{k^2 - \eta^2} + q_{\rm c},
\]
which together with the equation $k^2 - \eta^2 = 4 \delta$ lead to
\be \label{nonlin eq}
k = \sqrt{\eta^2 + 4 \delta}, \qquad
\eta = k \left(1 - \frac{\eta - p_0}{\sigma^2 q_{\rm c}} \right), \qquad
n  = \frac{2 \delta q_{\rm c}}{\eps k}.
\ee
The first two equations are independent of $n$ and $\eps$, and determine $\eta_\ast, k_\ast$. This is a special feature of the constant eigenvalue situation, and should not be expected in general. The third equation determines the
optimal superadiabatic representation. Note that $n$ is connected to the optimal superadiabatic $n$ for
the time-adiabatic situation: there, we have $n_{\mathrm{ta}} = 2 \delta q_{\rm c} / \eps$, with the 'momentum'
(i.e. speed on the time axis) normalized to one. But the tricky point that can't be easily guessed is which value of $k$ to pick in the formula for $n_\ast$; the naive guess of using the incoming momentum $p_0$ would be totally wrong.
The more sophisticated guess of using the mean momentum at the
crossing point, $(\eta_\ast + k_\ast)/2$ would be closer for small $\delta$, since the true value fulfils
$k_{\ast} = \frac{1}{2} (k_\ast + \sqrt{\eta^2 + 4 \delta})$; but it would still be far off for finite $\delta$,
which are of main interest here.

\begin{figure}[h!tbp]
 \begin{center}
    \begin{tabular}{ccc}
     \scalebox{0.25}{
       \includegraphics{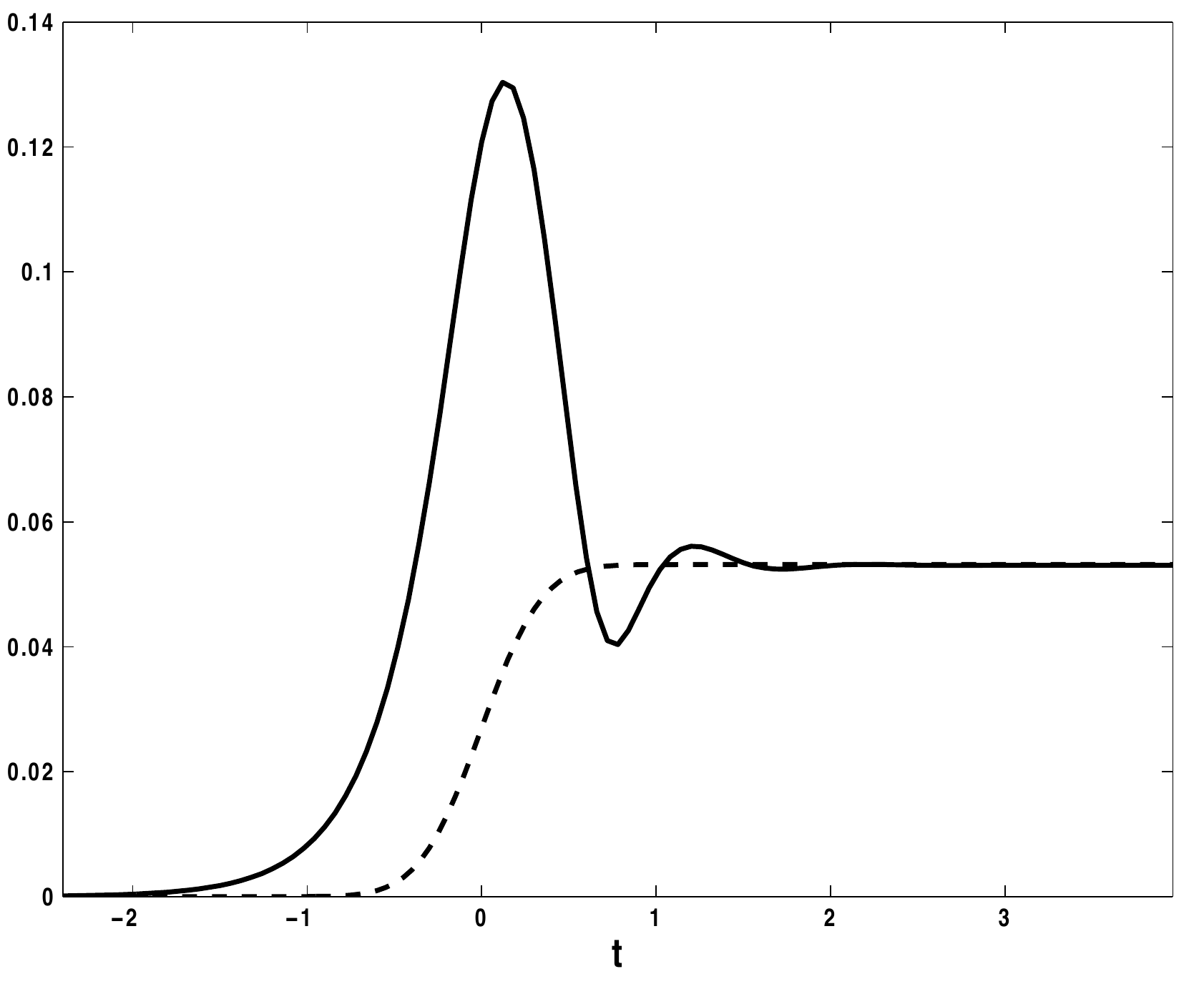}
       }
&
       \scalebox{0.25}{
       \includegraphics{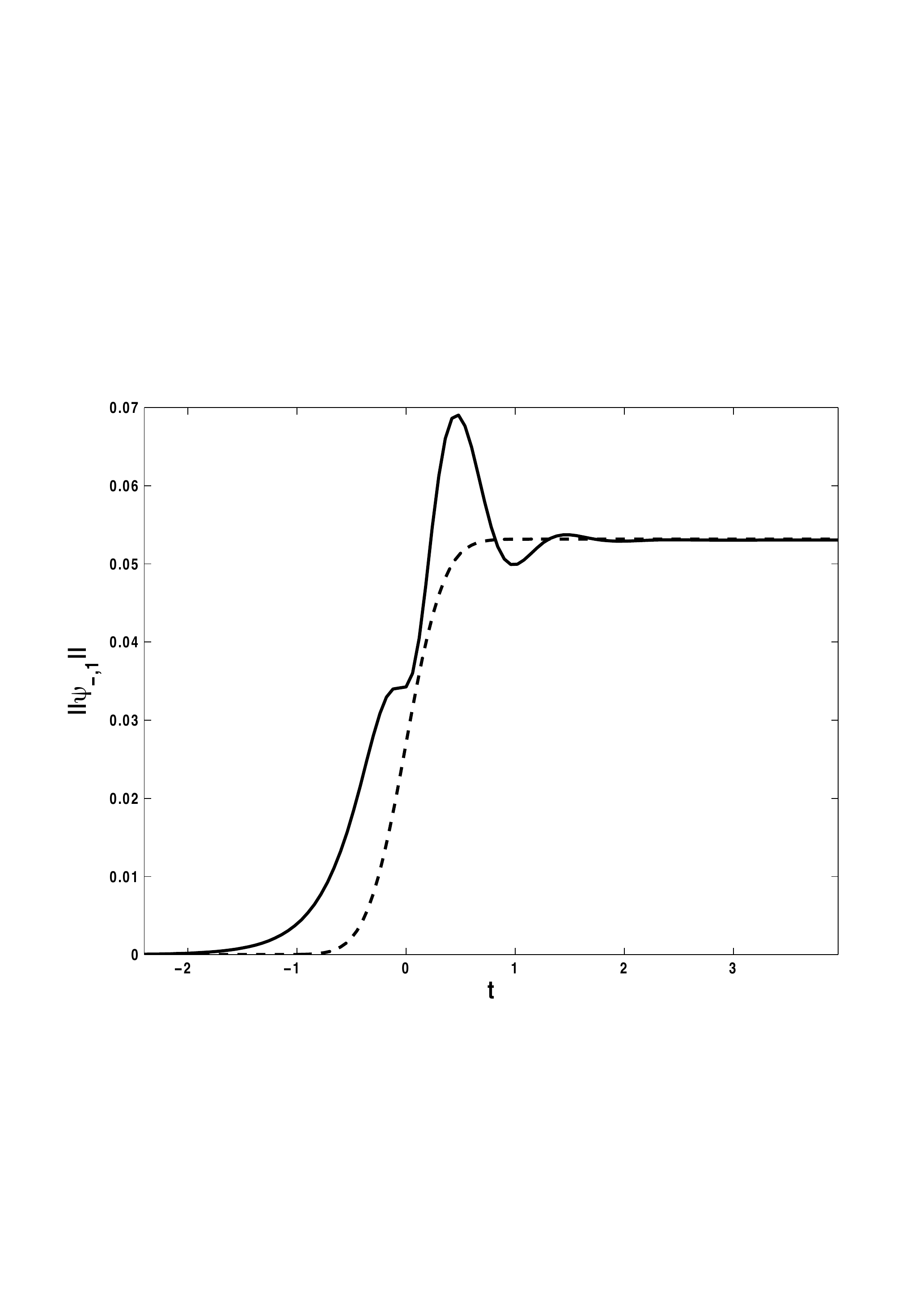}
       }
&
       \scalebox{0.25}{
       \includegraphics{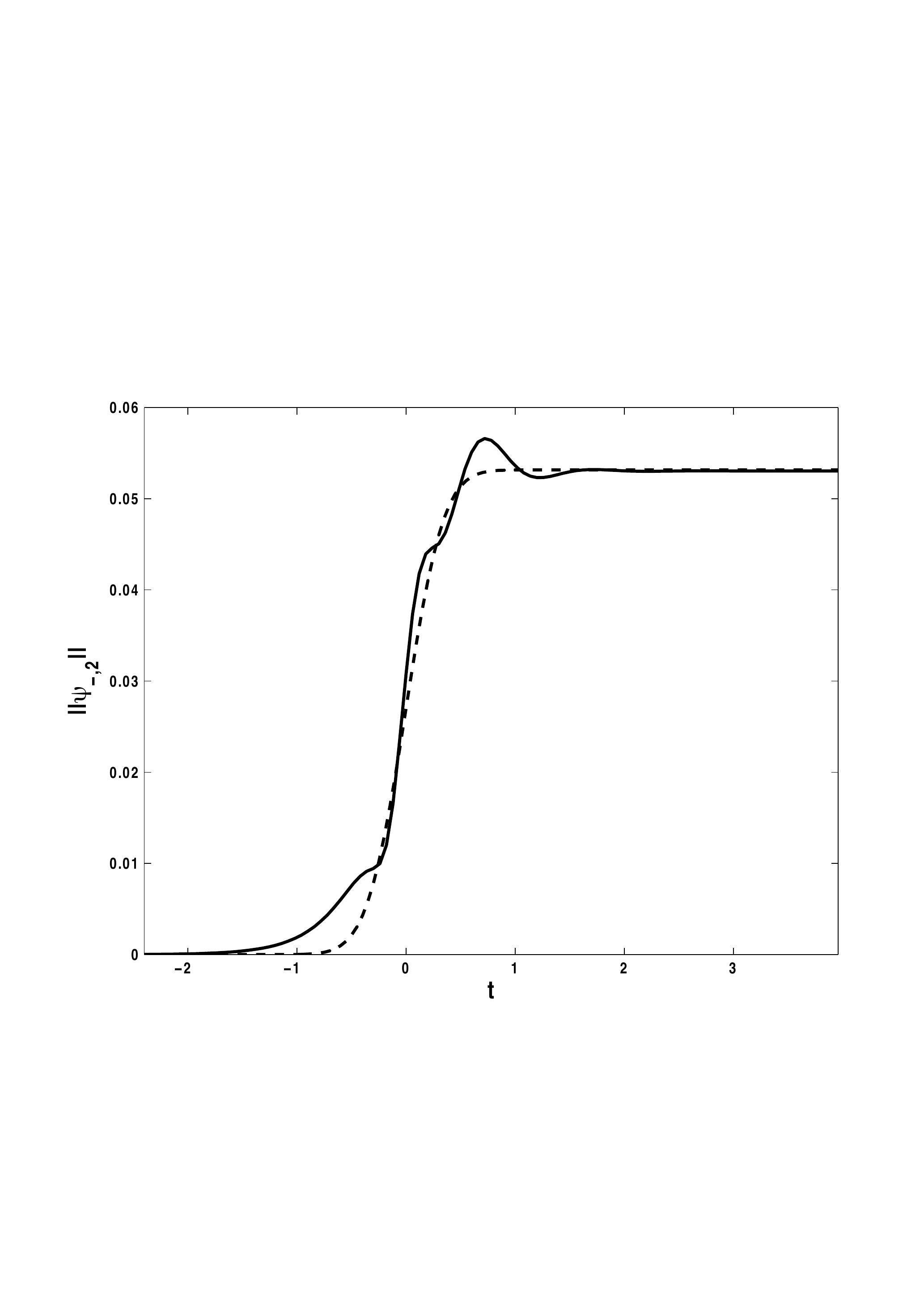}
       }
       \\[-2mm]
       \scriptsize $\|\psi_{-,0}(.,t)\|_2$ & \scriptsize $\|\psi_{-,1}(.,t)\|_2$ & \scriptsize $\|\psi_{-,2}(.,t)\|_2$
       \\[2mm]
       
       \scalebox{0.25}{
       \includegraphics{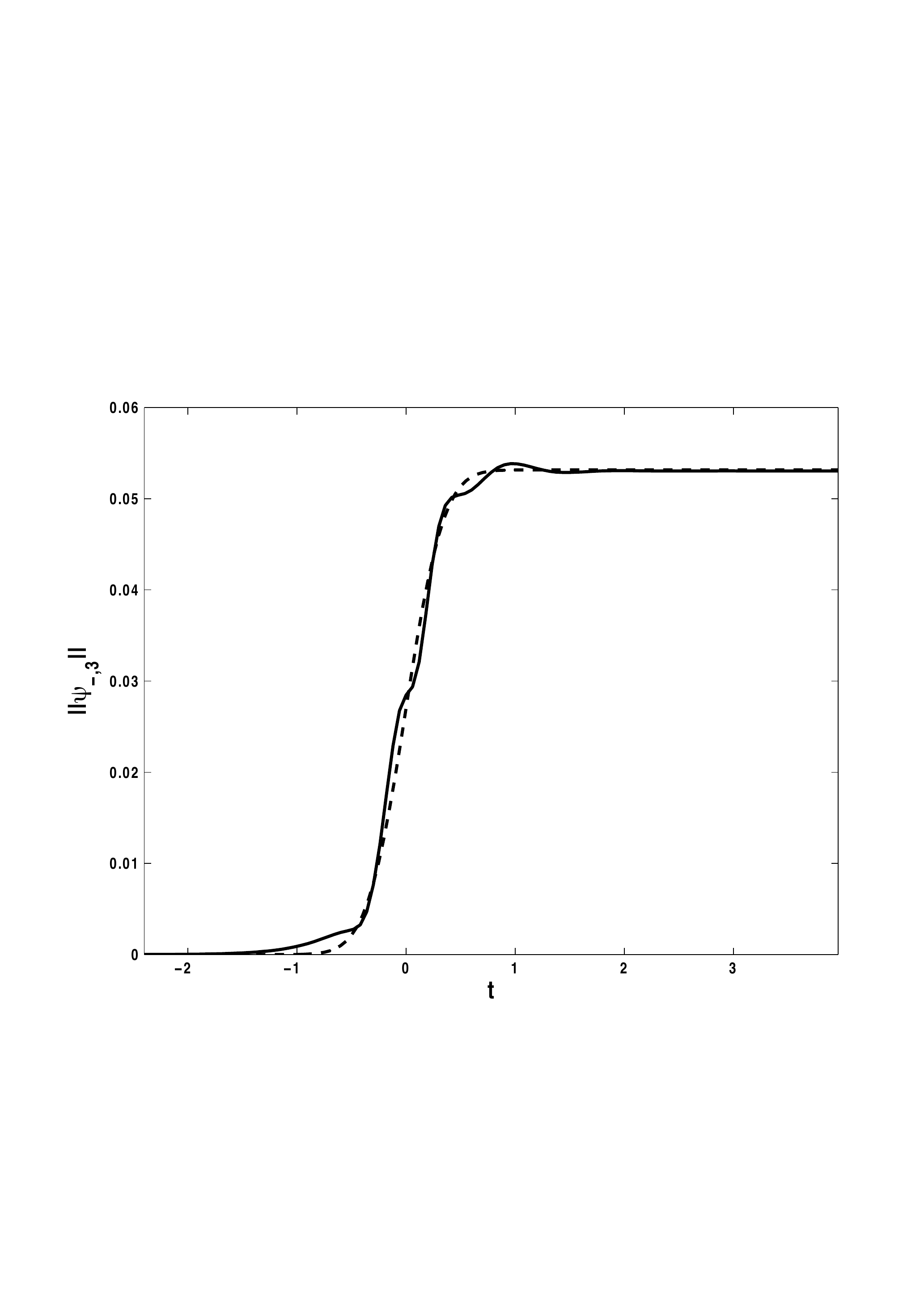}
       }
&
       \scalebox{0.25}{
       \includegraphics{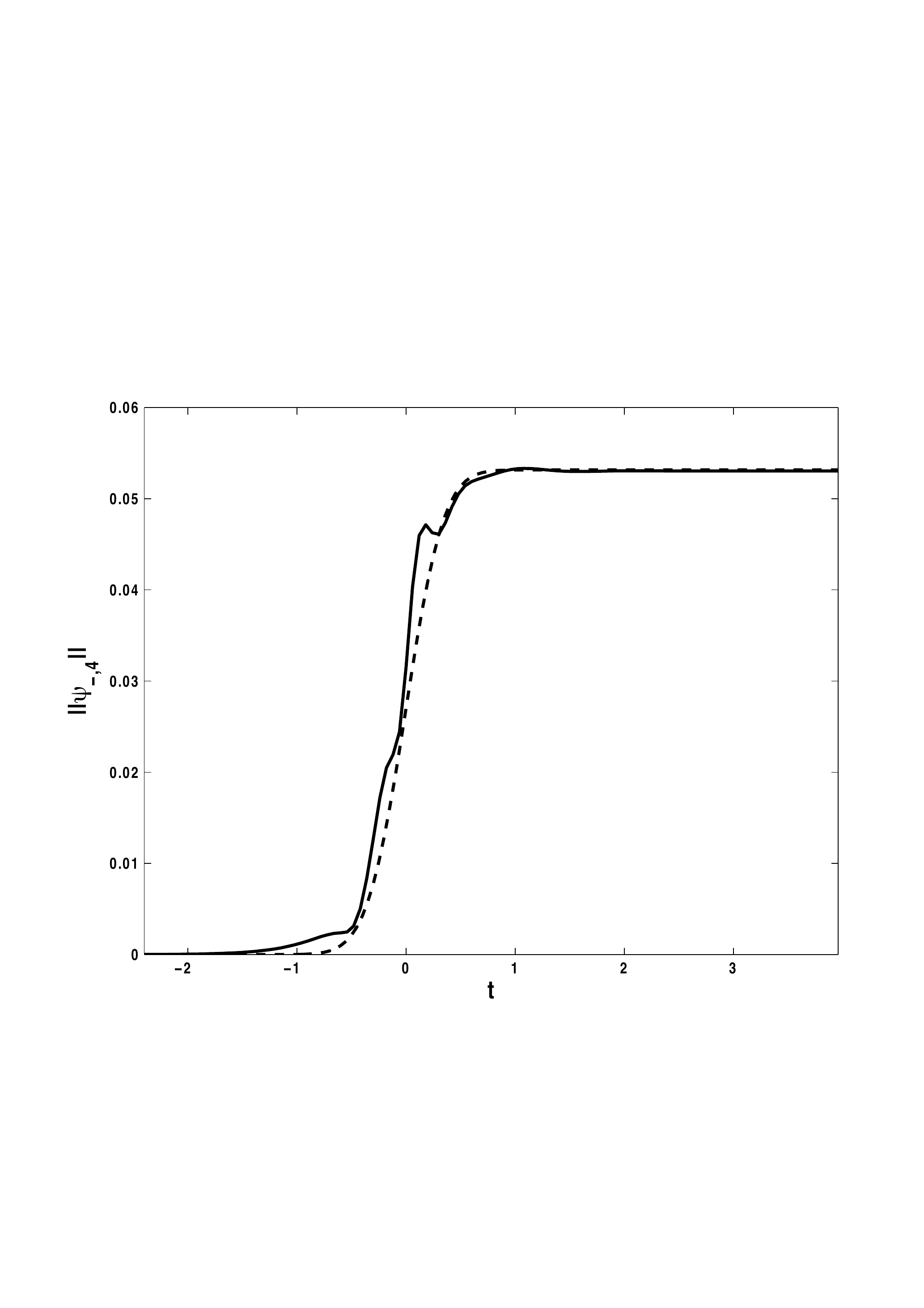}
       }
&
       \scalebox{0.25}{
       \includegraphics{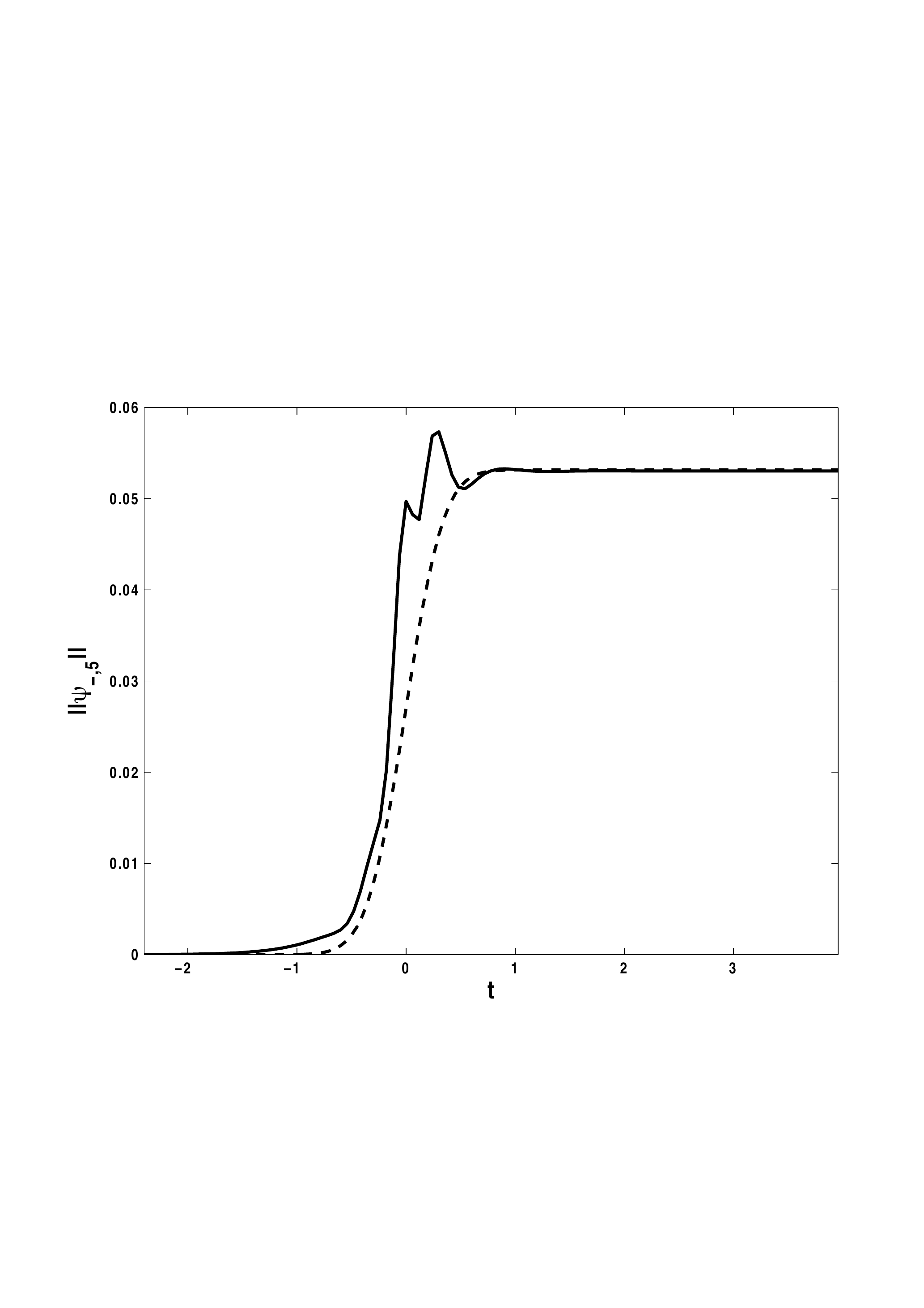}
       }
     \\[-2mm]
       \scriptsize $\|\psi_{-,3}(.,t)\|_2$ & \scriptsize $\|\psi_{-,4}(.,t)\|_2$ & 
       \scriptsize $\|\psi_{-,5}(.,t)\|_2$
       \end{tabular}
\end{center}
\caption{ Time development of the norm of the transmitted wave function, from the adiabatic up to the fifth superadiabatic representation. The dashed line is the theoretical prediction of the optimal superadiabatic transition history. }
\label{fig:histories}
\end{figure}

We close by comparing the effective formula \eqref{time development} 
to numerical results for the transition histories in various superadiabtic 
representations. In order to obtain the optimal superadiabatic representation at 
both low and close to integer values of $n$, we have to choose the parameters somewhat carefully.  
Our choices are 
the potential (\ref{numerics pot}) with $c = -\pi/3$, $\alpha = 2 \pi/5$ and $\delta = 3/32$. 
This gives $q_{\rm c} = 5/4$ and $\gamma = 5/12$. 
We take $\eps = 0.02923$, and in the incoming wave function \eqref{upper initial sigma},
we take $\sigma^2 = 2$ and $p_0 = 2.5$. Solving \eqref{nonlin eq} then yields 
$\eta_\ast \approx  2.57$, $k_\ast \approx 2.64$, and $n_\ast \approx 3.04$. Thus the optimal
superadiabatic representation should be the third, and this is clearly confirmed by numerical 
simulations. 
Figure \ref{fig:histories} shows the $L^2$ norm 
of the transmitted wave function (calculated by a numerical solution of the Schr\"odinger equation), 
in the adiabatic and all superadiabatic representations up 
to $n=5$, as a function of time $t$. The dashed line is the prediction of formula 
\eqref{time development} with $\eta_\ast$ and $k_\ast$ from above inserted. We see that 
indeed the optimal superadiabatic representation is at $n=3$, while below $n=1$ and above $n=4$
oscillations grow. The reader should note the similarity with the plots shown in \cite{LiBe91}. This is
another confirmation that the time-adiabatic approximation is appropriate to qualitatively understand
the mechanism of non-adiabatic transitions, as far as the population of the lower level is 
concerned. On the other hand, in order to obtain quantitatively correct results, or more detailed information about the transmitted wave packet (e.g.\ momentum spread of phase shift), the full quantum mechanical treatment, as given in the present work, is indispensable. 

\medskip
{\footnotesize
{\bf Acknowledgment:} V.B.\ is supported by the EPSRC fellowship EP/D07181X/1.
}

\end{document}